\newcommand{\qedhere}{\qed}
\let\merge\undefined
\newcommand*{\eg}{e.g.,\@\xspace}
\newcommand*{\etal}{~\emph{et~al}.\@\xspace}
\newcommand*{\iest}{i.e.,\@\xspace}
\newcommand{\wrt}{w.r.t.\@\xspace}
\newcommand{\rref}[2][]{\prettyref{#2}}
\newcommand\footnoteref[1]{\protected@xdef\@thefnmark{\ref*{#1}}\@footnotemark}
\spnewtheorem*{notation}{Notation}{\itshape}{\rmfamily}
\let\oldsubsubsection\subsubsection
\renewcommand*{\subsubsection}[1]{\oldsubsubsection{#1.}}
\newcommand{\ctrlNotify}{\progtt{noti}}
\newcommand{\ctrlUpdate}{\progtt{updt}}
\newcommand{\ctrlVelocity}{\progtt{velo}}
\newcommand{\ctrlDistance}{\progtt{dist}}
\newcommand{\tarvelo}{v_{tar}}
\newcommand{\maxvelo}{V}
\newcommand{\mespos}{m}
\newcommand{\waitvar}{w}
\newcommand{\safevelo}[1]{\nicefrac{#1}{\epsilon}}
\newcommand{\maxdist}{\epsilon \maxvelo}
\newcommand{\parameters}[1]{(#1)}
\newcommand{\invariant}[1]{\textnormal{\uppercase{#1}}}
\newcommand{\fAssume}{\A}
\newcommand{\lCommit}{\C}
\tikzstyle{disc}=[draw, shape=circle, inner sep=.05cm]
\tikzstyle{ghost}=[shape=circle, inner sep=0.05cm]
\tikzstyle{continuous}=[decorate, decoration={coil, aspect=0, amplitude=.2mm, segment length=4pt}]
\tikzstyle{badtrace}=[dashed, BrickRed]
\tikzstyle{comarrow}=[dashed, -stealth, MidnightBlue]
\tikzstyle{noisy}=[comarrow, solid, decorate, decoration={zigzag, segment length=2mm, amplitude=0.4mm, post length= 1mm}, -stealth]
\newcommand{\notcheckmark}{\checkmark\!\!\!\raisebox{2.2pt}{\textbf{\tiny\textbackslash}}}
\newcommand{\distpicto}{\text{\scriptsize$|\kern-4pt\leftrightarrow\kern-4pt|$}}
\newcommand{\success}{\textcolor{OliveGreen}{$\checkmark$}}
\newcommand{\deny}{\textcolor{BrickRed}{$\notcheckmark$}}
\newcommand{\termination}{\textcolor{BrickRed}{$\times$}}
\newcommand{\measurment}{\textcolor{MidnightBlue}{$\distpicto$}}
\newsavebox\disc
\sbox\disc{\tikz[baseline=-0.5ex]{\node[disc] {};}}
\newsavebox\continuous
\sbox\continuous{\tikz[baseline=-0.5ex]{\draw[continuous] (0,0) -- (.5,0);}}
\newsavebox\communication
\sbox\communication{\tikz[baseline=-0.5ex]{\draw[comarrow] (0,0) -- (.5,0);}}
\newsavebox\lossy
\sbox\lossy{\tikz[baseline=-0.5ex]{\draw[comarrow] (0,0) -- (.5,0) node [right, inner sep = 0] {\termination};}}
\newsavebox\noisy
\sbox\noisy{\tikz[baseline=-0.5ex]{\draw[noisy] (0,0) -- (.5,0);}}
\newsavebox\crash
\sbox\crash{\tikz[baseline=-0.5ex]{\draw[badtrace] (0,0) -- (.5,0);}}
\newif\ifblind\blindfalse
\newif\iflongversion\longversiontrue
\begin{document}

\title{Dynamic Logic of\\ Communicating Hybrid Programs}

\ifblind
\author{}\institute{}
\else
\author{Marvin Brieger\inst{1}\orcidID{0000-0001-9656-2830}, Stefan Mitsch\inst{2}\orcidID{0000-0002-3194-9759}, Andr\'e Platzer\inst{2, 3}\orcidID{0000-0001-7238-5710}}

\institute{
  LMU Munich, Germany \\
  \and
  Carnegie Mellon University, Pittsburgh, USA \\
  \and
  Karlsruhe Institute of Technology, Germany \\
  \email{marvin.brieger@sosy.ifi.lmu.de}\quad \email{smitsch@cs.cmu.edu}\quad \email{platzer@kit.edu}
}
\fi

\maketitle

\begin{abstract}
	This paper presents a dynamic logic \dLCHP for compositional deductive verification of communicating hybrid programs (CHPs).
	CHPs go beyond the traditional mixed discrete and continuous dynamics of hybrid systems by adding CSP-style operators for communication and parallelism.
	A compositional proof calculus is presented that modularly verifies CHPs including their parallel compositions from proofs of their subprograms by assumption-commitment reasoning in dynamic logic.	
	Unlike Hoare-style assumption-commitments,
	\dLCHP supports intuitive symbolic execution via explicit recorder variables for communication primitives.
	Since \dLCHP is a conservative extension of differential dynamic logic \dL, it can be used soundly along with the \dL proof calculus and \dL's complete axiomatization for differential equation invariants.
\end{abstract}

\keywords{Compositional verification \and Hybrid systems \and 
Parallel programs \and Differential dynamic logic \and Assumption-commitment reasoning \and CSP}

\section{Introduction}

Their prevalence in safety-critical applications and ample technical subtleties make both cyber-physical systems (CPS) verification \cite{Alur1993, Henzinger1996, DBLP:journals/jar/Platzer08, Alur2011} and parallel program verification \cite{OwickiGries1976, LevinGries1981, deRoever2001, AptdeBoerOlderog10} important challenges.
CPS verification complexity stems from subtle interactions of their discrete control decisions and differential equations.
Parallel program verification complexity comes from intricate interactions caused by synchronization via state or communication interdependencies between parallel components.
But their combination becomes intrinsically complicated because parallel CPS are \emph{always} interdependent as they \emph{always} synchronize implicitly by sharing the same global time. 
Moreover, many real-world CPS have heterogeneous subsystems
whose controllers do not operate in lock-step, and their communication is potentially unreliable (see \rref{fig:interaction}).
Unlike hybrid systems model checking \cite{Alur1993,Lynch2003,Frehse2004,Benvenuti2014}---which needs to compute products of communicating parallel automata of significant size even when reducing the size of the subautomata---deductive approaches can be truly compositional.
Existing deductive verification approaches, however, fail to properly tackle the above challenges, since they are restricted to homogeneous subsystems \cite{DBLP:conf/csl/Platzer10},
specific component shapes and interaction patterns \cite{Lunel2019,DBLP:journals/sttt/MullerMRSP18, Kamburjan2020},
do not support symbolic execution \cite{Liu2010,Wang2012,Guelev2017},
or are non-compositional \cite{Liu2010,Wang2012,Guelev2017},
where even explicit attempts on compositionality \cite{Wang2012,Guelev2017} turn out to be non-compostional~again.

\begin{wrapfigure}[19]{R}{.43\textwidth}
	\vspace{-.5\baselineskip}
	\centering
	\begin{tikzpicture}[node distance=.8cm]

    \node[disc]	(l-0) 						{};
    \node[disc] (l-1) [right= of l-0] 		{};	
    \node[disc] (l-2) [right= of l-1] 		{};
    \node[disc] (l-3) [right= .2cm of l-2] {};
    \node[disc] (l-4) [right= .2cm of l-3] {};
    \node		(l-5) [right= .2cm of l-4]		{};

    \node[inner sep=0] (terminate) [below=.4cm of l-3] {\termination};

    \draw[continuous] (l-0) -- (l-1); 
    \draw[continuous] (l-1) -- (l-2);
    \draw[continuous] (l-2) -- (l-3);
    \draw[continuous] (l-3) -- (l-4);
    \draw[continuous] (l-4) -- (l-5.center);


    \node		(f-0) 	[below= of l-0] 		{};
    \node[disc]	(f-0-1) [right= .3cm of f-0] 	{};
    \node[disc] (f-1) 	[right= .3cm of f-0-1] {};
    \node[disc] (f-2) 	[right= .2cm of f-1] 	{};
    \node[disc]	(f-3) 	[below= .85cm of l-4] 		{};
    \node[disc] (f-4) 	[right= .2cm of f-3] 		{};

    \draw[continuous] (f-0.center) -- (f-0-1);
    \draw[continuous] (f-0-1) -- (f-1);
    \draw[continuous] (f-1) -- (f-2);
    \draw[continuous] (f-2) -- (f-3);
    \draw[continuous] (f-3) -- (f-4);


    \node	[left= .5em of l-0] 	{$\progtt{CPS}_1$};
    \node	[left= .5em of f-0] 	{$\progtt{CPS}_2$};


    \draw[comarrow] (l-1) -- (f-1);
    \draw[comarrow] (f-2) -- (l-2);
    \draw[comarrow] (l-3) -- (terminate);
    \draw[noisy] (l-4) -- (f-3);
    

    \coordinate	[below= .3cm of f-0] (axisstart);
    \coordinate [below= .34cm of f-4] (axisend);
    
    \draw[-stealth] ([shift={(.3, 0)}] axisstart) -- ([shift={(-.3, 0)}] axisend) node[right] {time};


    \coordinate [above right=.5 and .3 of l-1] (origin);

    \node[ellipse, draw, gray, minimum width=1.3cm, minimum height=.8cm] at ($(l-0)!0.5!(l-1)$) (e) {};

    \draw[gray] (e.north east) -- (origin);

    \draw[black, ->] (origin) -- +(0, .8) node[left, shift={(0,-.1)}] {\small state};
    \draw[black, ->] (origin) -- +(1.5, 0) node[right] {\small time};

    \node[disc, inner sep=.1em] [above right=.1 and .2 of origin] (d-0) {};
    \node[disc, inner sep=.1em] [above right=.6 and .2 of origin] (d-1) {};
    \node[disc, inner sep=.1em] [above right=.1 and 1.2 of origin] (d-2) {};
    \node[disc, inner sep=.1em] [above right=.4 and 1.2 of origin] (d-3) {};

    \draw[dotted, semithick] (d-0) -- (d-1);
    \draw[dotted, semithick] (d-2) -- (d-3);
    \draw[black] (d-1) to [bend right=30] (d-2);
\end{tikzpicture}
	\vspace*{-1.5em}
	\caption{
		The interaction of $\progtt{CPS}_1$ and $\progtt{CPS}_2$ has (delayed) (\usebox{\communication}),
		lossy (\usebox{\lossy}), and
		noisy (\usebox{\noisy}) communication.
		Discrete change (\usebox{\disc}) is independent of discrete change in parallel unless synchronization takes place. 
		Continuous evolution (\usebox{\continuous}) is not interrupted by parallel discrete behavior.
 	}
	\label{fig:interaction}
\end{wrapfigure}

Neither compositionality to tame complexity by reasoning separately about discrete, continuous, and communication behavior, 
nor generality in expressing models, nor symbolic execution to boost the feasibility of deductive reasoning are dispensable for a practical approach.
Thus, to tackle all three of these challenges, this paper presents $\dLCHP$, 
a \emph{dynamic logic for communicating hybrid programs} (\emph{CHPs}),
that extends differential dynamic logic \dL for hybrid programs of differential equations 
\ifblind
\cite{DBLP:journals/jar/Platzer17}
\else
\cite{DBLP:journals/jar/Platzer08,Platzer10,DBLP:journals/jar/Platzer17,Platzer18} 
\fi
with CSP-style operators for parallel composition and message passing communication \cite{Hoare1978, Hoare85} as well as assumption-commitment (ac) reasoning \cite{Misra1981, AcHoare_Zwiers}.
Parallel CHPs cannot share state but can communicate.
They run synchronously in global time and their communication is instantaneous.

There are two fundamental approaches to reasoning about parallelism.
Parallelism can either be understood via unfolding its explicit operational semantics,
or via its denotational semantics implicitly characterizing parallel behavior by matching behavior of the subprograms on their respective part of the state.
Verification based on the operational semantics unrolls the exponentially many possible interleavings,
as in hybrid automata \cite{Alur1993} or hybrid CSP \cite{Wang2012,Guelev2017}.
Such approaches admit superficial mitigation of the state space explosion but always resorts to product automata after reducing the size of the subautomata \cite{Frehse2004,Henzinger2001,Benvenuti2014},
or merely postpone the burden of reasoning about exponentially many trace interleavings \cite{Wang2012,Guelev2017} since it reveals the internal structure of subprograms \cite{Wang2012}.
In contrast, verification based on the denotational semantics is compositional for discrete programs with ac-reasoning \cite{AcHoare_Zwiers, AcSemantics_Zwiers}, \emph{if} the semantics is compositional and aligns well with the intended reasoning.
This paper generalizes ac-reasoning for the purpose of integration with a compositional hybrid systems logic
\ifblind
\cite{DBLP:journals/jar/Platzer17}.
\else
\cite{DBLP:journals/jar/Platzer08,Platzer10,DBLP:journals/jar/Platzer17,Platzer18}.
\fi

Our central contribution is a sound \emph{compositional} proof calculus for \dLCHP. 
For compositional verification of parallel communication,
it embeds ac-reasoning into dynamic logic via the explicit ac-modality $[ \alpha ] \ac \psi$.
The ac-modality expresses that for all runs of~$\alpha$ whose incoming communication meets assumption $\A$ the outgoing communication fulfills commitment $\C$ and that $\psi$ holds in the final state.
Formulas $\A$ and $\C$ specify the communication behavior of $\alpha$, 
and enable parallel composition if they mutually agree for the contracts of subprograms.
Crucially, these formulas directly interface the communication history via history variables such that communication can remain implicit in the parallel composition axiom just as in the underlying denotational semantics.
Since we prove that \dLCHP is a conservative extension of \dL,
it inherits \dL's complete axiomatization of differential equation invariants \cite{DBLP:conf/lics/PlatzerT18}.

Unlike approaches built on Hoare-logics \cite{Liu2010, Wang2012,Guelev2017},
our calculus supports statement-by-statement symbolic execution
instead of executing them backward.
Since executing communication primitives extends the former history,
the new history state requires a fresh name. 
As a consequence, it is unsound to adopt Hoare-style ac-reasoning \cite{AcHoare_Zwiers} verbatim with a distinguished variable for the communication history.
Instead, \dLCHP reconciles ac-reasoning and symbolic execution via recorder variables whose evolution is maintainable, as done by our communication axiom along the way.

In summary, we provide the first truly \emph{compositional} approach to the verification of communicating parallel hybrid system models and prove its soundness.
Our logical treatment separates the essentials of discrete, continuous, and communication reasoning into axioms that are simple and modular compared to earlier approaches \cite{Liu2010, Wang2012}.
Even though the technical development is challenging because of a prefix-closed dynamic semantics, 
a subtle static semantics due to the global time and recorder variables,
and the mutual dependency of formulas and programs in dynamic logic 
it remains finally under the hood.
We demonstrate the flexibility of \dLCHP and its proof calculus with an example in autonomous driving considering the challenge of lossy communication, 
where a $\progtt{follower}$ and $\progtt{leader}$ car communicate to form a convoy.

\section{Dynamic Logic of Communicating Hybrid Programs}

We introduce \dLCHP, 
a \emph{dynamic logic} to reason about \emph{communicating hybrid programs} (CHPs).
CHPs combine \dL's hybrid programs 
\ifblind
\cite{DBLP:journals/jar/Platzer08,DBLP:journals/jar/Platzer17}
\else
\cite{DBLP:journals/jar/Platzer08,DBLP:journals/jar/Platzer17,Platzer18} 
\fi
with communication primitives and a parallel operator similar to CSP \cite{Hoare1978,Hoare85}. 
On the logical side, 
$\dLCHP$ introduces ac-reasoning \cite{Misra1981,AcHoare_Zwiers,AcSemantics_Zwiers} into the dynamic logic setup \cite{Harel1979} of \dL, 
allowing compositional reasoning about communication in a way that preserves symbolic execution as an intuitive reasoning principle.

\subsection{Syntax}

The syntax uses channel names $\Chan$ and variables $\V = \RVar \cup \NVar \cup \TVar$ with pairwise disjoint sets of real variables $\RVar$, integer variables $\NVar$, and trace variables $\TVar$. 
The variable $\globalTime \in \RVar$ is designated to reflect the global time.
By convention $x, y, t \in \RVar$, $\intVar, \intVar_i \in \NVar$, $\historyVar, \historyVar_i \in \TVar$, $ch, ch_i \in \Chan$, and $\arbitraryVar, \arbitraryVar_i \in \V$.
Notions $\FV(\cdot)$ of free and $\BV(\cdot)$ of bound variables in formulas and programs are defined as usual by simultaneous induction with the syntax (see \rref{app:staticSemantics}).
$\V(\cdot)=\FV(\cdot)\cup\BV(\cdot)$ is the set of all variables, 
whether read or written. 

All real variables $\RVar$ can be read \emph{and} written in programs but the global time~$\globalTime$ is \emph{not meant} to be written manually.%
\footnote{Programs writing the global time $\globalTime$ manually are likely to be meaningless such as $\globalTime \ceq 0 \parOp \globalTime \ceq 1$ that sets $\globalTime$ to $0$ and $1$ in parallel, which fortunately has no runs.} 
Instead, the built-in evolution of~$\globalTime$ with every continuous behavior makes it represent the global flow of time. 
Trace variables are bound by programs when they record communication.
For a program $\alpha$,  
we call the remaining set $(\BV(\alpha) \cap \RVar) \setminus \{ \globalTime \}$ the state of $\alpha$ and say that $\alpha$ operates over a real state.
In parallel compositions $\alpha \parOp \beta$,
the programs $\alpha$ and $\beta$ may communicate explicitly but do \emph{not} share state.

The logical language of $\dLCHP$ features trace algebra to reason about communication behavior following the approach of Zwiers\etal \cite{AcSemantics_Zwiers, Zwiers_Phd}.
During symbolic program execution, 
communication events are collected syntactically in logical trace variables that were explicitly designated to record the history.
This way, the communication behavior of a program can be specified using recorder variables as interface.
In analogy to a distinguished history variable, this interface is crucial to obtain a compositional proof rule for parallel composition \cite{AcSemantics_Zwiers, Hooman1992, deRoever2001}.

\begin{definition}[Terms] \label{def:syntax_terms}
	Real terms $\Rtrm$, integer terms $\Itrm$, channel terms $\Ctrm$, and trace terms $\Ttrm$ are defined by the grammar below, 
	where $c \in \rationals$ is a rational constant, $\ch{} \in \Chan$ a channel name, $\rp_1, \rp_2 \in \polynoms{\rationals}{\RVar}$ are polynomials in $\RVar$, and $\cset \subseteq \Chan$ is a finite set of channel names.
	The set of all terms is denoted by $\Trm$.
	\begin{align*}
		\Rtrm: &&\re_1, \re_2 & \cceq x \mid \rationalConst \mid \re_1 + \re_2 \mid \re_1 \cdot \re_2 \mid \val{\at{\te}{\ie}} \mid \stamp{\at{\te}{\ie}} \\
		\Itrm: &&\ie_1, \ie_2 & \cceq \intVar \mid 0 \mid 1 \mid \ie_1 + \ie_2 \mid \len{\te}\\
		\Ctrm: &&\ce_1, \ce_2 &  \cceq \ch{} \mid \chan{\at{\te}{\ie}}\\
		\Ttrm: &&\te_1, \te_2 & \cceq \historyVar \mid \epsilon \mid \comItem{ch, \rp_1, \rp_2} \mid \te_1 \cdot \te_2 \mid \te \downarrow \cset
	\end{align*}
\end{definition}

Real terms $\Rtrm$ are formed by arithmetic operators, variables $x$ (including $\globalTime$), and rational constants $\rationalConst$. 
Additionally, $\val{\at{\te}{\ie}}$ accesses the value and $\stamp{\at{\te}{\ie}}$ the timestamp of the $\ie$-th communication in trace $\te$. 
In CHPs only polynomials $\rp \in \polynoms{\rationals}{\RVar} \subset \Rtrm$ in $\RVar$ over rational coefficients occur, 
\iest without trace terms, 
since CHPs operate over a real state.
By convention,~$\rp, \rp_i$ denote terms from $\polynoms{\rationals}{\RVar}$.

For integers, we use Presburger arithmetic (no multiplication) since it is decidable \cite{Presburger1931} and sufficient for our purposes.%
\footnote{Presburger arithmetic is the subset of $\Itrm$ without length computations $\len{\te}$.}
The integer term $\len{\te}$ denotes the length of trace $\te$.
In analogy to $\val{\at{\te}{\ie}}$, the term $\chan{\at{\te}{\ie}}$ accesses the channel name of the $\ie$-th communication in trace $\te$.
The trace term $\epsilon$ represents empty communication, $\te_1 \cdot \te_2$ is the concatenation of trace terms $\te_1$ and $\te_2$, and $\te \downarrow \cset$ the projection of $\te$ onto the set of channel names $\cset \subseteq \Chan$. 
The tuple $\comItem{\ch{}, \rp_1, \rp_2}$ represents communication of $\rp_1$ along channel $\ch{}$ at time $\rp_2$,
where~$\rp_1, \rp_2 \in \polynoms{\rationals}{\RVar}$ since communication is between programs over a real state.

A trace variable $\historyVar$ refers to a sequence of communication events.
By symbolic execution,
proofs collect communication items in trace variables designated to record the history.
Then, communication behavior is specified against these recorder variables using projections onto the channels of interest (see \rref{ex:historyAssertion} below).
This allows specifications to hide the internal structure of programs leading to compositional reasoning in the presence of communication \cite{AcSemantics_Zwiers, Hooman1992, deRoever2001}.

\begin{notation}
	We write $\val{\te}$ to abbreviate $\val{\at{\te}{\len{\te} - 1}}$, \iest access to the value of the last item on trace $\te$.
	Likewise, we use $\stamp{\te}$ and $\chan{\te}$.
\end{notation}

\begin{example} \label{ex:historyAssertion}
	The formula $\len{\historyVar \downarrow \ch{}} > 0 \rightarrow \val{\historyVar \downarrow \ch{}} > 0$ expresses that the last value sent along channel $\ch{}$ recorded by $\historyVar$ is positive.
	The precondition $\len{\historyVar \downarrow \ch{}} > 0$ ensures that the value is accessed only for a non-empty history.
\end{example}

\begin{definition}[Communicating hybrid programs] \label{def:syntax_chps}
	The set $\Chp$ of \emph{communicating hybrid programs} is defined by the grammar below,
	where $x \in \RVar$ for $x, x'$,
	and $\rp \in \polynoms{\rationals}{\RVar}$ is a polynomial in $\RVar$, 
	and $\chi \in \FolRA$ is a formula of first-order real-arithmetic.
	In the parallel composition $\alpha \parOp \beta$, the constituents must not share state, \iest $\V(\alpha) \cap \BV(\beta) = \V(\beta) \cap \BV(\alpha) \subseteq \{\globalTime\} \cup \TVar$.
	\begin{align*}
		\alpha, \beta \cceq \;
		& x \ceq \rp \mid x \ceq * \mid \evolution{}{} \mid \test{} \mid \alpha \seq \beta \mid \alpha \cup \beta \mid \repetition{\alpha} \mid \tag*{\nonrelevant{(standard \dL)}}\\
		& \send{}{}{} \mid \receive{}{}{} \mid \alpha \parOp \beta \tag*{\nonrelevant{(CSP extension)}}
	\end{align*}
\end{definition}

The statement $x \ceq \rp$ instantly changes $x$ to $\rp$ and nondeterministic assignment $x \ceq *$ sets $x$ to an arbitrary value.
Assignment to the global time $\globalTime$ is only meant to be used by axioms.
As in $\dL$ \cite{DBLP:journals/jar/Platzer08}, 
continuous evolution \mbox{$\evolution{}{}$} follows the differential equation $x' = \rp$ for a nondeterministically chosen duration but only as long as the domain constraint $\chi$ is fulfilled.%

In \dLCHP, the global time $\globalTime$ always evolves during a continuous evolution according to $\globalTime' = 1$ even if it does not occur syntactically.
Hence, an evolution $\globalTime' = \rp$ 
is considered ill-formed if $\rp \not\equiv 1$ just like an ODE $x' = 1, x' = 2$ is considered ill-formed.
Since programs operate over a real state,
terms $\rp \in \polynoms{\rationals}{\RVar}$ are polynomials in $\RVar$ and $\chi \in \FolRA$ is a formula of first-order real-arithmetic.

The test $\test{}$ passes if formula $\chi$ is satisfied. 
Otherwise, execution is aborted. 
The sequential composition $\alpha \seq \beta$ executes $\beta$ after $\alpha$. 
The choice $\alpha \cup \beta$ follows~$\alpha$ or $\beta$ nondeterministically, 
and $\repetition{\alpha}$ repeats $\alpha$ zero or more times.

Communication and parallelism are inspired by CSP \cite{Hoare1978}.
The primitive $\send{}{}{}$ sends the value of term $\rp$ along channel $\ch{}$ and $\receive{}{}{}$ receives a value from $\ch{}$ binding it to variable $x$.
For both statements, $\historyVar$ is the trace variable designated to record the communication.
In an ongoing symbolic execution,
this variable can be renamed to refer to the variable keeping the most recent communication history.
In system models, 
all recorder variables are meant to be the same. 
In this case, 
we also write $\send{}{non}{}$ and $\receive{}{non}{}$ instead of $\send{}{}{}$ and $\receive{}{}{}$.

Finally, $\alpha \parOp \beta$ executes $\alpha$ and $\beta$ in parallel for equal duration,
\iest their final states agree on the value of the global time $\globalTime$.
If $\globalTime$ is not manipulated manually,
its increase equals the duration of continuous behavior.
As in CSP,~$\alpha$ and $\beta$ can perform synchronous message passing but cannot share state.%
\footnote{The global time $\globalTime$ and trace variables $\TVar$ are not considered as program state.}
All programs participating in communication over a channel must agree on all communication along that channel and share the same values and recorder variables at the same time,
\iest message passing does not consume time.
Since shared recorder variables agree on the communication for all subprograms,
they provide the interface that allows for decomposition of parallel behavior.
If the recorders are not the same as in $\send{}{\historyVar_0}{} \parOp \receive{}{\historyVar_1}{}$,
there are no runs.
The need for matching recorders in parallel composition must not be confused with renaming of the history by symbolic execution along the sequential structure of programs.

As usual in CSP \cite{Hoare1978},
the syntax does not enforce unidirectional communication such that several programs may send and receive on the same channel at the same time as long as they agree on the recorder variables and values.
For example, $\receive{}{}{x} \parOp \receive{}{}{y}$ is a well-formed program.
Its semantics has all runs where $x$ and $y$ receive the same values.
Likewise, $\send{}{}{\rp_1} \parOp \dots \parOp \send{}{}{\rp_n}$ has terminating runs if the values of all $\rp_i$ agree.
Consequently, a receive statement $\receive{}{}{}$ can be replaced with the semantical equivalent $x \ceq * \seq \send{}{}{x}$.

\begin{notation}
	As usual $\ifstat{\varphi}{\{ \alpha \}}$ is short for $(\test{\varphi} \seq \alpha) \cup \test{\neg\varphi}$.
\end{notation}

\begin{example} \label{ex:followerLeader}
	\rref{fig:followerLeader} models two cars in a convoy safely adjusting their speed. 
	From time to time, 
	the $\progtt{leader}$ changes its speed $v_l$ in the range $0$ to $\maxvelo$ and notifies the $\progtt{follower}$ about the change.
	The communication, however, is lossy ($\ch{vel} ! v_l \cup \skipProg$).
	Sending position updates by $\ctrlUpdate$ succeeds at least every $\epsilon$ time units.
	On such an update, the $\progtt{follower}$'s controller $\ctrlDistance$ awakes.
	If the distance~$d$ fell below $\maxdist$, the $\progtt{follower}$ slows down to avoid collision before the next position update.
	
	Regularly, the $\progtt{follower}$ adopts the speed update in $\ctrlVelocity$, but crucially refuses to do so 
	if the last known distance was not safe ($d > \maxdist$).
	If the $\progtt{leader}$ could overwrite the $\progtt{follower}$'s speed, it could cause a future collision (see \rref{fig:convoyPlot} below) even though obeying would be perfectly fine at the moment.
	This is because a subsequent notification of the leader slowing down could be lost.
\end{example}

\begin{figure}[h!tb]
	\vspace*{-3em}
	\begin{small}
		\begin{minipage}{.55\textwidth}
			\begin{align*}
				\ctrlVelocity \equiv\;
					& \ch{vel} ? \tarvelo \seq 
					\ifstat{d > \epsilon \maxvelo}{v_f \ceq \tarvelo} \\
				\ctrlDistance \equiv\; 
					& \ch{pos} ? \mespos \seq 
					d \ceq \mespos - x_f \seq \\
					& \ifstat{d \le \epsilon \maxvelo}{%
						\{v_f \ceq * \seq \test{\orange{0}{v_f}{\safevelo{d}}}\}
					} \\
				\Plant_f \equiv\;
					& \evolution{x_f' = v_f}{non} \\
				\progtt{follower} \equiv\;
					& \big( (\ctrlVelocity \cup \ctrlDistance) \seq \Plant_f \big)^*
			\end{align*}
		\end{minipage}%
		\begin{minipage}{.45\textwidth}
			\begin{align*}
				\chi_{vel} & \equiv \range{0}{v_l}{\maxvelo} \\
				\ctrlNotify & \equiv v_l \ceq * \seq \test{\chi_{vel}} \seq (\ch{vel} ! v_l \cup \skipProg) \\
				\ctrlUpdate & \equiv \ch{pos} ! x_l \seq \waitvar \ceq 0 \\ 
				\Plant_l & \equiv \evolution{\waitvar' = 1, x_l' = v_l}{\waitvar \le \epsilon} \\
				\progtt{leader} & \equiv \big( (\ctrlNotify \cup \ctrlUpdate) \seq \Plant_l \big)^*
			\end{align*}
		\end{minipage}
	\end{small}
	\caption{Models of two moving cars ($\progtt{follower}$ and $\progtt{leader}$) intended to form the convoy $\progtt{follower} \parOp \progtt{leader}$ by parallel composition, communicating target speed.}
	\label{fig:followerLeader}
	\vspace*{-1.5em}
\end{figure}

\begin{wrapfigure}[24]{R}{.45\textwidth}
	\vspace*{-.5\baselineskip}
	\centering
	\begin{tikzpicture}[node distance= 2em]
    \coordinate (origin) at (0,0);
    \coordinate (coorigin) at ([shift={(-.3, -.4)}] origin);
    \coordinate (epsorigin) at ([shift={(0, -.4)}] origin);


    \draw[black, ->] (coorigin) --++ (4, 0) node[right] {\small time};
    \draw[black, ->] (coorigin) --++ (0, 3) node[above] {\small position};


    \node[disc] [above=1.25 of origin] (l-0) {};
    \node[disc] [above right=2 and 1 of origin] (l-0-1) {};
    \node[disc] [above right=2 and 1.66 of origin] (l-1) {};
    \node[disc] [above right=2 and 2 of origin] (l-2) {};
    \node[disc] [above right=2.6 and 2.5 of origin] (l-2-1) {};
    \node[ghost] [above right=2.75 and 3.66 of origin] (l-3) {};

    \node[ghost][above right=2 and 4 of origin] (l-g) {};


    \node[inner sep=0] (terminate-1) [below=.2cm of l-0-1] {\termination};
    \draw[comarrow] (l-0-1) -- (terminate-1);

    \node[inner sep=0] (terminate-2) [below=.2cm of l-2-1] {\termination};
    \draw[comarrow] (l-2-1) -- (terminate-2);


    \node[disc] (f-0) at (origin) {};
    \node[disc] [above right=1.25 and 1.66 of origin] (f-1) {};
    \node[disc] [above right=1.35 and 2 of origin] (f-2) {};
    \node[ghost] [above right=1.75 and 3.33 of origin] (f-3) {};

    \node[ghost] [above right=3.15 and 3.5 of origin] (f-g) {};


    \draw[black] (l-0) -- (l-0-1);
    \draw[black] (l-0-1) -- (l-1);
    \draw[black] (l-1) -- (l-2);
    \draw[black] (l-2) -- (l-2-1);
    \draw[black, name path=slow] (l-2-1) -- (l-3);


    \draw[black] (f-0) -- (f-1);
    \draw[black] (f-1) -- (f-2);
    \draw[black] (f-2) -- (f-3);

    \draw[dotted, gray] (l-2) -- (l-g);
    \draw[badtrace, name path=bad] (f-2) -- (f-g);


    \draw[comarrow] (l-0) -- (f-0);
    \draw[comarrow] (l-1) -- (f-1);
    \draw[comarrow] (l-2) -- (f-2);


    \path[name intersections={of=slow and bad,by=crash}];
    \node[draw, lightgray, circle] (collision) at (crash) {};
    \node [above left=-.2 and -.1 of collision] {$\lightning$};



    \node [below right=-.15 and -.1 of f-0] {{\scriptsize$\ctrlVelocity$} \!\success};
    \node [below=.1 of f-1] {\scriptsize$\ctrlDistance$ \!\measurment};
    \node [below right=-.15 and -.1 of f-2] {{\scriptsize$\ctrlVelocity$} \!\deny};

    \node [above right=0 and .1 of l-0, rotate=90] {\scriptsize$\ctrlNotify$};
    \node [above right=0 and .1 of l-0-1, rotate=90] {\scriptsize$\ctrlNotify$};
    \node [above right=0 and .15 of l-1, rotate=90] {\scriptsize$\ctrlUpdate$};
    \node [above right=0 and .1 of l-2, rotate=90] {\scriptsize$\ctrlNotify$};
    \node [above right=0 and .1 of l-2-1, rotate=90] {\scriptsize$\ctrlNotify$};


    \node [left=.3 of f-0] {$x_f$};
    \node [left=.3 of l-0] {$x_l$};


    \draw[decoration={brace, mirror, raise=3pt}, decorate] (epsorigin) -- node[below=.5em] {$\epsilon$} ++(1.71,0);

    \draw[decoration={brace, mirror, raise=3pt}, decorate] ([shift={(1.71, 0)}] epsorigin) -- node[below=.5em] {$\epsilon$} ++(1.66, 0);

    \draw[dotted, gray] ([shift={(1.71, -.3)}] origin) -- (f-1.south);
    \draw[dotted, gray] ([shift={(3.37, -.3)}] origin) -- ([shift={(-.33, 0)}] l-3.north);


    \coordinate (cont) at ([shift={(3.3, .6)}] coorigin);
    \node [right=0 of cont] {\scriptsize controller};
    \node[disc] [left=0 of cont] {};

    \coordinate (com) at ([shift={(0, -.3)}] cont);
    \node [right=0 of com] {\scriptsize comm.};
    \draw[comarrow] ([shift={(-.5, 0)}] com) -- (com);
\end{tikzpicture}
	\vspace*{-1.5em}
	\caption{
		Plot of example positions $x_f$ and $x_l$ of the cars over time.
		First, speed update is accepted (\success).
		The next update is lost (\usebox{\lossy}).
		After position update (\measurment),
		the $\progtt{follower}$ adjusts its speed.
		Crucially, it conservatively rejects the speed update (\deny) 
		when a crash ($\lightning$) with a slowing $\progtt{leader}$ is possible since speed communication may fail (\usebox{\lossy}) until the next reliable position update is expected, see dashed trajectory (\usebox{\crash}).
		}
	\label{fig:convoyPlot}
\end{wrapfigure}

\begin{definition}[Formulas] \label{def:syntax_formulas}
	The set of \emph{$\dLCHP$ formulas} $\Fml$ is defined by the following grammar, 
	where $\arbitraryVar \in \V$, terms $\expr_1, \expr_2 \in \Trm$ are of equal sort, $\re_i \in \Rtrm$, $\ie_i \in \Itrm$, $\te_i \in \Ttrm$, and the ac-formulas $\A$ and $\C$ do not refer to state and time of~$\alpha$,
	\iest $(\FV(\A) \cup \FV(\C)) \cap \BV(\alpha) \subseteq \TVar$.
	\begin{align*}
		\varphi, \psi, \A, \C \cceq \; 
			& \expr_1 = \expr_2 \mid \re_1 \ge \re_2 \mid \ie_1 \ge \ie_2 \mid \te_1 \preceq \te_2 \mid \neg \varphi \mid \varphi \wedge \psi \mid \varphi \vee \psi \mid \\
			& \varphi \rightarrow \psi \mid \fa{\arbitraryVar} \varphi \mid \ex{\arbitraryVar} \varphi \mid [ \alpha ] \psi \mid [\alpha] \ac \psi
	\end{align*}
\end{definition}

For specifying the (communication) behavior of CHPs, we combine first-order dynamic logic \cite{Harel1979} with ac-reasoning \cite{AcHoare_Zwiers}. 
Equality is defined on each sort of terms. 
On real and integer terms, $\ge$ has the usual meaning.
On trace terms, $\te_1 \preceq \te_2$ means that $\te_1$ is a prefix of $\te_2$. 
There is no order on channel terms.
Quantified variables $\arbitraryVar \in \V$ are of arbitrary sort.
Since our primary interest is safety, we omit the dynamic modality $\langle \alpha \rangle \psi$ and give no dual $\langle \alpha \rangle \ac \psi$ for $[ \alpha ] \ac \psi$. 

Besides the dynamic modality $[ \alpha ] \psi$, $\dLCHP$ prominently features the ac-box $[ \alpha ] \ac \psi$ 
that reshapes Hoare-style ac-reasoning \cite{AcHoare_Zwiers} into the modal approach of dynamic logic.
In an ac-contract $\varphi \rightarrow [ \alpha ] \ac \psi$,
assumption $\A$ and commitment~$\C$ specify $\alpha$'s communication behavior along the interface of recorder variables but without access to $\alpha$'s state and time as required by $(\FV(\A) \cup \FV(\C)) \cap \BV(\alpha) \subseteq \TVar$,
whereas formulas $\varphi$ and $\psi$ 
act as pre- and postcondition as usual.
Formula $[ \alpha ] \ac \psi$  promises that $\C$ holds after each communication event of $\alpha$ assuming $\A$ held before each event. 
Moreover, if the program terminated \emph{and} $\A$ held before and after each communication event, %
the final state satisfies $\psi$.

\begin{example} \label{ex:convoy_safety}
	The safety condition about $\progtt{follower}$ and $\progtt{leader}$ below expresses:
	If they start driving with a distance of at least $d$ and a speed ${\le}\nicefrac{d}{\epsilon}$ that prevents the $\progtt{follower}$ from reaching the $\progtt{leader}$ within $\epsilon$ time units,
	then the cars do never collide.
	\rref{sec:example} shows a proof of this formula.
	\begin{equation*}
		\epsilon \ge 0 \wedge
			\range{0}{v_f}{\safevelo{d}} \wedge
			v_f \le \maxvelo \wedge
			x_f + d < x_l
		\rightarrow [ \progtt{follower} \parOp \progtt{leader} ]
		\, x_f < x_l 
	\end{equation*}
\end{example}

Closed systems,
where communication has an internal partner,
can be specified using boxes $[ \cdot ] \psi$ (see \rref{ex:convoy_safety}) since their safety does not depend on the environment.
Ac-boxes $[ \cdot ] \ac \psi$ come into play when such systems are decomposed since the constituents $\progtt{follower}$ and $\progtt{leader}$ are each other's environment.

\subsection{Semantics} \label{sec:semantics}

CHPs have a denotational linear history semantics merging ideas from \dL \cite{DBLP:journals/jar/Platzer08} and ac-reasoning \cite{AcSemantics_Zwiers} and adding synchronization in the global time.
The basic domains are traces $\traces$ and states $\states$.
A \emph{trace} $\trace \in \traces$ is a finite sequence $(\trace_1, ..., \trace_n)$ of communication events $\trace_i = \comItem{\ch{}_i, a_i, \duration_i}$ with channel $\ch{}_i \in \Chan$, value $a_i \in \reals$, and timestamp $\duration_i \in \reals$ that is \emph{chronological}, \iest $\duration_i \le \duration_j$ for all $1 \le i < j \le n$.
The empty trace is denoted $\epsilon$, the concatenation of traces $\trace_1$ and $\trace_2$ is $\trace_1 \cdot \trace_2$, and for $\cset \subseteq \Chan$, the projection $\trace \downarrow \cset$ is the subsequence of $\trace$ consisting exactly of those $\comItem{\ch{}, a, \duration}$ and $\comItem{\historyVar, \ch{}, a, \duration}$ with $\ch{} \in \cset$.%
\footnote{We use the same operators for corresponding syntax and semantics, \iest $\te \downarrow \cset$ and $\trace \downarrow \cset$ are the projection on $\cset$ for trace term $\te$ and semantic trace $\trace$, respectively.}
By $\trace[pre] \preceq \trace$ and $\trace[pre] \prec \trace$,
we express that $\trace[pre]$ is a prefix or proper prefix of $\trace$, respectively.
A \emph{recorded trace} $\trace \in \recTraces$ is a trace that has an additional recorder variable $\historyVar_i \in \TVar$ for each communication event such that $\trace_i = \comItem{\historyVar_i, \ch{}_i, a_i, \duration_i}$.
Raw traces $\traces$ represent trace terms in a state, whereas
recorded traces originate from programs.

A \emph{state} is a map $\pstate{v} : \V \rightarrow \reals \cup \naturals \cup \traces$ that assigns a value from $\type(\arbitraryVar)$ to each variable $\arbitraryVar \in \V$,
where $\type(\expr) = \mathbb{M}$ if $\expr \in \SortedTrm{\mathbb{M}}$ for $\mathbb{M} \in \{\reals, \naturals, \traces\}$.
The \emph{updated state} $\pstate{v} \subs{\arbitraryVar}{d}$ is defined by $\pstate{v} \subs{\arbitraryVar}{d} = \pstate{v}$ on $\{ \arbitraryVar \}^\complement$ and $\pstate{v} \subs{\arbitraryVar}{d}(\arbitraryVar) = d$.
\emph{State-trace concatenation} $\pstate{v} \cdot \trace$ appends recorded communication $\trace \in \recTraces$ to the corresponding trace variable in $\pstate{v} \in \states$.
It is defined by $\pstate{v} \cdot \trace = \pstate{v}$ on $\TVar^\complement$ and $(\pstate{v} \cdot \trace)(\historyVar) = \pstate{v}(\historyVar) \cdot \trace(\historyVar)$ for all $\historyVar \in \TVar$,
where $\trace(\historyVar)$ denotes the subtrace of $\trace$ consisting of the raw versions $\comItem{\ch{}, a, \duration}$ of communication events $\comItem{\historyVar, \ch{}, a, \duration}$ in $\trace$ recorded by $\historyVar$.

\subsubsection{Term Semantics}

The value $\sem{\expr} \pstate{v} \in \type(\expr)$ of term $\expr$ at state $\pstate{v} \in \states$ is according to its sort $\type(\expr)$ (see \rref{app:semantics}). 
The evaluation of real and integer terms is as usual. 
Additionally, $\val{\at{\te}{\ie}}$ evaluates to the value, $\stamp{\at{\te}{\ie}}$ to the timestamp, and $\chan{\at{\te}{\ie}}$ to the channel name of the $\ie$-th communication event in $\te$ with indices from $0$ to $\len{\te} - 1$.
Moreover, $\len{\te}$ evaluates to the length of $\te$. 
The evaluation of trace terms is aligned with the semantic operators on traces \cite{AcSemantics_Zwiers, Zwiers_Phd}, \eg $\sem{\te \downarrow \cset} \pstate{v} = (\sem{\te} \pstate{v}) \downarrow \cset$ and $\sem{\comItem{\ch{}, \rp_1, \rp_2}} \pstate{v} = \comItem{\ch{}, \sem{\rp_1} \pstate{v}, \sem{\rp_2} \pstate{v}}$.

\subsubsection{Domain of Computation}

The denotational semantics $\sem{\alpha} \subseteq \pDomain$ of a CHP~$\alpha$ has domain $\pDomain = \states \times \recTraces \times \botop{\states}$ with $\botop{\states} = \states \cup \{\bot\}$,
\iest the observables of a CHP started from a state are communication and a final state.
The marker $\bot$ indicates an unfinished execution that either can be continued or was aborted due to a failing test.
Since communication can even be observed from unfinished computations,
a meaningful semantics of communicating programs is prefix-closed and total (see \rref{def:prefixClosedAndTotal} below).
Totality captures that every program can at least start computation even if it aborts immediately like $\test{\false}$ and has not emitted communication initially.
For \rref{def:prefixClosedAndTotal}, we extend the prefix relation $\preceq$ on traces $\recTraces$ to a partial order $\preceq$ on observable behavior $\recTraces \times \botop{\states}$ expressing that $\observable[pre]$ is a prefix of $\observable$ if 
$\big( (w = \pstate[pre]{w} \text{ and } \trace = \trace[pre])$ or $(\pstate[pre]{w} = \bot \text{ and } \trace[pre] \preceq \trace) \big)$.

\begin{definition}[Prefix-closedness and totality] \label{def:prefixClosedAndTotal}
	A set $U \subseteq \pDomain$ is \emph{prefix-closed} if $\computation \in U$ and $\observable[pre] \preceq \observable$ implies $\computation[pre] \in U$. 
	The set is \emph{total} if $\pLeast \subseteq U$ with $\pLeast = \states \times \{\epsilon\} \times \{\bot\}$,
	\iest $\leastComputation \in U$ for every $\pstate{v} \in \states$.
\end{definition}

\subsubsection{Program Semantics}

The semantics of compound programs is compositionally defined in terms of semantical operators:
For $U, M \subseteq \pDomain$,
we define $\botop{U} = \{ (\pstate{v}, \trace, \bot) \mid \computation \in U \}$
and $\computation \in U \continuation M$ if $(\pstate{v}, \trace_1, \pstate{u}) \in U$ and $\pstate{u} \neq \bot$ and $(\pstate{u}, \trace_2, \pstate{w}) \in M$ exists such that $\trace = \trace_1 \cdot \trace_2$.
The operator $\closedComposition$ is for sequential composition.
For $U, M \subseteq \pDomain$, we define $U \closedComposition M = \botop{U} \cup (U \continuation M)$.
Semantic iteration $U^m$ is defined by $U^0 = \pIdentity = \pLeast \cup (\states \times \{ \epsilon \} \times \states)$ and $U^{n+1} = U \closedComposition U^n$ for $n > 0$. 
Accordingly, $\alpha^0 \equiv{} \test{\true}$ and $\alpha^{n+1} \equiv \alpha \seq \alpha^n$ defines syntactic iteration.

Parallel composition $\alpha \parOp \beta$ requires that $\alpha$ and $\beta$ have disjoint bound variables (\rref{def:syntax_chps}) except for $\{\globalTime\} \cup \TVar$,
where they will always agree.
Thus, the merged state $\pstate{w}_\alpha \merge \pstate{w}_\beta \in \botop{\states}$ for states $\pstate{w}_\alpha, \pstate{w}_\beta \in \botop{\states}$ can be unambiguously determined as follows:
$\pstate{w}_\alpha \merge \pstate{w}_\beta = \bot$ if at least one of the states is $\bot$.
Otherwise, define $(\pstate{w}_\alpha \merge \pstate{w}_\beta)(\arbitraryVar) = \pstate{w}_\alpha(\arbitraryVar)$ if $\arbitraryVar \in \BV(\alpha)$ and $(\pstate{w}_\alpha \merge \pstate{w}_\beta)(\arbitraryVar) = \pstate{w}_\beta(\arbitraryVar)$ if $\arbitraryVar \not\in \BV(\alpha)$.%
\footnote{The alternative condition $\arbitraryVar \in \BV(\beta)$ leads to an equivalent definition when $\pstate{w}_\alpha = \pstate{w}_\beta$ on $(\BV(\alpha) \cup \BV(\beta))^\complement$, which is the case for the final states in parallel composition.}
For program $\alpha$, 
the set $\CN(\alpha) \subseteq \Chan$ consists of all channel names occurring in $\alpha$, 
\iest in send $\send{}{}{}$ and receive $\receive{}{}{}$ statements.
The projection $\trace \downarrow \CN(\alpha)$ is abbreviated as $\trace \downarrow \alpha$.
The \emph{semantic parallel operator} is defined as follows for programs $\alpha, \beta \in \Chp$:
\begin{equation*}
	\sem{\alpha} \parOp \sem{\beta} 
	= \Bigg\{ (\pstate{v}, \trace, \pstate{w}_\alpha \merge \pstate{w}_\beta) \in \pDomain
	\;\bigg\vert\; 
	\begin{aligned}
		&\computation[proj=\alpha] \in \sem{\alpha},
		\computation[proj=\beta] \in \sem{\beta}, \\
		&\statetime{\pstate{w}_\alpha} = \statetime{\pstate{w}_\beta},
		\trace = \trace \downarrow (\alpha \parOp \beta)
	\end{aligned} 
	\Bigg\}
\end{equation*}	

Instead of computing explicit interleavings, 
the parallel operator $\parOp$ characterizes the joint communication $\trace$ implicitly via any order that
the subprograms can agree on.
Thereby $\trace \in \recTraces$ rules out non-chronological ordering of communication events that are exclusive to either $\trace \downarrow \alpha$ or $\trace \downarrow \beta$.
Moreover, by $\trace = \trace \downarrow (\alpha \parOp \beta)$, 
the trace $\trace$ must not contain any junk,
\iest communication not caused by $\alpha$ or~$\beta$. 
Communication along joint channels of $\alpha$ and $\beta$ must agree in its recorder variable, value, and timestamp as it occurs in $\trace \downarrow \alpha$ and $\trace \downarrow \beta$.
By $\statetime{\pstate{w}_\alpha} = \statetime{\pstate{w}_\beta}$ both computations need to meet at the same point in global time.%
\footnote{We consider $\statetime{\pstate{w}_\alpha} = \statetime{\pstate{w}_\beta}$ fulfilled if $\pstate{w}_\alpha = \bot$ or $\pstate{w}_\beta = \bot$.}

\begingroup
\allowdisplaybreaks
\begin{definition}[Program semantics]\label{def:programsemantics}
	The semantics $\sem{\alpha} \subseteq \pDomain$ of a program $\alpha \in \Chp$ is inductively defined as follows,
	where $\pLeast = \states \times \{\epsilon\} \times \{\bot\}$ 
	and~$\vDash$ is the satisfaction relation for formulas (\rref{def:formulaSemantics}):
	\begingroup
	\begin{align*}%
		&\sem{x \ceq \rp}
		= \pLeast \cup \{ (\pstate{v}, \epsilon, \pstate{w}) \mid \pstate{w} = \pstate{v} \subs{x}{\sem{\rp} \pstate{v}} \} \\
		&\sem{x \ceq *} 
		= \pLeast \cup \{ (\pstate{v}, \epsilon, \pstate{w}) \mid \pstate{w} = \pstate{v} \subs{x}{a} \text{ where } a \in \reals \} \\
		&\sem{\test{}} 
		= \pLeast \cup \{ (\pstate{v}, \epsilon, \pstate{v}) \mid \pstate{v} \vDash \chi \} \\
		&\sem{\evolution{}{}} 
		= \pLeast \cup \big\{ (\odeSolution(0), \epsilon, \odeSolution(\duration)) \mid
			 \odeSolution(\zeta) \vDash \globalTime' = 1 \wedge x' = \rp \wedge \chi
			 \text{ and } \\
			&\qquad 
			\odeSolution(\zeta) = \odeSolution(0) \text{ on } \{ x, \globalTime\}^\complement
			\text{ for all } \zeta \in [0, \duration] 
			\text{ and a solution }\\
			&\qquad
			\odeSolution : [0, \duration] \rightarrow \states \text{ with } \odeSolution(\zeta)(\arbitraryVar') = \solutionDerivative{\odeSolution}{\arbitraryVar}(\zeta) \text{ for } \arbitraryVar \in \{x,\globalTime\} \big\} \\
		&\sem{\send{}{}{}}
		= \{ \computation \mid \observable \preceq ( \comItem{\historyVar, \ch{}, \sem{\rp} \pstate{v}, \statetime{\pstate{v}}}, \pstate{v} ) \} \\
		&\sem{\receive{}{}{}} 
		= \{ \computation \mid \observable \preceq ( \comItem{\historyVar, \ch{}, a, \statetime{\pstate{v}}}, \pstate{v} \subs{x}{a} ) \text{ where } a \in \reals \} \\
		&\sem{\alpha \cup \beta} 
		= \sem{\alpha} \cup \sem{\beta}\\
		&\sem{\alpha \seq \beta} 
		= \sem{\alpha} \closedComposition \sem{\beta} = \botop{\sem{\alpha}} \cup (\sem{\alpha} \continuation \sem{\beta})\\
		&\sem{\repetition{\alpha}} 
		= \bigcup_{n \in \naturals} \sem{\alpha}^n 
		= \bigcup_{n \in \naturals} \sem{\alpha^n} \\
		&\sem{\alpha \parOp \beta} 
		= \sem{\alpha} \parOp \sem{\beta}
	\end{align*}
	\endgroup
\end{definition}
\endgroup

In the semantics of continuous evolution,
the solution for the ODE gives meaning to the primed variable $x'$ as in \dL \cite{DBLP:journals/jar/Platzer17}.
By $\globalTime' = 1$,
the global time $\globalTime$ always evolves with slope $1$ with every continuous evolution.

The semantics $\sem{\alpha} \subseteq \pDomain$ is prefix-closed and total (\rref{def:prefixClosedAndTotal}) for every program $\alpha$ (see \rref{app:semantics}).
For atomic non-communicating programs,
$\pLeast = \states \times \{\epsilon\} \times \{\bot\}$ ensures prefix-closedness.
If $\test{}$ or $\evolution{}{}$ abort, $\pLeast$ also guarantees totality.
Keeping the unfinished computations $\botop{\sem{\alpha}}$ preserves prefix-closedness of $\alpha \seq \beta$.

\subsubsection{Formula Semantics}

The semantics of the first-order fragment is as usual.
Like in dynamic logic \cite{Harel1979},
the box $[ \alpha ] \psi$ means that $\psi$ is true after all finished computations,
\iest the final state and communication of $\computation \in \sem{\alpha}$ with $\pstate{w} \neq \bot$. 
The ac-box $[ \alpha ] \ac \psi$ additionally means that 
the communication of (un)finished computations fulfills commitment $\C$.
In our modal treatment of ac-reasoning \cite{AcHoare_Zwiers}, 
assumption $\A$ and program $\alpha$ determine the reachable worlds together,
\iest only computations need to be considered whose incoming communication meets $\A$.

\begin{definition}[Formula semantics]\label{def:formulaSemantics}
	The semantics $\sem{\varphi} \subseteq \states$ of a formula $\varphi \in \Fml$ is defined as \(\sem{\varphi} = \{\pstate{v} \mid \pstate{v} \vDash \varphi\}\) using the \emph{satisfaction relation}~$\vDash$.
	The relation $\vDash$ is defined by induction on the structure of $\varphi$ as follows:
	\begin{enumerate}
		\item $\pstate{v} \vDash \expr_1 {=} \expr_2$ if $\sem{\expr_1} \pstate{v} = \sem{\expr_2} \pstate{v}$. Accordingly, for $\re_1 {\ge} \re_2$, $\ie_1 {\ge} \ie_2$, $\te_1 {\preceq} \te_2$
		\item $\pstate{v} \vDash \varphi \wedge \psi$ if $\pstate{v} \vDash \varphi$ and $\pstate{v} \vDash \psi$. Accordingly, for $\neg, \vee, \rightarrow$
		\item $\pstate{v} \vDash \fa{\arbitraryVar} \varphi$ if $\pstate{v} \subs{\arbitraryVar}{d} \vDash \varphi$ for all $d \in \type(\arbitraryVar)$
		\item $\pstate{v} \vDash \ex{\arbitraryVar} \varphi$ if $\pstate{v} \subs{\arbitraryVar}{d} \vDash \varphi$ for some $d \in \type(\arbitraryVar)$
		\item \label{itm:dynBoxSem}
		$\pstate{v} \vDash [ \alpha ] \psi$ if $\pstate{w} \cdot \trace\vDash \psi$ for all $\computation \in \sem{\alpha}$ with $\pstate{w} \neq \bot$ 
		\item \label{itm:acBoxSem}
		$\pstate{v} \vDash [ \alpha ] \ac \psi$ if for all $\computation \in \sem{\alpha}$ the following conditions hold:
		\begin{align}
			\vspace{-.7em}
			&\proppre{\pstate{v}}{\trace} \vDash \A \text{ implies } \pstate{v} \cdot \trace \vDash \C \tag{commit} \label{eq:commit}\\
			&\big( \preeq{\pstate{v}}{\trace} \vDash \A \text{ and } \pstate{w} \neq \bot \big) \text{ implies } \pstate{w} \cdot \trace \vDash \psi \tag{post} \label{eq:post}
		\end{align}%
		Where $U \vDash \varphi$ for a set of states $U \subseteq \states$ and any formula $\varphi \in \Fml$ if $\pstate{v} \vDash \varphi$ for all $\pstate{v} \in U$. 
		In particular, $\emptyset \vDash \varphi$.
	\end{enumerate}
\end{definition}

In \rref{itm:acBoxSem}, \acCommit is checked after each communication event as desired since \emph{all} communication prefixes are reachable worlds by prefix-closedness of the program semantics $\sem{\alpha}$ (\rref{def:prefixClosedAndTotal}).
Via state-trace concatenation $\pstate{v} \cdot \trace$ and $\pstate{w} \cdot \trace$ in \rref{itm:dynBoxSem} and \rref{itm:acBoxSem},
the communication events recorded in $\trace$ become observable.
This follows the realization that the reachable worlds of a CHP consist of the final state and the communication trace.

\begin{remark} \label{rem:unsoundCommit}
	In \acCommit, assumptions are only available about the communication \emph{strictly} before to prevent unsound circular reasoning \cite{AcHoare_Zwiers, Hooman1992}.
	With a non-strict definition, the formula $y = 0 \rightarrow [ \send{}{}{y} ] \ac \true$, 
	where $\A \equiv \C \equiv \len{\historyVar \downarrow \ch{}} > 0 \rightarrow \val{\historyVar \downarrow \ch{}} = 1$, would get valid.
	Locally, we are aware of the contradiction between the precondition $y = 0$ and what is assumed by $\A$, whereas
	the environment is not and would trust in the promise $\C$.
\end{remark}

\newcommand{\FmldL}{\FmlSet{\dL}{}}

\begin{proposition}[Conservative extension] \label{prop:conservative}
	The logic $\dLCHP$ is a conservative extension of $\dL$. That is, a formula $\varphi \in \Fml \cap \FmldL$ is valid in $\dLCHP$ iff it is valid in $\dL$, where $\FmldL$ is the set of $\dL$ formulas (see \rref{app:semantics}).
\end{proposition}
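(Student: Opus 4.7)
The plan is to establish a semantic coincidence between $\dLCHP$ and $\dL$ on the shared fragment of formulas and programs, and then deduce validity preservation in both directions. First I would introduce a restriction map $\tilde{(\cdot)}$ sending each $\dLCHP$ state $\pstate{v} \in \states$ to a $\dL$ state $\tilde{\pstate{v}}$ by forgetting the values of integer and trace variables as well as the global time $\globalTime$, since $\dL$ states have no such components and $\dL$ formulas do not mention them. This map is surjective onto $\dL$ states: any $\dL$ state arises as $\tilde{\pstate{v}}$ for some $\pstate{v}$ obtained by picking arbitrary values for the extra variables.

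Next I would prove a semantic coincidence lemma for the pure $\dL$ fragment of CHPs by induction on program structure: for every $\dL$ program $\alpha$ (no $\send{}{}{}$, $\receive{}{}{}$, or $\parOp$) and every $\pstate{v} \in \states$, every finishing computation $\computation \in \sem{\alpha}$ with $\pstate{w} \neq \bot$ has $\trace = \epsilon$ and its projection $(\tilde{\pstate{v}}, \tilde{\pstate{w}})$ belongs to the $\dL$ transition relation of $\alpha$; conversely every $\dL$-transition from $\tilde{\pstate{v}}$ lifts to such a $\dLCHP$-transition. Assignment, nondeterministic assignment, test, choice, sequential composition, and repetition are routine because no communication is emitted and the state updates coincide. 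The only nontrivial case is continuous evolution $\evolution{x' = \rp}{\chi}$: the $\dLCHP$ semantics implicitly conjoins $\globalTime' = 1$ to the ODE, but because $\rp$ and $\chi$ do not mention $\globalTime$, that extra equation merely tracks elapsed duration and, using the fact that $\globalTime \not\in \V(\rp)$ implies independence of solutions, the projected reachable states match precisely the $\dL$ solution set.

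Then, by structural induction on $\varphi \in \Fml \cap \FmldL$, I would show $\pstate{v} \vDash_{\dLCHP} \varphi$ iff $\tilde{\pstate{v}} \vDash_{\dL} \varphi$. Arithmetic, propositional, and first-order quantifier cases are direct using the restriction and the fact that $\dL$ quantifiers range over reals only. The box case $[\alpha]\psi$ follows from the coincidence lemma: the set of $\dL$-projections of final states reachable under $\sem{\alpha}$ matches the $\dL$ reachable set (using $\trace = \epsilon$ so that $\pstate{w} \cdot \trace = \pstate{w}$), and prefix-closed unfinished computations are correctly filtered out by the condition $\pstate{w} \neq \bot$ in \rref{itm:dynBoxSem}. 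No case for $[\alpha]\ac\psi$ arises since ac-boxes do not appear in $\dL$ formulas. Validity in $\dLCHP$, meaning $\pstate{v} \vDash_{\dLCHP} \varphi$ for all $\pstate{v} \in \states$, is then equivalent via the surjective restriction to $\dL$-validity $\tilde{\pstate{v}} \vDash_{\dL} \varphi$ for all $\dL$ states.

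The main obstacle is the continuous-evolution case, where I must carefully argue that the extra $\globalTime' = 1$ dimension does not alter the trajectories visible to the $\dL$ state variables: specifically, that every $\dL$ solution on $[0,\duration]$ extends to a $\dLCHP$ solution by letting $\globalTime$ flow at unit rate from $\pstate{v}(\globalTime)$, and conversely that the domain-constraint $\chi$ being $\globalTime$-free ensures a $\dLCHP$ solution's projection still satisfies $\chi$ throughout. A pervasive but benign subtlety is that $\sem{\alpha}$ is prefix-closed and contains many $(\pstate{v},\epsilon,\bot)$ tuples absent from the $\dL$ semantics; those are precisely the observations silenced by $\pstate{w} \neq \bot$ in the $\dLCHP$ box semantics, so they do not affect the proof.
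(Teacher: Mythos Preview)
Your overall strategy—define a restriction from $\dLCHP$ states to $\dL$ states, establish a program-semantics coincidence lemma, then induct on formula structure—is exactly the paper's approach (\rref{lem:conservativeProgramParts} and \rref{lem:conservativeProgSemantics}, followed by the induction on $\varphi$).

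There is, however, a concrete error in your restriction map. You propose to forget $\globalTime$ along with the integer and trace variables, on the grounds that $\dL$ states have no such component and $\dL$ formulas do not mention it. In the paper's setup this is false: $\globalTime \in \RVar$ by definition, so $\globalTime$ is an ordinary real variable available in $\dL$ terms $\polynoms{\rationals}{\RVar}$, in evolution constraints $\chi \in \FolRA$, in right-hand sides $\rp$, and in $\dL$ formulas generally. Consequently your downstream claims that $\rp$ and $\chi$ do not mention $\globalTime$ and that $\globalTime \notin \V(\rp)$ are unwarranted, and with your restriction the formula-coincidence step already fails for an atomic $\dL$ formula such as $\globalTime > 0$, since your $\tilde{\pstate{v}}$ carries no value for $\globalTime$. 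The paper instead restricts only to $\RVar$ (writing $\restrict{\pstate{v}}{\RVar}$), keeping $\globalTime$; this is what makes the term and atomic-formula cases of the induction go through. You are right that the continuous-evolution case is where the implicit $\globalTime' = 1$ creates the delicate point, but projecting $\globalTime$ out is not a viable fix given that $\dL$ formulas may depend on it.
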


\subsection{Calculus}

This section develops a sound (see \rref{thm:soundness}) proof calculus for \dLCHP, summarized in \rref{fig:calculus} on page~\pageref{fig:calculus}.
In \rref{fig:derived} on page \pageref{fig:derived}, we provide common derived rules.
Since $\dLCHP$ is a conservative extension of $\dL$ (\rref{prop:conservative}), the entire $\dL$ sequent calculus 
\ifblind
\cite{DBLP:journals/jar/Platzer17,DBLP:journals/jacm/PlatzerT20} 
\else
\cite{DBLP:journals/jar/Platzer17,Platzer18,DBLP:journals/jacm/PlatzerT20}
\fi
can be used soundly for reasoning about $\dLCHP$ formulas. 
A \emph{sequent} $\Gamma \vdash \Delta$ with finite lists of formulas $\Gamma$, $\Delta$ is short for $\bigwedge_{\varphi \in \Gamma} \varphi \rightarrow \bigvee_{\psi \in \Delta} \psi$.

Each program statement is axiomatized by a dynamic box $[\cdot] \psi$ or an ac-box $[\cdot] \ac \psi$.
Axioms \RuleName{acNoCom} and \RuleName{boxesDual} for switching between dynamic and ac-boxes mediate between them.
The dynamic axioms are as usual in differential \cite{DBLP:journals/jar/Platzer17} dynamic logic \cite{Harel1979}.
The ac-axioms re-express Hoare-style ac-reasoning \cite{AcHoare_Zwiers} as a dynamic logic.
However, we design more atomic axioms for parallel composition and communication from which the proof rule \RuleName{acParCompRight} for parallel composition and proof rules for communication derive.

Noninterference (\rref{def:noninterference}) identifies valid instances of the formula $[ \alpha ] \ac \psi \rightarrow [ \alpha \parOp \beta ] \ac \psi$,
which we introduce as axiom \RuleName{acDropComp} in \rref{fig:calculus}.
For formula~$\chi$, the accessed channels $\CN(\chi) \subseteq \Chan$ are those channels whose communication may influence the truth value of $\chi$, \eg $\ch{}$ in $\len{\historyVar \downarrow \ch{}} > 0$.
That is, $\CN(\chi)$ plays a similar role for the communication traces $\restrict{\pstate{v}}{\TVar}$ (the state restricted to $\TVar$) with $\pstate{v} \in \states$ as $\FV(\chi)$ does for the overall state $\pstate{v}$.
For program $\alpha$, the set $\CN(\alpha)$ denotes the communication channels used.

\begin{definition}[Noninterference] \label{def:noninterference}
	Given an ac-box $[ \alpha \parOp \beta ] \ac \psi$ the CHP $\beta$ \emph{does not interfere} with its surrounding contract if the following conditions hold:%
	\footnote{The definition only restricts $\beta$'s influence on formulas $\A$, $\C$, $\psi$ but not on program~$\alpha$ because in parallel composition $\alpha \parOp \beta$, the subprograms must not share state anyway.}
	\begin{align}
		&\FV(\psi) \cap \BV(\beta) \subseteq \{ \globalTime \} \cup \TVar \label{eq:noninterference_0}\\
		&\FV(\chi) \cap \BV(\beta) \subseteq \TVar \sidecondition{for $\chi \in \{\A, \C\}$} \label{eq:noninterference_1}\\
		&\CN(\chi) \cap \CN(\beta) \subseteq \CN(\alpha) \sidecondition{for $\chi \in \{ \A, \C, \psi \}$} \label{eq:noninterference_2}
	\end{align}%
\end{definition}

Clearly, state variables bound by $\beta$ and free in $\chi \in \{ \A, \C, \psi \}$ would influence~$\chi$'s truth in $[ \alpha \parOp \beta ] \ac \psi$.
But \rref{eq:noninterference_0} and \rref{eq:noninterference_1} do not capture trace variables $\TVar$ since they are also $\alpha$'s interface with communication that might be joint with~$\beta$.
However, \rref{eq:noninterference_2} restricts access to trace variables in~$\chi$ to those 
channels whose communication can be observed either exclusively from $\alpha$ or as joint communication between $\alpha$ and $\beta$,
thus prevents influence of~$\beta$ on~$\chi$ beyond what is already caused by $\alpha$.
Still \rref{def:noninterference} allows full access to~$\alpha$'s communication including the joint communication with $\beta$.

\subsubsection{Dynamic ac-reasoning}

In Hoare-style ac-reasoning~\cite{AcHoare_Zwiers}, 
a distinguished history variable records communication globally.
Assuming that $\historyVar$ is such a variable in $\dLCHP$,
a tempting but \emph{wrong} axiomatization of the send statement would be
\begin{equation*} \tag{$\lightning$} \label{eq:wrongComAxiom}
	[ \send{}{non}{} ] \psi(\historyVar) \leftrightarrow \fa{\historyVar_0} \big( \historyVar_0 = \historyVar \cdot \comItem{\ch{}, \rp, \globalTime} \rightarrow \psi(\historyVar_0) \big) \text{.}
\end{equation*}
Applying it to 
\begin{equation}
	\label{eq:toexecute}
	\vdash [ \send{\ch{}_1}{non}{\rp_1} ][ \send{\ch{}_2}{non}{\rp_2} ] \psi(\historyVar)
\end{equation}
results in $\historyVar_0 = \historyVar \cdot \comItem{\ch{}_1, \rp_1, \globalTime} \vdash [ \send{\ch{}_2}{non}{\rp_2} ] \psi(\historyVar_0)$.
After this step, 
the ongoing history is $\historyVar_0$.
However, another application leads to
$\historyVar_0 = \historyVar \cdot \comItem{\ch{}_1, \rp_1, \globalTime}, \historyVar_1 = \historyVar \cdot \comItem{\ch{}_2, \rp_2, \globalTime} \vdash \psi(\historyVar_0)$.
Incorrectly, communication is appended to $\historyVar$ again and $\psi(\historyVar_0)$ still refers to $\historyVar_0$.
Problematically, the substitution $( [ \send{\ch{}_2}{non}{\rp_2} ] \psi(\historyVar) ) \synsubs{\historyVar}{\historyVar_0}$ during the first application does not guide $\ch{}_2 ! \rp_2$ to append its communication to $\historyVar_0$.
Without $\historyVar$ occurring syntactically but being free in $\send{\ch{}_2}{non}{\rp_2}$ the substitution does not even have a meaningful definition.
For a similar reason, axiom  
\begin{equation*}
	[ \send{}{non}{} ] \psi(\historyVar) \leftrightarrow \psi(\historyVar \cdot \comItem{\ch{}, \rp, \globalTime}) \tag{$\lightning$}
\end{equation*}
is unsound as applying it twice to \rref{eq:toexecute} leads to $\vdash \psi(\historyVar \cdot \comItem{\ch{}_2, \rp_2, \globalTime} \cdot \comItem{\ch{}_1, \rp_1, \globalTime})$ with the communication items in wrong order.
Here the axiom is not able to append the second item at the right position because there is no symbolic name for the state $\historyVar \cdot \comItem{\ch{}_1, \rp_1, \globalTime}$ of history after the first application.

To enable symbolic execution,
we drop the assumption of a distinguished history variable and annotate each communication statement $\send{}{}{}$ and $\receive{}{}{}$ with an explicit recorder variable $\historyVar$.
Now, substitution $\alpha \synsubs{\historyVar}{\historyVar_0}$ is defined easily as $\send{}{\historyVar_0}{}$ for $\alpha \equiv \send{}{}{}$, and as $\receive{}{\historyVar_0}{}$ for $\alpha \equiv \receive{}{}{}$, and as $\alpha$ for other atomic programs, and by recursive application otherwise.

\begin{figure}[h!tb]
	\begin{small}
		\begin{minipage}{\textwidth}
			\begin{calculus}
				\startAxiom{assign}
					$[ x \ceq \rp] \psi(x) \leftrightarrow \psi(\rp)$
				\stopAxiom
				\startAxiom{nondetAssign}
					$[ x \ceq *] \psi \leftrightarrow \fa{x} \psi$
				\stopAxiom
				\startAxiom{test}
					$[ \test{} ] \psi \leftrightarrow (\chi \rightarrow \psi)$
				\stopAxiom
				\startAxiom{boxesDual}
					$[ \alpha ] \psi \leftrightarrow [ \alpha ] \acpair{\true, \true} \psi$
				\stopAxiom
			\end{calculus}%
			\hspace*{1em}%
			\begin{calculus}
				\startAxiom{acComposition}
					$[\alpha \seq \beta] \ac \psi \leftrightarrow [\alpha] \ac [\beta] \ac \psi$
				\stopAxiom
				\startAxiom{acChoice}
					$[\alpha \cup \beta] \ac \psi \leftrightarrow [\alpha] \ac \psi \wedge [\beta] \ac \psi$
				\stopAxiom
				\startAxiom{acIteration}
					$[ \repetition{\alpha }] \ac \psi \leftrightarrow [\alpha^0] \ac \psi \wedge [\alpha] \ac [ \repetition{\alpha} ] \ac \psi$%
					\footnote{\label{ft:ind-base}Note that $[\alpha^0] \ac \psi \leftrightarrow \C \wedge (\A \rightarrow \psi)$ by \RuleName{acNoCom} and \RuleName{test} since $\alpha^0 \equiv \test{\true}$.}
				\stopAxiom
				\startAxiom{acInduction}
					$[ \repetition{\alpha} ] \ac \psi \leftrightarrow [\alpha^0] \ac \psi \wedge [ \repetition{\alpha} ] \acpair{\A, \true} (\psi \rightarrow [\alpha] \ac \psi)$%
					\footnoteref{ft:ind-base}%
				\stopAxiom
			\end{calculus}
				
			\begin{calculus}
				\startAxiom{gtime}
					$[\evolution{x' = \rp}{\chi}] \psi \leftrightarrow [\evolution{\globalTime' = 1, x' = \rp}{\chi}] \psi$%
				\stopAxiom
				\startAxiom{solution}
					$[x' = \rp(x)] \psi(x) \leftrightarrow \fa{t {\ge} 0} [ x \ceq y(t) ] \psi(x)%
					\sidecondition{$y'(t) = \rp(y)$ and $\globalTime \in x$}$%
					\footnote{Conservative extension only applies if $\globalTime' = 1$ is part of $x' = \rp$, which holds if $\globalTime \in x$ since every right-hand side for $\globalTime$ other than $1$ in an evolution is considered ill-formed.}
				\stopAxiom
				\startAxiom{send}
					$[ \send{}{}{} ] \psi(\historyVar) 
					\leftrightarrow \fa{\historyVar_0} 
						\big(
							\historyVar_0 = \historyVar \cdot \comItem{\ch{}, \rp, \globalTime} \rightarrow \psi(\historyVar_0)
						\big)$
					\sidecondition{$\historyVar_0$ fresh}
				\stopAxiom
				\startAxiom{acCom}
					$[ \send{}{}{} ] \ac \psi \leftrightarrow \C \wedge \Big( \A \rightarrow [ \send{}{}{} ] \big( \C \wedge (\A \rightarrow \psi ) \big)\Big)$
				\stopAxiom 
				\startAxiom{comDual}
					$[ \receive{}{}{} ] \ac \psi \leftrightarrow [ x \ceq * ] [ \send{}{}{x} ] \ac \psi$
				\stopAxiom
				\startAxiom{K}
					$(\universal \K) \wedge [ \alpha ] \acpair{\A_1 \wedge \A_2, \C_1 \wedge \C_2} \psi \rightarrow [ \alpha ] \acpair{\A, \C_1 \wedge \C_2} \psi$%
					\footnote{$\K$ is the compositionality condition $\Kexpanded$.
					The universal closure $\universal \varphi$ of $\varphi$ is defined by $\fa{\arbitraryVar_1, ..., \arbitraryVar_n} \varphi$, where $\FV(\varphi) = \{ \arbitraryVar_1, ..., \arbitraryVar_n \}$.}%
					\footnoteref{ft:well-formed}%
				\stopAxiom	
				\startAxiom{acDropComp}
					$[ \alpha ] \ac \psi \rightarrow [ \alpha \parOp \beta ] \ac \psi$%
					\sidecondition{$\beta$ does not interfere with $[ \alpha ] \ac \psi$ (\rref{def:noninterference})}
				\stopAxiom
			\end{calculus}
			
			\begin{calculus}
				\startAxiom{acNoCom}
					$[ \alpha ] \ac \psi \leftrightarrow \C \wedge (\A \rightarrow [ \alpha ] \psi)$%
					\sidecondition{$\CN(\alpha) = \emptyset$}%
					\footnote{\label{ft:well-formed}Care must be taken, for example, when \RuleName{acNoCom} is applied from right to left, that resulting ac-boxes are well-formed, \iest $(\FV(\A) \cup \FV(\C)) \cap \BV(\alpha) \subseteq \TVar$ for $[ \alpha ] \ac \psi$.}
				\stopAxiom
				\startAxiom{acWeak}
					$[ \alpha ] \ac \psi \leftrightarrow \C \wedge [ \alpha ] \ac ( \C \wedge ( \A \rightarrow \psi ) )$
				\stopAxiom

				\startAxiom{acBoxesDist}
					$[\alpha] \acpair{\A, \C_1 \wedge \C_2} (\psi_1 \!\wedge\! \psi_2) \leftrightarrow \bigwedge_{j=1}^2 [\alpha] \acpair{\A, \C_j} \psi_j$
				\stopAxiom
			\end{calculus}
			\begin{calculus}
				\startRule{acMono}
					\Axiom{$\A_2 \rightarrow \A_1$}
					\Axiom{$\C_1 \rightarrow \C_2$}
					\Axiom{$\psi_1 \rightarrow \psi_2$}
					\TrinaryInf{$[ \alpha ] \acj{1} \psi_1 \rightarrow [ \alpha ] \acj{2} \psi_2$}
				\stopRule
				\startRule{acG}
					\Axiom{$\C \wedge \psi$}
					\UnaryInf{$[ \alpha ] \ac \psi$}
				\stopRule
			\end{calculus}
		\end{minipage}
	\end{small}
	\vspace{-.2em}
	\caption{\dLCHP proof calculus}
	\label{fig:calculus}
	\vspace{-1em}
\end{figure}

\begin{figure}[h!tb]
	\vspace{.5em}
	\begin{small}
		\begin{minipage}{\textwidth}
			\begin{calculus}
				\startRule{CG}
					\Axiom{$\Gamma, \historyVar_0 = \historyVar \cdot \comItem{\ch{}, \rp, \globalTime} \vdash [ \alpha(\historyVar_0) ] \psi, \Delta$}

					\RightLabel{\sidecondition{$\historyVar \not \in \FV(\psi)$ and $\historyVar_0$ fresh}}
					\UnaryInf{$\Gamma, \vdash [ \alpha(\historyVar) ] \psi, \Delta$}
				\stopRule
			\end{calculus}
			
			\begin{calculus}
				\startRule{acLoop}
					\Axiom{$\Gamma \vdash \C \wedge \inv, \Delta$}
					\Axiom{$\C, \inv \vdash [ \alpha ] \ac \inv$}
					\Axiom{$\A, \C, \inv \vdash \psi$}
					\TrinaryInf{$\Gamma \vdash [ \repetition{\alpha} ] \ac \psi, \Delta$}
				\stopRule
			\end{calculus}
			\begin{calculus}
				\startRule{if}
					\Axiom{$\Gamma, \varphi \vdash [ \alpha ] \psi, \Delta$}

					\Axiom{$\Gamma, \neg \varphi \vdash \psi, \Delta$}

					\BinaryInf{$\Gamma \vdash [ \ifstat{\varphi}{\{\alpha\}} ] \psi, \Delta$}
				\stopRule
			\end{calculus}

			\begin{calculus}
				\startRule{acParCompRight}
					\Axiom{$\vdash \K$}

					\Axiom{$\Gamma \vdash [ \alpha_j ] \acj{j} \psi_j, \Delta$ \sidecondition{$j = 1, 2$}}

					\RightLabel{%
						\parbox{.45\textwidth}{\;\;\textnormal{\textcolor{gray}{(%
							$\alpha_{3-j}$ does not interfere with \\ 
							\hspace*{2em}$[ \alpha_j ] \acj{j} \psi_j$ for $j = 1, 2$ (\rref{def:noninterference})%
						)}%
						\footnote{$\K$ is the compositionality condition $\Kexpanded$.}%
						}}%
					}
					\BinaryInf{$\Gamma \vdash [ \alpha_1 \parOp \alpha_2  ] \acpair{\A, \C_1 \wedge \C_2} (\C_1 \wedge \C_2), \Delta$}
				\stopRule

				\startRule{acSendRight}
					\Axiom{$\Gamma \vdash \C(\historyVar), \Delta$}

					\Axiom{$\Gamma, H_0 \vdash \C(\historyVar_0), \Delta$}

					\Axiom{$\Gamma, H_0, \A(\historyVar_0) \vdash \psi(\historyVar_0), \Delta$}

					\RightLabel{%
						\sidecondition{$\historyVar_0$ fresh}%
						\footnote{Formula $H_0$ is short for $\historyVar_0 = \historyVar \cdot \comItem{\ch{}, \rp, \globalTime}$ recording the global time $\globalTime$ as timestamp.}%
					}
					\TrinaryInf{$\Gamma \vdash [ \send{}{}{} ] \acpair{\A(\historyVar), \C(\historyVar)} \psi(\historyVar), \Delta$}
				\stopRule
			\end{calculus}
		\end{minipage}
	\end{small}
	\caption{
		Derived \dLCHP proof rules
	}
	\label{fig:derived}
	\vspace{-1em}
\end{figure}

\subsubsection{Atomic Hybrid Programs}

For an (atomic) non-communicating program $\alpha$, 
we can flatten $[ \alpha ] \ac \psi$ by axiom \RuleName{acNoCom}to a dynamic formula because $\A$ and $\C$ only refer to the initial state when $\CN(\alpha) = \emptyset$.
Subsequently, we can execute the program by its dynamic axiom (\RuleName{assign}, \RuleName{nondetAssign}, \RuleName{test}, \RuleName{solution}).
Note that by conservative extension (\rref{prop:conservative}) axiom \RuleName{solution} only applies to $[ \evolution{}{} ] \psi$ if the ODE $x' = \rp$ matches the underlying semantics,
\iest if $\globalTime \in x$, 
which has right-hand side $1$ for well-formed continuous evolution $x' = \rp$.
Therefore, axiom \RuleName{gtime} allows to materialize the flow of global time $\globalTime$ as evolution $\globalTime' = 1$ whenever necessary.

\subsubsection{Compound Hybrid Programs} 

Compound CHPs in an ac-box cannot be handled by axiom \RuleName{acNoCom} if they communicate.
Instead, they have the axioms \RuleName{acComposition}, \RuleName{acChoice}, \RuleName{acIteration}, and \RuleName{acInduction}.
Compound programs in a dynamic box can be repackaged into an ac-box using axiom \RuleName{boxesDual}.
The ac-induction axiom \RuleName{acInduction} carefully generalizes that of dynamic logic \cite{Harel1979}.
Importantly, the reachable worlds,
where the induction step needs to hold,
only depend on program $\alpha$ and assumption $\A$ about incoming communication,
whereas commitment $\C$ is proven inductively.
In \rref{fig:derived}, we give the useful derived loop invariant proof rule \RuleName{acLoop}.
As usual, it derives from axiom \RuleName{acInduction} and the ac-version \RuleName{acG} of the Gödel-generalization rule,
which confirms that the embedding into dynamic logic is correct.

If we were to neglect the communication of aborting runs in the semantics of $\alpha \seq \beta$, axiom \RuleName{acComposition} would not be sound.
Proving $[ \alpha ] \ac [ \beta ] \ac \psi$ requires to show the commitment $\C$ after each communication of $\alpha$ even if $\beta$ aborts. 
To obtain this from $[ \alpha \seq \beta ] \ac \psi$ the semantics $\sem{\alpha \seq \beta} = \botop{\sem{\alpha}} \cup (\sem{\alpha} \continuation \sem{\beta})$ contains all respective runs of $\alpha$ up to $\beta$ with $\botop{\sem{\alpha}}$.

\subsubsection{Communication Statements}

Axiom \RuleName{acCom} unfolds \acCommit for an ac-box of a single send statement into a dynamic box.
The effect on the recorder variable~$\historyVar$ of executing sending $\send{}{}{}$ is captured by axiom \RuleName{send}.
It records the event $\comItem{\ch{}, \rp, \globalTime}$ using the current global time $\globalTime$ as timestamp and renames the history in the postcondition for subsequent proof steps.
Axiom \RuleName{comDual} allows to execute a receive statement by its duality with send.
Derived rule \RuleName{acSendRight} combines \RuleName{acNoCom} and \RuleName{send}, and decomposes the statement into two premises for \acCommit and one for \acPost.
Derived rule \RuleName{CG} is useful as it eliminates the need for case distinction about empty history by prefixing the history with additional ghost communication.

\subsubsection{Parallel Composition}

A non-interfering program $\beta$ (\rref{def:noninterference}) can be dropped from parallel composition $[ \alpha \parOp \beta ] \ac \psi$ by axiom \RuleName{acDropComp} because it has no influence on the surrounding contract $[ \alpha \parOp \!\raisebox{.1em}{\underline{\hspace{.5em}}}\, ] \ac \psi$.
Since $\parOp$ is associative and commutative (see \rref{app:calculus}),
axiom \RuleName{acDropComp} can drop any subprogram in a chain of parallel statements. 
In parallel composition of $[ \alpha_j ] \acj{j} \psi_j$ for $j = 1, 2$, 
the commitments mutually contribute to the assumptions. 
This can weaken the assumption of $\alpha_1 \parOp \alpha_2$ about its environment to $\A$ by axiom \RuleName{K} if the compositionality condition $\K \equiv \Kexpanded$ is valid.

The derived rule \RuleName{acParCompRight} combines axioms \RuleName{acDropComp} and \RuleName{K} for full decomposition of parallelism.
Reasoning about a parallel $[ \alpha_1 \parOp \alpha_2 ] \ac \psi$
with arbitrary $\A$, $\C$, and $\psi$ is the task of constructing $\A_j$, $\C_j$, and $\psi_j$ for $j = 1, 2$ such that $\C_1 \wedge \C_2 \rightarrow \C$, and $\psi_1 \wedge \psi_2 \rightarrow \psi$ are valid, and such that \RuleName{acParCompRight} is applicable.
Since the side condition of \RuleName{acParCompRight} about noninterference still allows~$\C_j$ and~$\psi_j$ for $j = 1, 2$ to access $\alpha_j$'s communication including the joint communication with~$\alpha_{3-j}$, the formulas can cover the complete communication of $\alpha_1 \parOp \alpha_2$.

\subsubsection{Miscellaneous} 
Ac-boxes distribute over conjunctions 
by axiom \RuleName{acBoxesDist} except for assumptions 
just like preconditions $\varphi_j$ do not distribute in $\varphi_1 \wedge \varphi_2 \rightarrow [ \alpha ] \psi$. 
Rule \RuleName{acMono} generalizes monotonicity from dynamic to ac-boxes.
Ac-weakening \RuleName{acWeak} exploits totality of the program semantics $\sem{\alpha}$ to add or drop $\C$ in the initial state.
Moreover, adding or dropping $\C$ and $\A \rightarrow \psi$ by \RuleName{acWeak} in the final state is due to \acCommit and \acPost, respectively.
Rule \RuleName{acG} is the ac-version of the G\"odel-generalization rule.

First-order formulas $\FolPA$ over $\Itrm$ without length computations $\len{\te}$, can be handled by an effective oracle proof rule (called \RuleName{PA}) since Presburger arithmetic is decidable \cite{Presburger1931}. 
Likewise, first-order real arithmetic $\FolRA$ is decidable \cite{Tarski1951}, and we use an oracle rule for it (called \RuleName{real}) as in \dL \cite{DBLP:journals/jar/Platzer08}.
However, the full first-order fragment $\Fol$ of $\dLCHP$ is not decidable because of alternating quantifiers of trace and integer variables \cite{Bradley2006}.

Instead, reasoning about trace terms is by simple algebraic laws for successive simplification (see \rref{app:calculus}) \cite{Zwiers_Phd}.
For applicability of rules \RuleName{real} and \RuleName{PA} trace subterms can be rewritten with fresh variables.
For example, $\val{\te_1[\ie]} < \val{\te_2[\ie]} \rightarrow \val{\te_1[\ie]} + \rp < \val{\te_2[\ie]} + \rp$ is valid since $x < y \rightarrow x + \rp < y + \rp$ is valid in $\FolRA$.
Ultimately, we use \RuleName{PA} and \RuleName{real} modulo trace terms, \iest perform rewritings silently.
\\

Our central contribution is the soundness theorem about the compositional \dLCHP proof calculus.
The proof rules given in \rref{fig:derived} derive (see \rref{app:calculus}).

\begin{theorem} \label{thm:soundness}
	The $\dLCHP$ calculus (see \rref{fig:calculus}) is sound (see \rref{app:calculus}).
\end{theorem}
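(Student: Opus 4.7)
The plan is to prove soundness by induction on the derivation, reducing the obligation to (i)~validity of every axiom in \rref{fig:calculus} and (ii)~preservation of validity by every proof rule. I would group the cases by the program construct they axiomatize and postpone the parallel composition axioms, since those are the most delicate.

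As a warm-up I would dispatch the dL-inherited axioms \RuleName{assign}, \RuleName{nondetAssign}, \RuleName{test}, \RuleName{solution}, and \RuleName{gtime}: each is a purely dynamic formula, so by the conservative extension \rref{prop:conservative} its validity in $\dLCHP$ reduces to validity in $\dL$. The bridging axioms \RuleName{boxesDual}, \RuleName{acNoCom}, \RuleName{acWeak}, the distributivity axiom \RuleName{acBoxesDist}, and the rules \RuleName{acMono} and \RuleName{acG} then follow by directly unfolding \rref{def:formulaSemantics}. Totality of $\sem{\alpha}$ (\rref{def:prefixClosedAndTotal}) plays an essential role here, because the element $\leastComputation \in \pLeast$ forces the Commit clause to be checked on the initial state whenever $\A$ holds, which is what produces the outer $\C$-conjunct in \RuleName{acWeak} and the forward direction of \RuleName{acNoCom}.

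Next I would verify the ac-axioms for compound programs and for communication. Soundness of \RuleName{acComposition} turns on the splitting $\sem{\alpha \seq \beta} = \botop{\sem{\alpha}} \cup (\sem{\alpha} \continuation \sem{\beta})$: the $\botop{\sem{\alpha}}$ component is precisely what ensures that Commit obligations for communication produced by $\alpha$ are discharged even when $\beta$ subsequently aborts. \RuleName{acIteration} and \RuleName{acInduction} follow by induction on $n$ in $\sem{\alpha}^n$ together with \RuleName{acComposition}; for \RuleName{acInduction} the subtlety is that the inner ac-box carries the neutral commitment $\true$, so that $\C$ is proved inductively via the postcondition. For communication, \RuleName{send} is read off from \rref{def:programsemantics}, which appends $\comItem{\historyVar, \ch{}, \sem{\rp}\pstate{v}, \statetime{\pstate{v}}}$ to the recorder variable $\historyVar$, with the fresh $\historyVar_0$ naming the updated history. \RuleName{acCom} is then obtained by decomposing the Commit clause into a check at the empty prefix (outer $\C$) and a check after the event (inner $\C$), together with the Post clause yielding the inner $\A \rightarrow \psi$. \RuleName{comDual} follows from the semantic equivalence of $\receive{}{}{}$ with $x \ceq * \seq \send{}{x}{}$.

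The main obstacle will be the parallel-composition axioms \RuleName{acDropComp} and \RuleName{K}, together with the justification that \rref{def:noninterference} genuinely insulates $\beta$ from the surrounding contract. For \RuleName{acDropComp} I would take an arbitrary $\computation \in \sem{\alpha \parOp \beta}$, exhibit the projection $\computation[proj=\alpha] \in \sem{\alpha}$ furnished by the definition of $\parOp$, and then prove a coincidence lemma: whenever $\chi \in \{\A, \C, \psi\}$ satisfies the noninterference conditions, the evaluation of $\chi$ on the merged state/trace $\pstate{v} \cdot \trace$ agrees with its evaluation on the projected $\pstate{v} \cdot (\trace \downarrow \alpha)$. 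Condition \rref{eq:noninterference_0} controls how $\BV(\beta)$ affects $\FV(\psi)$ (the only shared bound variables, $\{\globalTime\} \cup \TVar$, are forced to agree by the merge and by the synchronization $\statetime{\pstate{w}_\alpha} = \statetime{\pstate{w}_\beta}$), \rref{eq:noninterference_1} does the same for $\A, \C$, and \rref{eq:noninterference_2} guarantees that any channel of $\beta$ on which $\chi$ depends is already a channel of $\alpha$, so restricting the trace to $\CN(\alpha)$ preserves the fragment of $\trace$ that $\chi$ actually examines. Coordinating the static semantics of trace variables and channel accesses with the dynamic semantics of $\parOp$ is where the main technical work lies. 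Soundness of \RuleName{K} is then essentially propositional once evaluation is aligned: the universal closure $\universal \K$ forces $\K$ to hold uniformly along every reachable state and history, so that Commit obligations for $\A_1 \wedge \A_2$ can be locally upgraded from Commit obligations for $\A$ after each event. The derived rules in \rref{fig:derived} are finally obtained by routine sequent construction from the primitive axioms (\RuleName{acLoop} from \RuleName{acInduction} and \RuleName{acG}, \RuleName{acParCompRight} from \RuleName{acDropComp} with \RuleName{K}, \RuleName{acSendRight} from \RuleName{acCom} with \RuleName{send}, and \RuleName{CG} from \RuleName{send} by elimination of a fresh existential).
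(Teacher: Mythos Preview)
Your outline matches the paper's approach closely: axiom-by-axiom soundness, with your coincidence lemma for \RuleName{acDropComp} corresponding to the paper's \rref{lem:noninterference}, and the decomposition $\sem{\alpha\seq\beta}=\botop{\sem{\alpha}}\cup(\sem{\alpha}\continuation\sem{\beta})$ driving \RuleName{acComposition} just as you describe.

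The one place where you underestimate the work is \RuleName{K}. Calling it ``essentially propositional'' is misleading. To pass from $[\alpha]\acpair{\A_1\wedge\A_2,\C_1\wedge\C_2}\psi$ to $[\alpha]\acpair{\A,\C_1\wedge\C_2}\psi$ you must show that whenever all proper prefixes of $\trace$ satisfy $\A$, they also satisfy $\A_1\wedge\A_2$. The condition $\K\equiv\A\wedge(\C_1\wedge\C_2)\rightarrow\A_1\wedge\A_2$ only delivers $\A_1\wedge\A_2$ at a prefix $\trace'$ once you already have $\C_1\wedge\C_2$ there, and $\C_1\wedge\C_2$ at $\trace'$ follows from the premise box only if $\A_1\wedge\A_2$ already holds at all \emph{proper} prefixes of $\trace'$. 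This mutual dependence is exactly the assume--commit circularity of \rref{rem:unsoundCommit}; it is broken by the strict-prefix requirement in \acCommit, but only through a well-founded induction on $|\trace|$, which is how the paper's proof is actually structured (simultaneously establishing the box and the implication $\proppre{\pstate{v}}{\trace}\vDash\A\Rightarrow\proppre{\pstate{v}}{\trace}\vDash\A_1\wedge\A_2$). Your phrase ``after each event'' gestures at this, but the induction is the substance of the argument, not bookkeeping.

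A smaller omission: in \RuleName{acComposition}, \RuleName{acInduction}, and \RuleName{acCom} you will repeatedly need that $\A$ and $\C$ evaluate the same at the initial state $\pstate{v}$ and at any intermediate state $\pstate{u}$ reached along a run of $\alpha$. This holds because the well-formedness side condition $(\FV(\A)\cup\FV(\C))\cap\BV(\alpha)\subseteq\TVar$ together with the bound-effect property (\rref{lem:boundEffect}) forces $\pstate{v}=\pstate{u}$ on $\FV(\A)\cup\FV(\C)$; the paper isolates this as \rref{lem:ac_invariance}. Without it, the trace-splitting arguments (e.g.\ moving $\A$ from $\pstate{v}\cdot\trace_1$ to $\pstate{u}\cdot\trace_1$ when gluing $\alpha$ and $\beta$) do not go through.

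Finally, a minor point on the $\dL$-inherited axioms: \rref{prop:conservative} concerns validity of closed $\dL$ formulas, whereas \RuleName{assign} etc.\ are schemas in which $\psi$ may be an arbitrary $\dLCHP$ formula (including ac-boxes), so the instance need not lie in $\FmlSet{\dL}{}$. The paper, like you, appeals to conservativity here, but what is really being used is that the \emph{semantic} soundness arguments for these axioms carry over verbatim because the program semantics of non-communicating atoms coincides with $\dL$'s (\rref{lem:conservativeProgSemantics}) and the arguments are uniform in $\psi$.
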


\section{Demonstration of \dLCHP} \label{sec:example}

We demonstrate our calculus outlining a proof of the safety condition from \rref{ex:convoy_safety} on page \pageref{ex:convoy_safety} about the convoy in \rref{fig:followerLeader} on page \pageref{fig:followerLeader}.
After decomposing the parallel statement, the proof proceeds purely mechanical by statement-by-statement symbolic execution.
It starts in \rref{fig:pardecomp} using a standard pattern for decomposing a parallel statement:
First, introduce commitments and postconditions (see \rref{fig:postconditions}) by axiom \RuleName{boxesDual} and rule \RuleName{acMono}, which relate the subprograms, such that second, rule \RuleName{acParCompRight} becomes applicable.
The latter also makes the commitments mutual assumptions.

The formulas in \rref{fig:postconditions} relating the subprograms clearly reduce their complex interior and only depend on the small communication interface and local variables,
thus are independent of any knowledge about the internal structure of the respective other car.
Adding initial ghost communication (rule \RuleName{CG}) exploits the flexibility of explicit history variables and dynamic logic to avoid cumbersome case distinction about empty history.

\begin{figure}[ht]
	\vspace*{-1em}
	\begin{align}
		\!\!\varphi \equiv\;
			& \epsilon \ge 0 \wedge
			\range{0}{v_f}{\safevelo{d}} \wedge
			v_f \le \maxvelo \wedge
			x_f + d < x_l \tag{precond.\ convoy} \\
		\!\!\psi_f \equiv\;
			& 0 \le \safevelo{d} \wedge
			x_f + (\epsilon - \timespan{\globalTime}{\historyVar \downarrow \ch{pos}}) \safevelo{d} < \val{\historyVar \downarrow \ch{pos}}\! \tag{postcond.\ $\progtt{follower}$} \\
		\!\!\psi_l \equiv\;
			& \val{\historyVar \downarrow \ch{pos}} \le x_l \wedge  
			\timespan{\globalTime}{\historyVar \downarrow \ch{pos}} \le \epsilon \tag{postcond.\ $\progtt{leader}$} \\
		\!\!\fAssume \equiv \lCommit \equiv\;
			& \range{0}{\val{\historyVar \downarrow \ch{vel}}}{\maxvelo} \tag{ac-formulas}
	\end{align}
	\vspace{-1.5em}
	\caption{
		For $\ch{} \in \{ \ch{vel}, \ch{pos} \}$,
		the term $\timespan{\globalTime}{\historyVar \downarrow \ch{}}$ is short for $\globalTime - \stamp{\historyVar \downarrow \ch{}}$, 
		\iest the time elapsed since last communication along $\ch{}$ recorded by $\historyVar$.
		The commitment $\lCommit$ given by the $\progtt{leader}$,
		assumption $\fAssume$ made by the $\progtt{follower}$,
		and the postconditions $\psi_f$ and $\psi_l$ are used in the proof in \rref{fig:pardecomp}.
	}
	\label{fig:postconditions}
\end{figure}

\begin{figure}[ht]
	\begin{prooftree}
		\Axiom{$\triangleleft_1\, \text{\RuleName{trueR}}\hspace{-1em}$}
		
		\Axiom{$\triangleleft_2\, \text{\RuleName{real}}\hspace{-1em}$}

		\Axiom{$*$}
	
		\UnaryInf{$\vdash \lCommit \rightarrow \fAssume$}
	
		\Axiom{$\triangleright$ \rref{fig:follower_in_dist_safe}}
	
		\UnaryInf{$\Gamma \vdash [ \progtt{follower}(\historyVar) ] \acpair{\fAssume, \true} \psi_f$}
	
		\Axiom{$\triangleright$ \rref{fig:leader_vel_send}}
	
		\UnaryInf{$\Gamma \vdash [ \progtt{leader}(\historyVar) ] \acpair{\true, \lCommit} \psi_l$}
	
		\RuleNameRight{acParCompRight}
		\TrinaryInf{$\Gamma \vdash [ \progtt{follower}(\historyVar) \parOp \progtt{leader}(\historyVar) ] \acpair{\true, \lCommit} (\psi_f \wedge \psi_l)$}

		\RuleNameRight{acMono}
		\BinaryInf{$\Gamma \vdash [ \progtt{follower}(\historyVar) \parOp \progtt{leader}(\historyVar) ] \acpair{\true, \C} x_f < x_l$}

		\RuleNameRight{acMono}
		\BinaryInf{$\Gamma \vdash [ \progtt{follower}(\historyVar) \parOp \progtt{leader}(\historyVar) ] \acpair{\true, \true} x_f < x_l$}
	
		\RuleNameRight{boxesDual}
		\UnaryInf{$\Gamma \vdash [ \progtt{follower}(\historyVar) \parOp \progtt{leader}(\historyVar) ] x_f < x_l$}

		\RightLabel{$2 \times$\RuleName{CG}}
		\UnaryInf{$\varphi \vdash [ \progtt{follower}(\historyVar_0) \parOp \progtt{leader}(\historyVar_0) ] x_f < x_l$}

		\RuleNameRight{implR}
		\UnaryInf{$\vdash \varphi \rightarrow [ \progtt{follower}(\historyVar_0) \parOp \progtt{leader}(\historyVar_0) ] x_f < x_l$}
	\end{prooftree}
	\vspace{-1em}
	\caption{
		Safety proof for the convoy from \rref{ex:followerLeader},
		where $\Gamma$ is the formula list $\varphi, \historyVar_1 = \historyVar_0 \cdot \comItem{\ch{vel}, 0, \globalTime}, \historyVar = \historyVar_1 \cdot \comItem{\ch{pos}, x_l, \globalTime}$.		
		The open premises $\C \vdash \true$ ($\triangleleft_1$) and $\psi_f \wedge \psi_l \vdash x_f < x_l$ ($\triangleleft_2$) close by rules \RuleName{trueR} and \RuleName{real}, respectively.
	}
	\label{fig:pardecomp}
\end{figure}

\rref{fig:pardecomp} decomposes $x_f < x_l$ into $\psi_f$ and $\psi_l$  
since $\progtt{follower}$ stays behind $\progtt{leader}$'s last known position $\val{\historyVar_\ch{pos}}$, 
whereas $\progtt{leader}$ never drives backward by $\psi_l$.
Indeed, $\progtt{follower}$ stores the last known distance $\val{\historyVar_\ch{pos}} - x_f$ in $d$
and $\epsilon - \timespan{\globalTime}{\historyVar_\ch{pos}}$ bounds the waiting time till the next position update along channel $\ch{pos}$.
Thus, $\progtt{follower}$ stays behind $\progtt{leader}$ when driving with speed~$\nicefrac{d}{\epsilon}$.

\rref{fig:follower_in_dist_safe} continues from \rref{fig:pardecomp} 
and demonstrates the reasoning along one execution path of $\progtt{follower}$.
The invariant $\invariant{f}$ for induction \RuleName{acLoop} bounds $\progtt{follower}$'s speed by $\safevelo{d}$ such that it stays behind $\progtt{leader}$ before the next position update.
\RuleName{TA} indicates trace algebra reasoning.
The remaining proof is mechanical symbolic execution.
In particular, combining \RuleName{acNoCom} and \RuleName{acWeak} swallows the ac-formulas $\{\A, \true\}$ and using duality axiom \RuleName{comDual} allows to execute the communication by rule \RuleName{acSendRight}.
Axiom \RuleName{gtime} materializes the flow of the global time $\globalTime$ making the solution axiom \RuleName{solution} applicable.
Weakening \RuleName{WL} drops irrelevant premises,
\RuleName{forallR} introduces fresh for quantified variables,
and \RuleName{subsL} and \RuleName{subsR} perform substitution on the left and right, respectively.
Trace algebra \RuleName{TA} evaluates the assumption~$\fAssume$.
Finally, the proof concludes by real arithmetic \RuleName{real} modulo trace terms.

\newcommand{\UnaryText}[1]{\UnaryInf{%
	\hspace*{-2em}\vbox{\vspace*{.2em}%
	#1%
	\vspace*{.3em}%
}}}

\begin{figure}[ht!]
	\begin{small}
		\begin{prooftree}[shape=justified, JustifiedScoresWidth=.9\textwidth]
			\Axiom{$*$}
	
			\RuleNameRight{real}
			\UnaryInf{$\begin{aligned}
				& \invariant{f}, \fAssume, \range{0}{\tarvelo}{\maxvelo}, d > \maxdist, t \ge 0 \vdash \range{0}{\tarvelo}{\safevelo{d}} \\
				&\qquad \wedge v_f \le \maxvelo \wedge x_f + (\epsilon - \timespan{\globalTime + t}{\historyVar \downarrow \ch{pos}}) \safevelo{d} < \val{\historyVar \downarrow \ch{pos}}
			\end{aligned}$}
	
			\RuleNameRight{TA}
			\UnaryInf{$\begin{aligned}
				& \invariant{f}, \fAssume, \fAssume(\historyVar \cdot \comItem{\ch{}, \tarvelo, \globalTime}), d > \maxdist, t \ge 0 \vdash \invariant{f} \parameters{\tarvelo, x_f + t \cdot \tarvelo, \globalTime + t, \historyVar_\ch{vel}}
			\end{aligned}$}

			\RuleNameRight{subsL}
			\UnaryInf{$\begin{aligned}
				\invariant{f}, \fAssume, H_\ch{vel}, \fAssume \parameters{\historyVar_\ch{vel}}, d > \maxdist, t \ge 0 \vdash \invariant{f}\parameters{\tarvelo, x_f + t \cdot \tarvelo, \highlight{\globalTime + t, \historyVar_\ch{vel}}}
			\end{aligned}$}

			\RuleNameRight{forallR}
			\UnaryInf{$\begin{aligned}
				\nonrelevant{\invariant{f}, \fAssume, H_\ch{vel}, \fAssume \parameters{\historyVar_\ch{vel}}, d > \maxdist} \vdash \fa{t{\ge}0} \invariant{f} \parameters{\tarvelo, \highlight{x_f + t \cdot \tarvelo}, \highlight{\globalTime + t, \historyVar_\ch{vel}}}
			\end{aligned}$}

			\RuleNameRight{solution, assign}
			\UnaryInf{$\nonrelevant{\invariant{f}, \fAssume, H_\ch{vel}, \fAssume \parameters{\historyVar_\ch{vel}}, d > \maxdist} \vdash [ \evolution*{\globalTime' = 1, x_f' = \tarvelo}{non} ] \invariant{f} \parameters{\tarvelo, x_f, \globalTime, \historyVar_\ch{vel}}$}
	
			\RuleNameRight{gtime}
			\UnaryInf{$\nonrelevant{\invariant{f}, \fAssume, H_\ch{vel}, \fAssume \parameters{\historyVar_\ch{vel}}, d > \maxdist} \vdash [ \Plant_f \parameters{\highlight{\tarvelo}} ] \invariant{f} \parameters{\highlight{\tarvelo}, x_f, \globalTime, \historyVar_\ch{vel}}$}

			\SideAx$\triangleright \text{\rref{fig:follower_in_dist_unsafe}}$
			\RuleNameRight{assign}
			\UnaryInf{$\invariant{f}, \fAssume, H_\ch{vel}, \fAssume \parameters{\historyVar_\ch{vel}}, d > \maxdist \vdash [ v_f \ceq \tarvelo
			] [ \Plant_f \parameters{v_f} ] \invariant{f} \parameters{v_f, x_f, \globalTime, \historyVar_\ch{vel}}$}
	
			\SideAx$\triangleright\RuleName{trueR}$
			\RuleNameRight{if}
			\UnaryInf{$\invariant{f}, \fAssume, H_\ch{vel}, \fAssume \parameters{\historyVar_\ch{vel}} \vdash [ \ifstat{\text{safe}_d}{v_f \ceq \tarvelo}
			] [ \Plant_f ] \invariant{f} \nonrelevant{\parameters{v_f, x_f, \globalTime, \historyVar_\ch{vel}}}$}
			
			\SideAx$\triangleright\RuleName{trueR}$

			\RuleNameRight{acSendRight}
			\UnaryInf{$\invariant{f} \vdash [ \send{\ch{vel}}{\historyVar}{\tarvelo} ] \acpair{\fAssume(\historyVar), \true} [ \ifstat{\text{safe}_d}{v_f \ceq \tarvelo}
			] [ \Plant_f ] \invariant{f} \nonrelevant{\parameters{v_f, x_f, \globalTime, \historyVar}}$}

			\RuleNameRight{forallR}
			\UnaryInf{$\invariant{f} \vdash \fa{\tarvelo} [ \send{\ch{vel}}{\historyVar}{\tarvelo} ] \acpair{\fAssume(\historyVar), \true} [ \ifstat{\text{safe}_d}{v_f \ceq \tarvelo}
			] [ \Plant_f ] \invariant{f} \nonrelevant{\parameters{v_f, x_f, \globalTime, \historyVar}}$}

			\RuleNameRight{nondetAssign}
			\UnaryInf{$\invariant{f} \vdash [ \tarvelo \ceq * ] [ \send{\ch{vel}}{\historyVar}{\tarvelo} ] \acpair{\fAssume(\historyVar), \true} [ \ifstat{\text{safe}_d}{v_f \ceq \tarvelo}
			] [ \Plant_f ] \invariant{f} \nonrelevant{\parameters{v_f, x_f, \globalTime, \historyVar}}$}

			\RuleNameRight{comDual}
			\UnaryInf{$\invariant{f} \vdash [ \receive{\ch{vel}}{\historyVar}{\tarvelo} ] \acpair{\fAssume(\historyVar), \true} [ \ifstat{\text{safe}_d}{v_f \ceq \tarvelo}
			] [ \Plant_f ] \invariant{f} \nonrelevant{\parameters{v_f, x_f, \globalTime, \historyVar}}$}

			\RuleNameRight{acNoCom, acWeak}
			\UnaryInf{$\invariant{f} \vdash [ \receive{\ch{vel}}{\historyVar}{\tarvelo} ] \acpair{\fAssume(\historyVar), \true} [ \ifstat{\text{safe}_d}{v_f \ceq \tarvelo}
			] \acpair{\fAssume(\historyVar), \true} [ \Plant_f ] \invariant{f} \nonrelevant{\parameters{v_f, x_f, \globalTime, \historyVar}}$}
	
			\SideAx$\triangleright \text{\rref{fig:follower_mes_dist_safe}}$
			\RuleNameRight{acComposition}
			\UnaryInf{$\invariant{f} \vdash [ \ctrlVelocity(\historyVar) ] \acpair{\fAssume(\historyVar), \true} [ \Plant_f ] \invariant{f} \nonrelevant{\parameters{v_f, x_f, \globalTime, \historyVar}}$}

			\UnaryText{Execution by \RuleName{acComposition}, \RuleName{acChoice}, \RuleName{acNoCom}, \RuleName{acWeak}, and \RuleName{andR}}

			\SideAx$\triangleright_1 \text{\RuleName{TA, real}} \;\;\triangleright_2 \text{\RuleName{real}}$
			\UnaryInf{$\invariant{f} \vdash [ (\ctrlVelocity(\historyVar) \cup \ctrlDistance(\historyVar)) \seq \Plant_f ] \acpair{\fAssume(\historyVar), \true} \invariant{f} \nonrelevant{\parameters{v_f, x_f, \globalTime, \historyVar}}$}
	
			\RuleNameRight{acLoop}
			\UnaryInf{$\Gamma \vdash [ \progtt{follower}(\historyVar) ] \acpair{\fAssume(\historyVar), \true} \psi_f$}
		\end{prooftree}
	\end{small}
	\vspace*{-1em}
	\caption{
		Partial proof for $\progtt{follower}$.
		The induction \RuleName{acLoop} uses $\invariant{f} \equiv \range{0}{v_f}{\safevelo{d}} \wedge 
		v_f \le \maxvelo \wedge x_f + (\epsilon - \timespan{\globalTime}{\historyVar \downarrow \ch{pos}}) \safevelo{d} < \val{\historyVar \downarrow \ch{pos}}$ as invariant.
		Formula $H_\ch{vel}$ is short for $\historyVar_\ch{vel} = \historyVar \cdot \comItem{\ch{vel}, \tarvelo, \globalTime}$ and $\text{safe}_d$ is short for $d > \maxdist$.
		The induction base $\Gamma \vdash \invariant{f}$ ($\triangleright_1$) closes by trace algebra \RuleName{TA} and real arithmetic \RuleName{real}.
		Postcondition $\invariant{f} \vdash \psi_f$ ($\triangleright_2$) holds by \RuleName{real}.
		For clarity, we \highlight{highlight} substitutions.
	}
	\label{fig:follower_in_dist_safe}
	\vspace*{-1em}
\end{figure}

\section{Related Work}

Unlike CHPs, Hybrid CSP (HCSP) \cite{Jifeng1994} extends CSP \cite{Hoare1978} with \emph{eager} continuous evolution 
terminating on violation of the evolution constraint.
This reduced nondeterminism leaves negligible room for parallel programs to agree on a duration,
which easily results in empty behavior and vacuous proofs,
Non-eager evolution in CHPs subsumes eager runs.
Instead of exploiting their compositional models as in \dLCHP,
other hybrid process algebras are verified non-compositionally by translation to model checking \cite{Man2005, Cong2013, Song2005}.
Unlike CHPs, which demonstrated to model and reason about loss of communication out of the box,
meta-level components \cite{Lunel2019,DBLP:journals/sttt/MullerMRSP18,Kamburjan2020,Benvenuti2014,Frehse2008,Henzinger2001,Lynch2003} would need to be rethought to integrate lossy communication as for every other new application as well.

Hybrid Hoare-logic (HHL) for HCSP \cite{Liu2010} is non-compositional \cite{Wang2012}.
Wang\etal \cite{Wang2012} extend it with assume-guarantee reasoning (AGR) 
in a way that, unlike \dLCHP, becomes non-compositional again.
Unfortunately, their rule for parallel composition still explicitly unrolls all interleavings in the postcondition for communication traces reflecting the structure of the subprograms.
Assumptions and guarantees in HHL cannot specify the communication history but consider readiness for reasoning about deadlock freedom for future work \cite{Wang2012}.
Externalizing the complete observable behavior (and program structure) in this way devalues the whole point of 
compositionality \cite[Section 1.6.2]{deRoever2001} but only postpones reasoning about the exponentially many interleavings.
Similarly, Guelev\etal encode the semantics of the parallel composition into the postcondition \cite{Guelev2017}.

Hoare-style ac-reasoning \cite{AcHoare_Zwiers,AcSemantics_Zwiers,Hooman1987} including Hoare-style reasoning for HCSP \cite{Liu2010,Wang2012,Guelev2017} lacks symbolic execution as intuitive reasoning principle but manages with a distinguished history variable since multiple Hoare-triples cannot be considered together.
\dLCHP makes symbolic execution possible despite communication through explicit trace variables referring to the different possible states of the history in a proof.
The resulting combination of ac-reasoning and dynamic logic allows flexible switch between first-order, dynamic, and ac-reasoning while the axioms are simple capturing discrete, continues, or communication behavior.
Unlike \dLCHP, which has a global flow of time due to continuous evolution,
calculi for distributed real-time computation \cite{Hooman1987,Hooman1992} need to consider the waiting for termination of time-consuming discrete statements.

Unlike other \dL approaches \cite{Lunel2019,DBLP:journals/sttt/MullerMRSP18,Kamburjan2020},
\dLCHP has a parallel operator with built-in time-synchronization as first-class citizen in hybrid programs that can be arbitrarily nested with other programs,
rather than parallel composition of meta-level components with a explicit time model.
Modeling of parallelism by nondeterministic choice additionally requires extra care to ensure execution periodicity \cite{Lunel2019}.
In contrast to first-order constraints relating at most consecutive I/O events \cite{Lunel2019,DBLP:journals/sttt/MullerMRSP18,Kamburjan2020},
\dLCHP can reason about invariants of the whole communication history.
Different from our integrated reasoning about discrete, hybrid, and communication behavior,
Kamburjan\etal \cite{Kamburjan2020} separate reasoning about communication from hybrid systems reasoning.

Quantified differential dynamic logic \QdL \cite{DBLP:conf/csl/Platzer10} allows reasoning about parallel compositions of an unbounded number of distributed CPSs.
Unlike \dLCHP that can reason about interactions of entirely different programs, parallelism in \QdL is restricted to subprograms with a \emph{homogeneous} structure.

Different from the denotational semantics of CHPs,
parallel composition of hybrid automata \cite{Lynch2003, Frehse2004, Henzinger2001, Benvenuti2014}, just like Hoare-style reasoning about HCSP \cite{Wang2012,Guelev2017}, always fall back to the combinatorial exploration of parallelism.
Consequently, even AGR approaches \cite{Lynch2003,Frehse2004,Henzinger1996,Benvenuti2014} for hybrid automata that mitigate the state space explosion for subautomata,
eventually resort to large product automata later.
In contrast, \dLCHP's proof rule for parallel composition exploits the built-in compositionality of its semantics enabling verification of subprograms truly independent of their environment except for the communication interface.
Unlike ac-formulas in \dLCHP,
which can capture change, rate, delay, or noise for arbitrary pairings of communication channels,
overapproximation is limited to coarse abstractions by timed transition systems \cite{Frehse2004},
components completely dropping knowledge about continuous behavior \cite{Henzinger2001},
or static global contracts \cite{Benvenuti2014}.
Where \dLCHP inherits complete reasoning about differential equation invariants from \dL,
automata approaches are often limited to linear continuous dynamics \cite{Frehse2004,Henzinger2001}.

Concurrent dynamic logic (CDL) has no way for parallel programs to interact~\cite{Peleg1987}. 
CDL with communication \cite{Peleg1987a} has CSP-style \cite{Hoare1978} communication primitives but lacks continuous behavior and a proof calculus for verification.

\section{Conclusion}

This paper presented a dynamic logic \dLCHP for communicating hybrid programs (CHPs) with synchronous parallel composition in global time.
The \dLCHP proof calculus is the first truly compositional verification approach for communicating parallel hybrid systems.
To this end, \dLCHP exploits the flexibility of dynamic logic by complementing necessity and possibility modalities with assumption-commitment~(ac) modalities.
Crucially, this embedding of ac-reasoning enables compositional specification and verification of parallel hybrid behavior in a way that tames their complexity.
The practical feasibility of \dLCHP increases as it supports reasoning via intuitive symbolic execution in the presence of communication.
All technical subtleties in the semantic construction remain under the hood such that the actual calculus naturally generalizes dynamic logic reasoning.

Future work includes developing a uniform substitution calculus \cite{DBLP:journals/jar/Platzer17} for $\dLCHP$ in order to enable parsimonious theorem prover implementations \cite{DBLP:conf/cade/FultonMQVP15}.

\ifblind\else
\subsubsection{Funding Statement}
This project was funded in part by the Deutsche For\-schungs-gemeinschaft (DFG) -- \href{https://gepris.dfg.de/gepris/projekt/378803395?context=projekt&task=showDetail&id=378803395&}{378803395} (ConVeY) and an Alexander von Humboldt Professorship.
\fi

\let\oldthebibliography\thebibliography
\let\endoldthebibliography\endthebibliography
\renewenvironment{thebibliography}[1]{
  \begin{oldthebibliography}{#1}
    \setlength{\itemsep}{0em}
    \setlength{\parskip}{0em}
}
{
  \end{oldthebibliography}
}

\begingroup
\linespread{.87}
\renewcommand{\doi}[1]{doi: \href{https://doi.org/#1}{\nolinkurl{#1}}}
\bibliographystyle{splncs04}
\bibliography{platzer,literature}	

\begin{thebibliography}{10}
\providecommand{\url}[1]{\texttt{#1}}
\providecommand{\urlprefix}{URL }
\providecommand{\doi}[1]{https://doi.org/#1}

\bibitem{Alur2011}
Alur, R.: Formal verification of hybrid systems. In: Chakraborty, S., Jerraya,
  A., Baruah, S.K., Fischmeister, S. (eds.) Proc. 11th Intl. Conf. Embedded
  Software {(EMSOFT)}. pp. 273--278. {ACM} Press (2011).
  \doi{10.1145/2038642.2038685}

\bibitem{Alur1993}
Alur, R., Courcoubetis, C., Henzinger, T.A., Ho, P.: Hybrid automata: {A}n
  algorithmic approach to the specification and verification of hybrid systems.
  In: Grossman, R.L., Nerode, A., Ravn, A.P., Rischel, H. (eds.) Proc. 1th and
  2nd Intl. Workshop Hybrid Systems {(HS)}. LNCS, vol.~736, pp. 209--229.
  Springer (1993). \doi{10.1007/3-540-57318-6_30}

\bibitem{AptdeBoerOlderog10}
Apt, K.R., de~Boer, F.S., Olderog, E.R.: Verification of Sequential and
  Concurrent Programs. Springer, 3rd edn. (2010).
  \doi{10.1007/978-1-84882-745-5}

\bibitem{Benvenuti2014}
Benvenuti, L., Bresolin, D., Collins, P., Ferrari, A., Geretti, L., Villa, T.:
  Assume–guarantee verification of nonlinear hybrid systems with {ARIADNE}.
  Intl. J. Robust Nonlinear Control  \textbf{24},  699--724 (2014).
  \doi{10.1002/rnc.2914}

\bibitem{Bradley2006}
Bradley, A.R., Manna, Z., Sipma, H.B.: What's decidable about arrays? In:
  Emerson, E.A., Namjoshi, K.S. (eds.) Verification, Model Checking, and
  Abstract Interpretation, 7th International Conference, {VMCAI} 2006,
  Charleston, SC, USA, January 8-10, 2006, Proceedings. LNCS, vol.~3855, pp.
  427--442. Springer (2006). \doi{10.1007/11609773_28}

\bibitem{Cong2013}
Cong, X., Yu, H., Xu, X.: Verification of hybrid chi model for cyber-physical
  systems using phaver. In: Barolli, L., You, I., Xhafa, F., Leu, F., Chen, H.
  (eds.) Seventh International Conference on Innovative Mobile and Internet
  Services in Ubiquitous Computing, {IMIS} 2013, Taichung, Taiwan, July 3-5,
  2013. pp. 122--128. {IEEE} Computer Society (2013).
  \doi{10.1109/IMIS.2013.29}

\bibitem{Frehse2008}
Frehse, G.: Phaver: algorithmic verification of hybrid systems past hytech.
  Int. J. Softw. Tools Technol. Transf.  \textbf{10}(3),  263--279 (2008).
  \doi{10.1007/s10009-007-0062-x}

\bibitem{Frehse2004}
Frehse, G., Han, Z., Krogh, B.H.: Assume-guarantee reasoning for hybrid
  {I/O}-automata by over-approximation of continuous interaction. In: Proc.
  43rd IEEE Conf. Decision and Control {(CDC)}. pp. 479--484. {IEEE} (2004).
  \doi{10.1109/CDC.2004.1428676}

\bibitem{DBLP:conf/cade/FultonMQVP15}
Fulton, N., Mitsch, S., Quesel, J.D., V{\"o}lp, M., Platzer, A.: {KeYmaera X}:
  An axiomatic tactical theorem prover for hybrid systems. In: Felty, A.,
  Middeldorp, A. (eds.) CADE. LNCS, vol.~9195, pp. 527--538. Springer, Berlin
  (2015). \doi{10.1007/978-3-319-21401-6_36}

\bibitem{Guelev2017}
Guelev, D.P., Wang, S., Zhan, N.: Compositional {H}oare-style reasoning about
  hybrid {CSP} in the duration calculus. In: Larsen, K.G., Sokolsky, O., Wang,
  J. (eds.) Dependable Software Engineering. Theories, Tools, and Applications
  - Third International Symposium, {SETTA} 2017, Changsha, China, October
  23-25, 2017, Proceedings. LNCS, vol. 10606, pp. 110--127. Springer (2017).
  \doi{10.1007/978-3-319-69483-2\_7}

\bibitem{Harel1979}
Harel, D.: First-Order Dynamic Logic, LNCS, vol.~68. Springer (1979).
  \doi{10.1007/3-540-09237-4}

\bibitem{Henzinger1996}
Henzinger, T.A.: The theory of hybrid automata. In: Proc. 11th {IEEE} Symp.
  Logic in Computer Science {(LICS)}. pp. 278--292. {IEEE} (1996).
  \doi{10.1109/LICS.1996.561342}

\bibitem{Henzinger2001}
Henzinger, T.A., Minea, M., Prabhu, V.S.: Assume-guarantee reasoning for
  hierarchical hybrid systems. In: Benedetto, M.D.D.,
  Sangiovanni{-}Vincentelli, A.L. (eds.) Proc. 4th Intl. Workshop Hybrid
  Systems: Computation and Control {(HSCC)}. LNCS, vol.~2034, pp. 275--290.
  Springer (2001). \doi{10.1007/3-540-45351-2_24}

\bibitem{Hoare1978}
Hoare, C.A.R.: Communicating sequential processes. Communications of the {ACM}
  \textbf{21}(8),  666--677 (1978). \doi{10.1145/359576.359585}

\bibitem{Hoare85}
Hoare, C.A.R.: Communicating Sequential Processes. Prentice Hall International
  (1985)

\bibitem{Hooman1987}
Hooman, J.: A compositional proof theory for real-time distributed message
  passing. In: de~Bakker, J.W., Nijman, A.J., Treleaven, P.C. (eds.) PARLE,
  Parallel Architectures and Languages Europe, Volume {II:} Parallel Languages,
  Eindhoven, The Netherlands, June 15-19, 1987, Proceedings. LNCS, vol.~259,
  pp. 315--332. Springer (1987). \doi{10.1007/3-540-17945-3_18}

\bibitem{Hooman1992}
Hooman, J.J.M., de~Roever, W.P.: An introduction to compositional methods for
  concurrency and their application to real-time. Sādhanā  \textbf{17}(1),
  29--73 (1992). \doi{10.1007/BF02811338}

\bibitem{Jifeng1994}
Jifeng, H.: A classical mind: essays in honour of C. A. R. Hoare, chap. From
  {CSP} to Hybrid Systems, pp. 171--189. Prentice Hall International (1994)

\bibitem{Kamburjan2020}
Kamburjan, E., Schlatte, R., Johnsen, E.B., Tarifa, S.L.T.: Designing
  distributed control with hybrid active objects. In: Margaria, T., Steffen, B.
  (eds.) Proc. 9th Intl. Symp. Leveraging Applications of Formal Methods :
  Tools and Trends {(ISoLA)}. LNCS, vol. 12479, pp. 88--108. Springer (2020).
  \doi{10.1007/978-3-030-83723-5\_7}

\bibitem{LevinGries1981}
Levin, G., Gries, D.: A proof technique for communicating sequential processes.
  Acta Informatica  \textbf{15}(3),  281--302 (1981). \doi{10.1007/BF00289266}

\bibitem{Liu2010}
Liu, J., Lv, J., Quan, Z., Zhan, N., Zhao, H., Zhou, C., Zou, L.: A calculus
  for hybrid {CSP}. In: Ueda, K. (ed.) Proc. 8th Asian Symp. Programming
  Languages and Systems {(APLAS)}. LNCS, vol.~6461, pp. 1--15. Springer (2010).
  \doi{10.1007/978-3-642-17164-2\_1}

\bibitem{Lunel2019}
Lunel, S., Mitsch, S., Boyer, B., Talpin, J.: Parallel composition and modular
  verification of computer controlled systems in differential dynamic logic.
  In: ter Beek, M.H., McIver, A., Oliveira, J.N. (eds.) Proc. 3rd World Congr.
  Formal Methods - The Next 30 Years {(FM)}. LNCS, vol. 11800, pp. 354--370.
  Springer (2019). \doi{10.1007/978-3-030-30942-8_22}

\bibitem{Lynch2003}
Lynch, N.A., Segala, R., Vaandrager, F.W.: Hybrid {I/O} automata. Information
  and Computation  \textbf{185}(1),  105--157 (2003).
  \doi{10.1016/S0890-5401(03)00067-1}

\bibitem{Man2005}
Man, K.L., Reniers, M.A., Cuijpers, P.J.L.: Case studies in the hybrid process
  algebra {HyPA}. Int. J. Softw. Eng. Knowl. Eng.  \textbf{15}(2),  299--306
  (2005). \doi{10.1142/S0218194005002385}

\bibitem{Misra1981}
Misra, J., Chandy, K.M.: Proofs of networks of processes. {IEEE} Transactions
  on Software Engineering  \textbf{7}(4),  417--426 (1981).
  \doi{10.1109/TSE.1981.230844}

\bibitem{DBLP:journals/sttt/MullerMRSP18}
M{\"{u}}ller, A., Mitsch, S., Retschitzegger, W., Schwinger, W., Platzer, A.:
  Tactical contract composition for hybrid system component verification. STTT
  \textbf{20}(6),  615--643 (2018). \doi{10.1007/s10009-018-0502-9}, special
  issue for selected papers from FASE'17

\bibitem{OwickiGries1976}
Owicki, S.S., Gries, D.: An axiomatic proof technique for parallel programs
  {I}. Acta Informatica  \textbf{6},  319--340 (1976). \doi{10.1007/BF00268134}

\bibitem{Peleg1987a}
Peleg, D.: Communication in concurrent dynamic logic. J. Comput. Syst. Sci.
  \textbf{35}(1),  23--58 (1987). \doi{10.1016/0022-0000(87)90035-3}

\bibitem{Peleg1987}
Peleg, D.: Concurrent dynamic logic. J. {ACM}  \textbf{34}(2),  450--479
  (1987). \doi{10.1145/23005.23008}

\bibitem{DBLP:journals/jar/Platzer08}
Platzer, A.: Differential dynamic logic for hybrid systems. J. Autom. Reas.
  \textbf{41}(2),  143--189 (2008). \doi{10.1007/s10817-008-9103-8}

\bibitem{Platzer10}
Platzer, A.: Logical Analysis of Hybrid Systems: Proving Theorems for Complex
  Dynamics. Springer, Heidelberg (2010). \doi{10.1007/978-3-642-14509-4}

\bibitem{DBLP:conf/csl/Platzer10}
Platzer, A.: Quantified differential dynamic logic for distributed hybrid
  systems. In: Dawar, A., Veith, H. (eds.) CSL. LNCS, vol.~6247, pp. 469--483.
  Springer (2010). \doi{10.1007/978-3-642-15205-4_36}

\bibitem{DBLP:journals/jar/Platzer17}
Platzer, A.: A complete uniform substitution calculus for differential dynamic
  logic. J. Autom. Reas.  \textbf{59}(2),  219--265 (2017).
  \doi{10.1007/s10817-016-9385-1}

\bibitem{Platzer18}
Platzer, A.: Logical Foundations of Cyber-Physical Systems. Springer, Cham
  (2018). \doi{10.1007/978-3-319-63588-0}

\bibitem{DBLP:conf/lics/PlatzerT18}
Platzer, A., Tan, Y.K.: Differential equation axiomatization: The impressive
  power of differential ghosts. In: Dawar, A., Gr{\"{a}}del, E. (eds.) LICS.
  pp. 819--828. ACM, New York (2018). \doi{10.1145/3209108.3209147}

\bibitem{DBLP:journals/jacm/PlatzerT20}
Platzer, A., Tan, Y.K.: Differential equation invariance axiomatization. J. ACM
   \textbf{67}(1),  6:1--6:66 (2020). \doi{10.1145/3380825}

\bibitem{Presburger1931}
Presburger, M.: {\"U}ber die {V}ollst{\"a}ndigkeit eines gewissen {S}ystems der
  {A}rithmetik ganzer {Z}ahlen, in welchem die {A}ddition als einzige
  {O}peration hervortritt. In: Comptes Rendus du {I} congr\`es de
  Math\'ematiciens des Pays Slaves (1931)

\bibitem{deRoever2001}
de~Roever, W.P., de~Boer, F.S., Hannemann, U., Hooman, J.J.M., Lakhnech, Y.,
  Poel, M., Zwiers, J.: Concurrency Verification: Introduction to Compositional
  and Noncompositional Methods, Cambridge Tracts in Theoretical Computer
  Science, vol.~54. Cambridge University Press (2001)

\bibitem{Song2005}
Song, H., Compton, K.J., Rounds, W.C.: {SPHIN:} {A} model checker for
  reconfigurable hybrid systems based on {SPIN}. In: Lazic, R., Nagarajan, R.
  (eds.) Proceedings of the 5th International Workshop on Automated
  Verification of Critical Systems, AVoCS 2005, University of Warwick, UK,
  September 12-13, 2005. ENTCS, vol.~145, pp. 167--183. Elsevier (2005).
  \doi{10.1016/j.entcs.2005.10.011}

\bibitem{Tarski1951}
Tarski, A.: A Decision Method for Elementary Algebra and Geometry. University
  of California Press, Berkeley, 2nd edn. (1951). \doi{10.1525/9780520348097}

\bibitem{Wang2012}
Wang, S., Zhan, N., Guelev, D.P.: An assume/guarantee based compositional
  calculus for hybrid {CSP}. In: Agrawal, M., Cooper, S.B., Li, A. (eds.) Proc.
  9th Conf. Theory and Applications of Models of Computation {(TAMC)}. LNCS,
  vol.~7287, pp. 72--83. Springer (2012). \doi{10.1007/978-3-642-29952-0_13}

\bibitem{Zwiers_Phd}
Zwiers, J.: Compositionality, Concurrency and Partial Correctness - Proof
  Theories for Networks of Processes, and Their Relationship, LNCS, vol.~321.
  Springer (1989). \doi{10.1007/BFb0020836}

\bibitem{AcHoare_Zwiers}
Zwiers, J., de~Bruin, A., de~Roever, W.P.: A proof system for partial
  correctness of dynamic networks of processes (extended abstract). In: Clarke,
  E.M., Kozen, D. (eds.) Proc. Carnegie Mellon Workshop Logics of Programs
  1983. LNCS, vol.~164, pp. 513--527. Springer (1983).
  \doi{10.1007/3-540-12896-4_384}

\bibitem{AcSemantics_Zwiers}
Zwiers, J., de~Roever, W.P., van Emde~Boas, P.: Compositionality and concurrent
  networks: {S}oundness and completeness of a proofsystem. In: Brauer, W. (ed.)
  Proc. 12th Intl. Coll. Automata, Languages and Programming {(ICALP)}. LNCS,
  vol.~194, pp. 509--519. Springer (1985). \doi{10.1007/BFb0015776}

\end{thebibliography}
\endgroup

\appendix
\clearpage

\section{Details of the Semantics} \label{app:semantics}

We give a formal semantics for terms in \rref{def:termSemantics}, 
prove that the semantics of CHPs is prefix-closed and total in \rref{prop:prefixClosedAndTotal}, 
and prove that $\dLCHP$ is a conservative extension of $\dL$ (see \rref{prop:conservative}). 
Moreover, we prove that the operator $\closedComposition$ from \rref{sec:semantics} is associative.

We adapt the semantics of terms from Zwiers \cite[p. 113]{Zwiers_Phd} to our setup with real arithmetic as base terms.
In order to define the semantics of terms, we use the operators on traces of the following definition:

\begin{definition}[Trace operators] \label{def:traceOperators}
	Let $\trace = (\trace_1, ..., \trace_n) \in \traces$ be a trace with communication events $\trace_i$.
	Then we define its length $\semLen{\trace}$ to be $n$.
	Additionally, $\semLen{\epsilon} = 0$.
	For $\trace \in \traces$ and $k \in \naturals$,
	we define $\semAt{\trace}{k}$ to be $\trace_{k+1}$ for $0 \le k < \semLen{\trace}$.
	Otherwise, we define $\semAt{\trace}{k} = \epsilon$.

	For a communication event $\rawtrace = \comItem{\ch{}, a, \duration}$ or $\rawtrace = \comItem{\historyVar, \ch{}, a, \duration}$, we define channel access $\semChan{\rawtrace} = \ch{}$, value access
	$\semVal{\rawtrace} = a$, and access to the timestamp $\semTime{\rawtrace} = \duration$.
	Moreover, we define $\semChan{\epsilon} = \ch{}$ for some $\ch{} \in \Chan$, and $\semVal{\epsilon} = 0$, and $\semTime{\epsilon} = 0$.
\end{definition}

\begin{definition}[Value of a term] \label{def:termSemantics}
	The \emph{value} $\sem{\expr} \pstate{v} \subseteq \reals \cup \naturals \cup \Chan \cup \traces$ of a term $\expr \in \Trm$ over the state $\pstate{v}$ is inductively defined in \rref{fig:termSemantics}.
\end{definition}

\begin{figure}[ht]
	\vspace{-3em}
	\begin{minipage}{.5\linewidth}
		\begin{subfigure}[t]{\textwidth}%
			\begin{align*}
				\sem{x} \pstate{v} & = v(x) \sidecondition{$x \in \RVar \cup \{\globalTime \}$} \\
				\sem{c} \pstate{v} & = c \\
				\sem{\re_1 + \re_2} \pstate{v} & = \sem{\re_1} \pstate{v} + \sem{\re_2} \pstate{v} \\
				\sem{\re_1 \cdot \re_2} \pstate{v} & = \sem{\re_1} \pstate{v} \cdot \sem{\re_2} \pstate{v} \\	
				\sem{\val{\at{\te}{\ie}}} \pstate{v} & = \semVal{\semAt{\sem{\te} \pstate{v}}{\sem{\ie} \pstate{v}}} \\
				\sem{\stamp{\at{\te}{\ie}}} \pstate{v} & = \semTime{\semAt{\sem{\te} \pstate{v}}{\sem{\ie} \pstate{v}}}
			\end{align*}
			\vskip-.8em
			\caption[]{Real terms}
		\end{subfigure}
		\begin{subfigure}[t]{\textwidth}%
			\begin{align*}
				\sem{ch} \pstate{v} & = \ch{} \\
				\sem{\chan{\at{\te}{\ie}}} \pstate{v} & = \semChan{\semAt{\sem{\te} \pstate{v}}{\sem{\ie} \pstate{v}}}
			\end{align*}
			\vskip-.8em
			\caption[]{Channel terms}
		\end{subfigure}
	\end{minipage}%
	\begin{minipage}{.5\linewidth}
		\begin{subfigure}[t]{\textwidth}%
			\begin{align*}
				\sem{\intVar} \pstate{v} & = \pstate{v}(\intVar) \\
				\sem{\kappa} \pstate{v} & = \kappa \sidecondition{$\kappa \in \{ 0, 1 \}$} \\
				\sem{\ie_1 + \ie_2} \pstate{v} & = \sem{\ie_1} \pstate{v} + \sem{\ie_2} \pstate{v} \\
				\sem{\len{\te}} \pstate{v} & = \semLen{\sem{\te} \pstate{v}}
			\end{align*}
			\vskip-.8em
			\caption[]{Integer terms}
		\end{subfigure}
		\begin{subfigure}[t]{\textwidth}%
			\begin{align*}
				\sem{\historyVar} \pstate{v} & = \pstate{v}(\historyVar) \\
				\sem{\epsilon} \pstate{v} & = \epsilon \\
				\sem{\comItem{\ch{}, \rp_1, \rp_2}} \pstate{v} & = \comItem{\ch{}, \sem{\rp_1} \pstate{v}, \sem{\rp_2} \pstate{v}} \\
				\sem{\te_1 \cdot \te_2} \pstate{v} & = \sem{\te_1} \pstate{v} \cdot \sem{\te_2} \pstate{v} \\
				\sem{\te \downarrow \cset } \pstate{v} & = (\sem{\te} \pstate{v}) \downarrow \cset
			\end{align*}
			\vskip-.8em
			\caption[]{Trace terms}
		\end{subfigure}
	\end{minipage}
	\caption[Valuation of terms]{
		Inductive definition of the valuation $\sem{\expr} \pstate{v} \subseteq \reals \cup \naturals \cup \Chan \cup \traces$ of a term $\expr \in \Trm$ over the state $\pstate{v} \in \states$ \cite[p. 113]{Zwiers_Phd}.}
	\label{fig:termSemantics}
	\vspace*{-1em}
\end{figure}

\begin{proposition}[Prefix-closed and total] \label{prop:prefixClosedAndTotal}
	Let $\gamma \in \Chp$ be a program.
	Then its semantics $\sem{\gamma} \subseteq \pDomain$ is prefix-closed and total,
	\iest if $\computation \in \sem{\gamma}$ and $\observable[pre] \preceq \observable$,
	then $\computation[pre] \in \sem{\gamma}$, and $\pLeast \subseteq \sem{\gamma}$.
\end{proposition}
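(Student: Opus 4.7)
The proof proceeds by structural induction on the program $\gamma$, establishing prefix-closedness and totality simultaneously. For the base cases, totality is immediate from the explicit inclusion of $\pLeast$ in the semantics of $x \ceq \rp$, $x \ceq *$, $\test{}$, and $\evolution{}{}$, and from taking $\observable[pre] = \leastComputation$ as the minimum for $\send{}{}{}$ and $\receive{}{}{}$ (observing that $\leastComputation \preceq \observable$ for any $\observable$ with $\pstate[pre]{w} = \bot$). Prefix-closedness for non-communicating atomic programs follows because the only proper prefix of $\computation$ with $\pstate{w} \neq \bot$ is $\leastComputation \in \pLeast$; for $\send{}{}{}$ and $\receive{}{}{}$ it is built into the definition by the prefix-bounded set comprehension and transitivity of $\preceq$.

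For the inductive step, the cases of choice and iteration reduce to the sequential composition case via $\sem{\repetition{\alpha}} = \bigcup_n \sem{\alpha^n}$ and the observation that unions of prefix-closed, total sets are prefix-closed and total. The key case is sequential composition $\sem{\alpha \seq \beta} = \botop{\sem{\alpha}} \cup (\sem{\alpha} \continuation \sem{\beta})$: totality follows from $\pLeast \subseteq \botop{\sem{\alpha}}$ by the inductive hypothesis on $\alpha$. For prefix-closedness, given $\computation \in \sem{\alpha \seq \beta}$ and a proper prefix $(\pstate{v}, \trace[pre], \bot)$, I split on whether $\computation \in \botop{\sem{\alpha}}$ (where the IH on $\alpha$ directly gives $(\pstate{v}, \trace[pre], \bot) \in \sem{\alpha} \subseteq \botop{\sem{\alpha}}$ after applying $\botop{\cdot}$) or $\computation \in \sem{\alpha} \continuation \sem{\beta}$ with decomposition $\trace = \trace_1 \cdot \trace_2$. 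In the latter, I distinguish whether $\trace[pre] \preceq \trace_1$ (apply IH on $\alpha$, then $\botop{\cdot}$) or $\trace[pre] = \trace_1 \cdot \trace_2'$ with $\trace_2' \preceq \trace_2$ (apply IH on $\beta$ to obtain $(\pstate{u}, \trace_2', \bot) \in \sem{\beta}$, then reassemble via $\continuation$).

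The most delicate case is parallel composition. Totality follows by choosing both projected computations to be $\leastComputation$ and observing that $\bot \merge \bot = \bot$ with the time-sync condition vacuously satisfied (per the footnote that $\statetime{\pstate{w}_\alpha} = \statetime{\pstate{w}_\beta}$ holds whenever either end state is $\bot$). For prefix-closedness, the main obstacle is to exhibit witness computations for the subprograms from those of $\alpha \parOp \beta$. Given $\computation \in \sem{\alpha \parOp \beta}$ with witnesses $\computation[proj=\alpha] \in \sem{\alpha}$, $\computation[proj=\beta] \in \sem{\beta}$, and a proper prefix $(\pstate{v}, \trace[pre], \bot) \preceq \computation$, the plan is to use that projection $\cdot \downarrow \gamma'$ is monotone with respect to $\preceq$, so $\trace[pre] \downarrow \alpha \preceq \trace \downarrow \alpha$ and likewise for $\beta$. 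Applying the IH to $\alpha$ and $\beta$ yields $(\pstate{v}, \trace[pre] \downarrow \alpha, \bot) \in \sem{\alpha}$ and $(\pstate{v}, \trace[pre] \downarrow \beta, \bot) \in \sem{\beta}$, which serve as witnesses for $(\pstate{v}, \trace[pre], \bot) \in \sem{\alpha \parOp \beta}$. It remains to verify the junk-freeness condition $\trace[pre] = \trace[pre] \downarrow (\alpha \parOp \beta)$, which I will derive from the analogous property for $\trace$ together with the fact that projection commutes with taking prefixes, and to appeal once more to the vacuous time-sync for $\bot$ end states.
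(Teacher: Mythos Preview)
Your proposal is correct and follows essentially the same approach as the paper's proof: structural induction with the same case splits, the same two-subcase analysis for sequential composition (prefix inside $\trace_1$ versus extending into $\trace_2$), and the same treatment of parallel composition via monotonicity of projection, the inductive hypothesis on both components, and verification of the junk-freeness and (vacuous) time-synchronisation side conditions. The only cosmetic difference is that the paper makes the well-founded ordering explicit by declaring $\alpha^n$ structurally smaller than $\alpha^*$, whereas you fold this into an implicit inner induction on $n$; both are equivalent.
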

\begin{proof}
	The proof is by induction on the structure of program $\gamma$.
	We consider $\alpha^n$ to be structurally smaller than $\alpha^*$ for all programs $\alpha$ and all $n \in \naturals$.
	\Wlossg we assume $\observable[pre] \prec \observable$ in proving prefix-closedness because $\computation[pre] \in \sem{\gamma}$ trivially holds if $\computation \in \sem{\gamma}$ and $\observable[pre] = \observable$.

	\begin{enumerate}
		\item 
	 	$\gamma \in \{ x \ceq \rp, x \ceq *, \test{}, \evolution{}{} \}$, 
		then $\sem{\gamma} = \pLeast \cup U$ with $U \subseteq \pDomain$.
		Now, let $\computation \in \sem{\gamma}$ and $\observable[pre] \prec \observable$.
		Then $\trace[pre] = \epsilon$ since $\trace = \epsilon$, and $\pstate[pre]{w} = \bot$.
		Hence, $\computation[pre] \in \pLeast \subseteq \sem{\gamma}$.
		Finally, $\sem{\gamma}$ is total because $\pLeast \subseteq \sem{\gamma}$.

		\item $\gamma \in \{ \send{}{}{}, \receive{}{}{} \}$, 
		then let $\computation \in \sem{\gamma}$ and $\observable[pre] \prec \observable$.
		Since $\computation \in \sem{\gamma}$ iff $\observable \preceq \observable[alt]$ for some $\observable[alt]$,
		we obtain $\computation[pre] \in \sem{\gamma}$ because $\observable[pre] \prec \observable \preceq \observable[alt]$.
		Since $(\epsilon, \bot) \preceq \observable[alt]$ for any $\observable[alt]$,
		we have $\leastComputation \in \sem{\gamma}$ for each $\pstate{v} \in \states$ such that $\sem{\gamma}$ is total.

		\item
		$\gamma \equiv \alpha \cup \beta$, then $\sem{\alpha}$ and $\sem{\beta}$ are prefix-closed and total by IH.
		Then prefix-closedness and totality of $\sem{\gamma}$ follows easily from $\sem{\gamma} = \sem{\alpha} \cup \sem{\beta}$.
		
		\item \label{itm:compositionPrefixClosedAndTotal}
		$\gamma \equiv \alpha \seq \beta$, then let $\computation \in \sem{\gamma} = \botop{\sem{\alpha}} \cup \sem{\alpha} \continuation \sem{\beta}$ and $\observable[pre] \prec \observable$.
		Thus, $\pstate[pre]{w} = \bot$ and  $\trace[pre] \preceq \trace$.
		Observe that $\botop{\sem{\alpha}}$ is prefix-closed because $\sem{\alpha}$ is prefix-closed by IH.
		So if $\computation \in \botop{\sem{\alpha}}$, then $\computation[pre] \in \botop{\sem{\alpha}} \subseteq \sem{\gamma}$.
		If $\computation \in \sem{\alpha} \continuation \sem{\beta}$, then $\pstate{u} \neq \bot$ exists such that $(\pstate{v}, \trace_1, \pstate{u}) \in \sem{\alpha}$, and $(\pstate{u}, \trace_2, \pstate{w}) \in \sem{\beta}$, and $\trace = \trace_1 \cdot \trace_2$.
		If $\trace_1$ is fully contained in $\trace[pre]$, \iest $\trace[pre] = \trace_1 \cdot \trace[pre]_2$ for some $\trace[pre]_2$, then $\trace[pre]_2 \preceq \trace_2$ and $(\trace[pre]_2, \pstate[pre]{w}) \preceq (\trace_2, \pstate{w})$.
		Since $\sem{\beta}$ is prefix-closed by IH, we obtain $(\pstate{u}, \trace[pre]_2, \pstate[pre]{w}) \in \sem{\beta}$, 
		which implies $(\pstate{v}, \trace[pre], \pstate[pre]{w}) \in \sem{\alpha} \continuation \sem{\beta} \subseteq \sem{\gamma}$.
		If otherwise $\trace[pre] \prec \trace_1 \preceq \trace$, where $\trace_1$ is not fully contained in $\trace[pre]$, we have $(\trace[pre], \pstate[pre]{w}) \preceq (\trace_1, u)$ since $\pstate[pre]{w} = \bot$. 
		Thus, $\computation[pre] \in \sem{\alpha}$ because $\sem{\alpha}$ is prefix-closed by IH.
		Finally, $\computation[pre] \in \botop{\sem{\alpha}} \subseteq \sem{\gamma}$.

		Since $\sem{\alpha}$ is total by IH,
		$\pLeast \subseteq \sem{\alpha}$,
		which implies $\pLeast \botop{\sem{\alpha}} \subseteq \sem{\alpha \seq \beta}$.

		\item
		$\gamma \equiv \repetition{\alpha}$, then $\sem{\alpha^n}$ is prefix-closed for all $n \in \naturals$ by IH as we considered $\alpha^n$ to be structurally smaller than $\repetition{\alpha}$.
		This easily implies that $\sem{\repetition{\alpha}} = \bigcup_{n \in \naturals} \sem{\alpha^n}$ is prefix-closed.
		Finally, $\sem{\gamma}$ is total because $\sem{\alpha} \subseteq \sem{\gamma}$ and $\sem{\alpha}$ is total by IH.

		\item 
		$\gamma \equiv \alpha_1 \parOp \alpha_2$, then let $\computation \in \sem{\gamma}$ and $\observable[pre] \prec \observable$.
		Thus, $\pstate[pre]{w} = \bot$ and $\trace[pre] \preceq \trace$.
		Moreover, $\computation[proj={\alpha_j}] \in \sem{\alpha_j}$ for $j = 1,2$,
		and $\statetime{\pstate{w}_{\alpha_1}} = \statetime{\pstate{w}_{\alpha_2}}$, 
		and $\trace = \trace \downarrow \gamma$, 
		and $w = w_{\alpha_1} \merge w_{\alpha_2}$. 
		Now, observe that $\trace[pre] \downarrow \alpha_j \preceq \trace \downarrow \alpha_j$,
		which implies $(\trace[pre] \downarrow \alpha_j, \pstate[pre, ind={\alpha_j}]{w}) \preceq (\trace \downarrow \alpha_j, w_{\alpha_j})$ for $\pstate[pre]{w}_{\alpha_j} = \bot$.
		Thus, $(v, \trace[pre] \downarrow \alpha_j, w'_{\alpha_j}) \in \sem{\alpha_j}$ because $\sem{\alpha_j}$ is prefix-closed by IH.
		Since $\bot = \bot \merge \bot$, we have $\pstate[pre]{w} = \pstate[pre, ind={\alpha_1}]{w} \merge \pstate[pre, ind={\alpha_2}]{w}$.
		Moreover, $\trace[pre] = \trace[pre] \downarrow \gamma$ and $\trace[pre]$ is chronological as prefix $\trace[pre] \preceq \trace$ of the chronological trace $\trace$.
		Finally, $(v, \trace[pre], w') \in \sem{\gamma}$.

		For each $v \in \states$ and $j = 1, 2$,
		some $\leastComputation \in \sem{\alpha_j}$ since $\sem{\alpha_j}$ is total by IH.
		Then $\sem{\gamma}$ is total because $\leastComputation \in \sem{\gamma}$ for each $\pstate{v} \in \states$ since $\statetime{\bot} = \statetime{\bot}$,
		and $\epsilon = \epsilon \downarrow \gamma$, 
		and $\bot = \bot \merge \bot$.
		\qedhere
	\end{enumerate}
\end{proof}

\newcommand{\TrmdL}{\TrmSet{\dL}{}}
\newcommand{\FoldL}{\FolSet{\reals}{\RVar}}

\newcommand{\HpdL}{\SyntacticSet{HP}{\dL}{}}

\newcommand{\todLstate}[1]{\restrict{#1}{\RVar}}
\newcommand{\realStates}{\mathcal{R}}
\newcommand{\todLrela}[1]{\restrict{{#1}}{\realStates \times \realStates}}

\newcommand{\dLvDash}{\vDash_{\dL}}
\newcommand{\dLsem}[1]{\sem{#1}_{\dL}}

In preparation for the proof of conservative extension (\rref{prop:conservative}),
we roughly recap $\dL$ in the following \cite{DBLP:journals/jar/Platzer08}:

\begin{remark}[$\dL$]
	Terms $\TrmdL = \polynoms{\rationals}{\RVar} \subset \Trm$ in $\dL$  are the polynomials in $\RVar$ over $\rationals$.
	Moreover, hybrid programs $\HpdL \subset \Chp$ in $\dL$ are generated by the following grammar: $\alpha, \beta \cceq x \ceq \rp \mid x \ceq * \mid \test{} \mid \evolution{}{} \mid \alpha \seq \beta \mid \alpha \cup \beta \mid \repetition{\alpha}$,
	where $\rp \in \TrmdL$ and $\chi \in \FoldL$ is a formula of first-order real-arithmetic.
	Finally, $\FmldL$ denotes the set of all formulas in $\dL$
	that is generated by the grammar $\varphi, \psi \cceq \rp_1 \sim \rp_2 \mid \neg \varphi \mid \varphi \wedge \psi \mid \varphi \vee \psi \mid \varphi \rightarrow \psi \mid \fa{x} \varphi \mid \ex{x} \varphi \mid [ \alpha ] \psi \mid \langle \alpha \rangle \psi$,
	where $\sim\; \in \{ =, \ge \}$.
	Note that $\FoldL \subset \FmldL$.

	A real state is a map from $\RVar$ to $\reals$ and $\realStates$ denotes the set of all real states.
	With $\dLsem{\cdot} \subseteq \realStates$ we denote semantics of terms and with $\dLvDash\, \subseteq \realStates \times \FmldL$ the satisfaction relation in $\dL$, respectively.
	The semantics $\dLsem{\alpha} \subseteq \realStates \times \realStates$ of a hybrid program $\alpha \in \HpdL$ is inductively defined as follows:
	\begingroup
	\allowdisplaybreaks
	\begin{align*}%
		&\dLsem{x \ceq \rp} 
		= \{ (v, w) \in \realStates \times \realStates \mid w = v \subs{x}{\dLsem{\rp} v} \} \\
		&\dLsem{x \ceq *}  
		= \{ (v, w) \in \realStates \times \realStates \mid w(y) = v(y) \text{ for all } y \neq x \} \\
		&\dLsem{\evolution{}{}} 
		= \{ (\odeSolution(0), \odeSolution(r)) \in \realStates \times \realStates \mid \\
			&\qquad\qquad \odeSolution(\zeta) \dLvDash x' = \rp \wedge \chi \text{ and } \odeSolution(\zeta) = \odeSolution(0) \text{ on } \{x\}^\complement \text{ for all } \zeta \in [0,r] \\
			&\qquad\qquad \text{for a solution } \odeSolution : [0, r] \rightarrow \realStates \text{ with } \text{\footnotesize$\odeSolution(\zeta)(x') = \solutionDerivative{\odeSolution}{x}(\zeta)$} \} \\
		&\dLsem{\test{}}  
		= \{ (v, v) \in  \realStates \times \realStates \mid v \dLvDash \chi \} \\
		&\dLsem{\alpha \cup \beta}  
		= \dLsem{\alpha}  \cup \dLsem{\beta} \\
		&\dLsem{\alpha \seq \beta}  
		= \dLsem{\alpha}  \circ \dLsem{\beta} \sidecondition{$\circ$ is composition of relations} \\
		&\dLsem{\repetition{\alpha}} = \bigcup_{n \in \naturals} \dLsem{\alpha^n} 
	\end{align*}
	\endgroup
\end{remark}

For a state $\pstate{v} \in \states$, 
the real state $\todLstate{v} \in \realStates$ is the restriction of $\pstate{v}$ to $\RVar$.
The restriction $\todLrela{U}$ of a set $U \subseteq \states \times \recTraces \times \botop{\states}$ to $\realStates \times \realStates$ is defined by
\begin{equation*}
	\todLrela{U} = \{ (\todLstate{v}, \todLstate{w}) \in \realStates \times \realStates \mid (\pstate{v}, \trace, \pstate{w}) \in U \text{ with } \pstate{w} \neq \bot \} \text{.}
\end{equation*}

In preparation for the proof of \rref{prop:conservative}, 
\rref{lem:conservativeProgramParts} states that $\dLCHP$ is conservative extension of $\dL$ \wrt terms and formulas in hybrid programs $\HpdL$.
Building on this
\rref{lem:conservativeProgSemantics} shows that the semantics of CHPs is a conservative extension of the program semantics in $\dL$.
We continue to denote the semantics of terms and programs in \dLCHP by $\sem{\cdot}$ and the satisfaction relation by $\vDash$.

\begin{lemma}[Conservative extension of program parts]
	\label{lem:conservativeProgramParts}
	Let $\rp \in \TrmdL$ be a $\dL$-term and $\chi \in \FoldL$ a formula of first-order real arithmetic. 
	Then for all states $\pstate{v} \in \states$ the following holds:
	\begin{enumerate}
		\item $\dLsem{\rp} \todLstate{v} = \sem{\rp} \pstate{v}$
		\item $\todLstate{v} \dLvDash \chi$ iff $\pstate{v} \vDash \chi$
	\end{enumerate}
\end{lemma}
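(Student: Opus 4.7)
The plan is to prove both statements by a straightforward simultaneous structural induction, exploiting the fact that $\dL$-terms and $\dL$-formulas only reference the real-variable fragment of a $\dLCHP$ state, on which the two semantics are defined by identical clauses.

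For part 1, I would induct on the structure of $\rp \in \TrmdL = \polynoms{\rationals}{\RVar}$. In the base cases, for a variable $x \in \RVar$ I have $\dLsem{x}\todLstate{v} = \todLstate{v}(x) = \pstate{v}(x) = \sem{x}\pstate{v}$, where the middle equality is just the definition of the restriction $\todLstate{\cdot}$, and the $\sem{x}\pstate{v} = \pstate{v}(x)$ clause is from \rref{fig:termSemantics}. For a rational constant $c$ both semantics return $c$ directly. The inductive cases $\rp_1 + \rp_2$ and $\rp_1 \cdot \rp_2$ reduce to the induction hypothesis because the corresponding clauses of $\dLsem{\cdot}$ and $\sem{\cdot}$ both push through the arithmetic operator unchanged. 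No other term constructors occur in $\TrmdL$, so the induction is complete.

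For part 2, I would induct on the structure of $\chi \in \FoldL$. Atomic cases $\rp_1 = \rp_2$ and $\rp_1 \ge \rp_2$ follow immediately from part 1: the two satisfaction relations evaluate to the same comparison between the same pair of real numbers. The propositional cases $\neg, \wedge, \vee, \rightarrow$ reduce to the induction hypothesis because both $\dLvDash$ and $\vDash$ treat connectives identically. The only step that deserves a moment of care is the quantifier case $\fa{x}\varphi$ (resp.\ $\ex{x}\varphi$), where $x \in \RVar$: I need to observe that updating and restricting commute, i.e., $\todLstate{\pstate{v}\subs{x}{d}} = \todLstate{v}\subs{x}{d}$ for every $d \in \reals$. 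This is immediate from the pointwise definition of $\subs{\cdot}{\cdot}$ and the fact that $x \in \RVar$, so the restriction changes nothing on the coordinate being updated. Chaining this with the induction hypothesis gives $\todLstate{v} \dLvDash \fa{x}\varphi$ iff for all $d \in \reals$, $\todLstate{v}\subs{x}{d} \dLvDash \varphi$, iff for all $d$, $\pstate{v}\subs{x}{d} \vDash \varphi$, iff $\pstate{v} \vDash \fa{x}\varphi$; the existential case is analogous.

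I do not anticipate a genuine obstacle here: since $\FoldL$ forbids dynamic modalities, trace terms, non-real quantifiers, and the additional sorts of $\dLCHP$, the induction never has to cross into the parts of the \dLCHP semantics that genuinely differ from \dL (program modalities, communication recorders, trace concatenation). The whole lemma is essentially a verification that the clauses match syntactically on the common fragment, and the one place where a mismatch could conceivably occur is the quantifier case, which is closed by the update-restriction commutation observation above.
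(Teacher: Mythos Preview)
Your proposal is correct and takes essentially the same approach as the paper: the paper's proof is the one-liner ``by induction on the structure of $\rp$ and $\chi$, respectively,'' and you have simply spelled out the cases of that induction, including the update--restriction commutation for quantifiers that makes the argument go through.
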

\begin{proof}
	The proof is by induction on the structure of $\rp$ and~$\chi$, respectively.
	\qedhere
\end{proof}

\begin{lemma}[Conservative program semantics] \label{lem:conservativeProgSemantics}
	The program semantics of $\dLCHP$ is a \emph{conservative extension} of the program semantics in $\dL$. 
	That is, 
	\begin{equation*}
		\dLsem{\gamma} = \todLrela{\sem{\gamma}} \; \nonrelevant{= \{ (\todLstate{v}, \todLstate{w}) \mid \computation \in \sem{\gamma} \text{ with } \pstate{w} \neq \bot \}}
	\end{equation*}
	for a hybrid program $\gamma \in \HpdL \subset \Chp$.
\end{lemma}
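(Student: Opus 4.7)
The plan is to prove the lemma by structural induction on the hybrid program $\gamma \in \HpdL$, taking $\alpha^n$ to be structurally smaller than $\repetition{\alpha}$, analogously to the proof of \rref{prop:prefixClosedAndTotal}. In each case I would show both inclusions $\dLsem{\gamma} \subseteq \todLrela{\sem{\gamma}}$ and $\todLrela{\sem{\gamma}} \subseteq \dLsem{\gamma}$.

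For the atomic base cases $x \ceq \rp$, $x \ceq *$, $\test{}$, and $\evolution{}{}$, the strategy is uniform. The $\pLeast$-part of the \dLCHP semantics contributes nothing under $\todLrela{\cdot}$ because every state in $\pLeast$ has $\pstate{w} = \bot$, which is filtered out by definition of $\todLrela{\cdot}$. The remaining part has trace component $\epsilon$, so it reduces to a binary relation on states, and \rref{lem:conservativeProgramParts} yields $\sem{\rp}\pstate{v} = \dLsem{\rp}\todLstate{v}$ and $\pstate{v} \vDash \chi \iff \todLstate{v} \dLvDash \chi$, giving the desired equality for assignment, nondeterministic assignment, and test. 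The continuous evolution case is the delicate one: I would show that a \dL-solution $\odeSolution : [0,r] \to \realStates$ of $x' = \rp \wedge \chi$ lifts to a \dLCHP-solution $\widehat{\odeSolution} : [0,r] \to \states$ by fixing the non-real components to those of the initial state and letting $\globalTime$ evolve with slope $1$ as demanded by \rref{def:programsemantics}, and conversely that projecting a \dLCHP-solution onto $\RVar$ yields a \dL-solution, using that $\rp \in \polynoms{\rationals}{\RVar}$ and $\chi \in \FolRA$ depend only on real variables.

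For the inductive cases, $\alpha \cup \beta$ is immediate because $\todLrela{\cdot}$ commutes with unions: $\todLrela{\sem{\alpha} \cup \sem{\beta}} = \todLrela{\sem{\alpha}} \cup \todLrela{\sem{\beta}}$, and the IH closes the goal. For $\alpha \seq \beta$ I would split $\sem{\alpha \seq \beta} = \botop{\sem{\alpha}} \cup (\sem{\alpha} \continuation \sem{\beta})$. Again $\todLrela{\botop{\sem{\alpha}}} = \emptyset$, and since $\alpha \in \HpdL$ yields traces that are always $\epsilon$, the $\continuation$-composition reduces to ordinary relational composition on states, which matches $\dLsem{\alpha} \circ \dLsem{\beta}$ by IH. For $\repetition{\alpha}$, I would use $\sem{\repetition{\alpha}} = \bigcup_{n \in \naturals} \sem{\alpha^n}$ together with the IH on each $\alpha^n$ and $\dLsem{\repetition{\alpha}} = \bigcup_n \dLsem{\alpha^n}$.

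The main obstacle I expect is the backward direction of sequential composition: given $(v, w) \in \dLsem{\alpha} \circ \dLsem{\beta}$ with witness $u$, the IH provides lifted computations $(\hat{v}, \epsilon, \hat{u}) \in \sem{\alpha}$ and $(\hat{u}', \epsilon, \hat{w}) \in \sem{\beta}$ with $\todLstate{\hat{u}} = u = \todLstate{\hat{u}'}$, but $\hat{u}$ and $\hat{u}'$ may disagree on $\NVar \cup \TVar$. To glue them I would invoke a coincidence-style observation: since programs in $\HpdL$ do not read or write trace or integer variables, the \dLCHP semantics of $\beta$ started from $\hat{u}$ equals that started from $\hat{u}'$ up to the non-real components, so I can replace $\hat{u}'$ by $\hat{u}$ and obtain a composable pair. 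This tiny gluing lemma (which also implicitly underpinned the treatment of $\{x,\globalTime\}^\complement$ in the ODE case) is the only place where the discrepancy between the \dL state space $\realStates$ and the richer \dLCHP state space truly needs handling; once it is in place, the inductive step collapses to routine calculation.
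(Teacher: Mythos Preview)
Your proposal follows essentially the same structural induction as the paper's proof, with the same handling of each case: the $\pLeast$ and $\botop{\sem{\alpha}}$ parts vanish under $\todLrela{\cdot}$, atomic cases reduce via \rref{lem:conservativeProgramParts}, and compound cases use that $\todLrela{\cdot}$ commutes with the relevant semantic operators. The gluing obstacle you single out for sequential composition is genuine, but the paper simply asserts $\todLrela{(\sem{\alpha} \continuation \sem{\beta})} = \todLrela{\sem{\alpha}} \circ \todLrela{\sem{\beta}}$ without justification; your coincidence-based fix (which is exactly \rref{lem:programCoincidence} instantiated with $V=\RVar$, since $\FV(\beta)\subseteq\RVar$ for $\beta\in\HpdL$) makes your argument more careful than the paper's own, not a deviation from it.
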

\begin{proof}
	The proof is by induction on the structure of program $\gamma$. 
	We consider $\alpha^n$ to be structurally smaller than $\repetition{\alpha}$ for all programs $\alpha$ and all $n \in \naturals$. 
	\begin{enumerate}
		\item For $\gamma \in \{ x \ceq \rp, x \ceq *, \test{}, \evolution{}{} \}$,
		we have $\sem{\gamma} = \pLeast \cup M_\gamma$ for some $M_\gamma \subseteq \states \times \{\epsilon\} \times \states$.
		Observe that $\todLrela{\sem{\gamma}} = \todLrela{M_\gamma}$ because $\pLeast = \states \times \{\epsilon\} \times \{\bot\}$.
		Finally, $\dLsem{\gamma} = \todLrela{M_\gamma}$ by definition of $\dLsem{\gamma}$ and $M_\gamma$.
		For instance, in case $\gamma \equiv x \ceq \rp$, 
		we conclude with
		$\todLrela{M_\gamma} 
		= \todLrela{\{ (\pstate{v}, \epsilon, \pstate{w}) \mid w 
		= \pstate{v} \subs{x}{\sem{\rp} \pstate{v}}  \}} 
		= \{ (\todLstate{v}, \todLstate{w} ) \mid \pstate{w} = \pstate{v} \subs{x}{\sem{\rp} \pstate{v}} \}
		= \{ (\todLstate{v}, \todLstate{w} ) \mid \todLstate{w} = (\todLstate{v}) \subs{x}{\dLsem{\rp} \todLstate{v}} \} 
		= \dLsem{x \ceq \rp}$ 
		because $\sem{\rp} \pstate{v} = \dLsem{\rp} \todLstate{v}$ by  \rref{lem:conservativeProgramParts}.
		
		\item For $\gamma \equiv \alpha \cup \beta$, observe that $\todLrela{\sem{\alpha}} \cup \todLrela{\sem{\beta}} = \todLrela{(\sem{\alpha} \cup \sem{\beta})}$.
		Then by IH, $\dLsem{\gamma} = \dLsem{\alpha} \cup \dLsem{\beta} \overset{\text{IH}}{=} \todLrela{\sem{\alpha}} \cup \todLrela{\sem{\beta}} = \todLrela{\sem{\gamma}}$.
		
		\item For $\gamma \equiv \alpha \seq \beta$, we have $\todLrela{\sem{\gamma}} = \todLrela{\botop{\sem{\alpha}}} \cup \todLrela{(\sem{\alpha} \continuation \sem{\beta})}$ since restriction $\todLrela{\cdot\,}$ distributes over union $\cup$. 
		Observe that $\todLrela{\botop{\sem{\alpha}}} = \emptyset$ and $\todLrela{(\sem{\alpha} \continuation \sem{\beta})} = \todLrela{\sem{\alpha}} \circ \todLrela{\sem{\beta}}$, where $\circ$ is composition of relations.
		Thus, $\dLsem{\gamma} = \dLsem{\alpha} \circ \dLsem{\beta} \overset{\text{IH}}{=} \todLrela{\sem{\alpha}} \circ \todLrela{\sem{\beta}} = \todLrela{\sem{\gamma}}$ by IH.
		
		\item $\gamma \equiv \repetition{\alpha}$, then $(\pstate{v}, \pstate{w}) \in \dLsem{\repetition{\alpha}}$ iff $(\pstate{v}, \pstate{w}) \in \dLsem{\alpha^n}$ for some $n \in \naturals$ iff, by IH, $(\pstate{v}, \pstate{w}) \in \todLrela{\sem{\alpha^n}}$ for some $n \in \naturals$ iff $(\pstate{v}, \pstate{w}) \in \todLrela{\sem{\repetition{\alpha}}}$.
		\qedhere
	\end{enumerate}
\end{proof}

\begin{proof}[\rref{prop:conservative}]
	We have to prove that a formula $\varphi \in \Fml \cap \FmldL$ is valid in $\dLCHP$ iff it is valid in $\dL$,
	where $\varphi$ is valid in $\dL$ if $\todLstate{v} \dLvDash \varphi$ for all $\pstate{v} \in \states$, and $\varphi$ is valid in $\dLCHP$ if $\pstate{v} \vDash \varphi$ for all $\pstate{v} \in \states$. 
	The proof is by induction on the structure of $\varphi$:
	\begin{enumerate}
		\item $\varphi \equiv \rp_1 \sim \rp_2$, where $\sim \; \in \{ =, \ge \}$, then $\pstate{v} \vDash \rp_1 \sim \rp_2$ iff $\sem{\rp_1} \pstate{v} \sim \sem{\rp_2} \pstate{v}$ iff,
		by \rref{lem:conservativeProgramParts}, $\dLsem{\rp_1} \todLstate{v} \sim \dLsem{\rp_2} \todLstate{v}$ iff $\todLstate{v} \dLvDash \expr_1 \sim \expr_2$.

		\item For the propositional connectives $\neg, \wedge, \vee, \rightarrow$ and quantifiers $\forall, \exists$ the proof is straightforward by IH.
		
		\item For $\varphi \equiv [ \alpha ] \psi$, observe $(\todLstate{v}, \todLstate{w}) \in \dLsem{\alpha}$ iff $\computation \in \sem{\alpha}$ with $\pstate{w} \neq \bot$ by \rref{lem:conservativeProgSemantics} and $\trace = \epsilon$ since $\alpha \in \Chp$.
		Thus, $\pstate{v} \vDash [ \alpha ] \psi$ iff $\pstate{w} \cdot \trace \vDash \psi$ for all $\computation \in \sem{\alpha}$ with $\pstate{w} \cdot \trace = \pstate{w} \neq \bot$ iff, 
		by IH, $\todLstate{w} \dLvDash \psi$ for all $\computation \in \sem{\alpha}$ with $\pstate{w} \neq \bot$ iff $\todLstate{w} \dLvDash \psi$ for all $(\todLstate{v}, \todLstate{w}) \in \sem{\alpha}_{\dL}$ iff $\todLstate{v} \dLvDash [ \alpha ] \psi$.
		\qedhere
	\end{enumerate}
\end{proof}

\rref{lem:compositionAssoc} shows that the composition operator $\closedComposition$ is associative. 
See \rref{sec:semantics} for the definition of the operators involved.

\begin{lemma}[Composition is associative] \label{lem:compositionAssoc}
	Let $U, M, L \subseteq \states \times \recTraces \times \botop{\states}$. 
	Then $U \closedComposition (M \closedComposition L) = (U \closedComposition M) \closedComposition L$.
\end{lemma}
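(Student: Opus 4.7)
The plan is to expand both sides by unfolding the definition $U \closedComposition M = \botop{U} \cup (U \continuation M)$, distribute the continuation operator $\continuation$ over the unions that appear, and then reduce the remaining subterms to a common normal form consisting of three disjoint pieces corresponding to (i) aborts in the first component, (ii) aborts in the second component, and (iii) full three-part computations.

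Concretely, I would first compute
\begin{align*}
	U \closedComposition (M \closedComposition L)
	&= \botop{U} \cup \bigl(U \continuation (\botop{M} \cup (M \continuation L))\bigr)\\
	&= \botop{U} \cup (U \continuation \botop{M}) \cup \bigl(U \continuation (M \continuation L)\bigr)\text{,}
\end{align*}
using that $\continuation$ distributes over $\cup$ in its right argument (immediate from the existential definition). Symmetrically,
\begin{align*}
	(U \closedComposition M) \closedComposition L
	&= \botop{\botop{U} \cup (U \continuation M)} \cup \bigl((\botop{U} \cup (U \continuation M)) \continuation L\bigr)\\
	&= \botop{U} \cup \botop{U \continuation M} \cup \bigl((U \continuation M) \continuation L\bigr)\text{,}
\end{align*}
using that $\botop{\cdot}$ commutes with $\cup$, that $\botop{\botop{U}} = \botop{U}$, that $\continuation$ also distributes on the left, and crucially that $\botop{U} \continuation L = \emptyset$ because all final states in $\botop{U}$ equal $\bot$ while $\continuation$ requires a non-$\bot$ join state.

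Comparing the two normal forms then reduces the lemma to two equalities. The first, $U \continuation \botop{M} = \botop{U \continuation M}$, follows by unfolding: both sets consist exactly of those $(\pstate{v}, \trace_1 \cdot \trace_2, \bot)$ for which there exist $\pstate{u} \neq \bot$ and some $\pstate{w}'$ with $(\pstate{v}, \trace_1, \pstate{u}) \in U$ and $(\pstate{u}, \trace_2, \pstate{w}') \in M$; the final $\bot$ comes from $\botop{M}$ on the left and from $\botop{\cdot}$ on the right. The second, $U \continuation (M \continuation L) = (U \continuation M) \continuation L$, is where the real work sits, but it is just the associativity of trace concatenation $\trace_1 \cdot (\trace_2 \cdot \trace_3) = (\trace_1 \cdot \trace_2) \cdot \trace_3$ together with the fact that the two intermediate join states $\pstate{u}_1, \pstate{u}_2 \neq \bot$ required can be chosen in either order.

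The main obstacle is purely bookkeeping: keeping straight which occurrences of $\bot$ come from the $\botop{\cdot}$ wrapper versus from aborting computations inside $U, M, L$, and checking that the side condition ``middle state $\neq \bot$'' in $\continuation$ correctly kills $\botop{U} \continuation L$ while still permitting $U \continuation \botop{M}$ to be non-empty via an interior non-$\bot$ state. Once these cases are disentangled, no further calculation is required beyond associativity of $\cdot$ on traces.
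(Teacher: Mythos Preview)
Your proof is correct and rests on the same two key identities the paper uses: associativity of $\continuation$ and $U \continuation \botop{M} = \botop{U \continuation M}$. The difference is purely in presentation. The paper does a direct element-chase, taking an arbitrary $\computation$ on one side and pushing it through a case split on which component of the union it belongs to, invoking the two identities as needed along the way. You instead expand both sides algebraically, distribute $\continuation$ over $\cup$, simplify using $\botop{\botop{U}} = \botop{U}$ and $\botop{U} \continuation L = \emptyset$, and arrive at two three-term normal forms that match once the two identities are applied. Your route makes the structure of the argument more transparent and isolates exactly which algebraic laws are doing the work; the paper's route is shorter on the page but leaves those laws implicit in the case analysis. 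Neither approach requires anything the other does not.
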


\begin{proof}
	Let $\computation \in U \closedComposition (M \closedComposition L)$.
	If $\computation \in \botop{U}$, 
	then $\computation \in \botop{U} \cup \botop{(U \continuation M)} \subseteq \botop{(U \closedComposition M)} \subseteq (U \closedComposition M) \closedComposition L$. 
	If $\computation \in U \continuation (M \closedComposition L)$, 
	computations $(\pstate{v}, \trace_1, \pstate{u}) \in U$ and $(\pstate{u}, \trace_2, \pstate{w}) \in (M \closedComposition L)$ with $\trace = \trace_1 \cdot \trace_2$ exist.
	Now, if $(\pstate{u}, \trace_2, \pstate{w}) \in \botop{M}$,
	we have $\computation \in U \continuation \botop{M} = \botop{(U \continuation M)} \subseteq \botop{(U \closedComposition M)}$.
	If $(\pstate{u}, \trace_2, \pstate{w}) \in M \continuation L$, 
	then $\computation \in U \continuation (M \continuation L)$.
	Since $\continuation$ is associative, $\computation \in (U \continuation M) \continuation L \subseteq (U \closedComposition M) \continuation L \subseteq (U \closedComposition M) \closedComposition L$.

	Conversely, let $\computation \in (U \closedComposition M) \closedComposition L$.
	If $\computation \in \botop{(U \closedComposition M)}$, 
	then $\computation \in \botop{U}$ or $\computation \in \botop{(U \continuation M)}$.
	If $\computation \in \botop{U}$, we conclude $\botop{U} \subseteq U \closedComposition (M \closedComposition L)$.
	If $\computation \in \botop{(U \continuation M)}$, 
	computations $(\pstate{v}, \trace_1, \pstate{u}) \in U$ and $(\pstate{u}, \trace_2, \pstate{w}) \in \botop{M} \subseteq M \closedComposition L$ with $\trace = \trace_1 \cdot \trace_2$ exist.
	Thus, $\computation \in U \continuation (M \closedComposition L) \subseteq U \closedComposition (M \closedComposition L)$. 
	Finally, if $\computation \in (U \closedComposition M) \continuation L$, 
	computations $(\pstate{v}, \trace_1, \pstate{u}) \in U \closedComposition M$ and $(\pstate{u}, \trace_2, \pstate{w}) \in L$ with $\trace = \trace_1 \cdot \trace_2$ exist.
 	Since $\pstate{u} \neq \bot$, $(\pstate{v}, \trace_1, \pstate{u}) \in U \continuation M$ such that $\computation \in (U \continuation M) \continuation L$.
	Thus, $\computation \in U \continuation (M \continuation L) \subseteq U \continuation (M \closedComposition L) \subseteq U \closedComposition (M \closedComposition L)$ because $\continuation$ is associative.
	\qedhere
\end{proof}

\section{Static Semantics} \label{app:staticSemantics}

The static semantics captures with free variables $\FV(\cdot)$, 
which variables potentially influence terms, programs, or formulas,
and with bound variables $\BV(\cdot)$,
which variables are written by a program.
Following the static semantics of $\dL$ \cite{DBLP:journals/jar/Platzer17} in general,
we give more precise coincidence properties,
which take the communication history into account.
Therefore, we adapt the notions of accessed channels and communication-aware coincidence for terms from Zwiers \cite{Zwiers_Phd} and lift them to CHPs and formulas of dynamic logic.

\begin{remark}[Communication-awareness] \label{rem:awareness}
	Communication-aware coincidence reflects that the portion of a communication trace $\pstate{v}(\historyVar)$ a formula over trace variable $\historyVar$ depends on shrinks with projections of $\historyVar$. 
	For example, $\len{\historyVar} > 0$ depends on the full trace $\pstate{v}(\historyVar)$ but $\len{\historyVar \downarrow \ch{}} > 0$ only on $\pstate{v}(\historyVar) \downarrow \ch{}$. 
	The sets of accessed channels $\CN(\expr)$, and $\CN(\varphi)$ in terms $\expr$ and formulas $\varphi$ collect the relevant channels of all communication traces $\pstate{v}(\historyVar)$ with $\historyVar \in \TVar$.
\end{remark}

\begin{definition}[Bound variables] \label{def:boundVariables}
	The set of \emph{bound variables} 
	$\BV(\gamma)$ of a program $\gamma \in \Chp$ is inductively defined in \rref{fig:boundFormulaAndProgramVars}.
\end{definition}

\begin{figure}[ht]%
	\vspace{-3em}
	\begin{align*}
		\BV(x \ceq \expr) = \BV(x \ceq *) & = \{ x \} \\
		\BV(\evolution{}{}) & = \{ x, \globalTime \} \\
		\BV(\test{}) & = \emptyset \\
		\BV(\send{}{}{}) & = \{ \historyVar \} \\
		\BV(\receive{}{}{}) & = \{ \historyVar, x \} \\
		\BV(\alpha \cup \beta) = \BV(\alpha \seq \beta) = \BV(\alpha \parOp \beta) & = \BV(\alpha) \cup \BV(\beta) \\
		\BV(\repetition{\alpha}) & = \BV(\alpha)
	\end{align*}%
	\vspace*{-1.5em}
	\caption[Bound variables of formulas and programs]{
		Inductive definition of the set of bound variables $\BV(\gamma)$ of a program $\gamma \in \Chp$.
	}
	\label{fig:boundFormulaAndProgramVars}
	\vspace*{-1em}
\end{figure}

\rref{def:boundVariables} is as usual except that continuous evolution silently (no syntactic occurrence) binds the global time $\globalTime$ since it evolves with every ODE.
The bound effect property in \rref{lem:boundEffect} takes into account that the worlds reachable by a program consist of the final state and the communication as opposed to only considering the final state.

\begin{lemma}[Bound effect property] \label{lem:boundEffect}
	The set of bound variables $\BV(\gamma)$ has the \emph{bound effect property} for a program $\gamma \in \Chp$. 
	That is, $\pstate{v} = \pstate{w}$ on $\BV(\gamma)^\complement \cup \TVar$ and $\pstate{v} = \pstate{w} \cdot \trace$ on $\BV(\gamma)^\complement$ for all $\computation \in \sem{\gamma}$ with $\pstate{w} \neq \bot$.
\end{lemma}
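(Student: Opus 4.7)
The plan is a structural induction on $\gamma$ establishing both conjuncts simultaneously, with $\alpha^n$ treated as structurally smaller than $\repetition{\alpha}$ so that the loop case reduces to an inner induction on $n$, exactly as in \rref{prop:prefixClosedAndTotal}.

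For the non-communicating atomic programs ($x \ceq \rp$, $x \ceq *$, $\test{\chi}$, and $\evolution{x' = \rp}{\chi}$), every terminating computation has $\trace = \epsilon$, so $\pstate{w} \cdot \trace = \pstate{w}$ and both conjuncts collapse to the observation that the semantics modifies only $\{x\}$, $\emptyset$, or $\{x, \globalTime\}$, exactly matching \rref{def:boundVariables}. For $\send{}{}{}$ and $\receive{}{}{}$ a terminating computation fixes $\pstate{w}$ (either $\pstate{v}$ or $\pstate{v} \subs{x}{a}$) and makes $\trace$ a single event whose only recorder is $\historyVar$, so $\trace(\historyVar') = \epsilon$ for every $\historyVar' \neq \historyVar$; both conjuncts follow since neither $x$ (in the receive case) nor $\historyVar$ lies in $\BV^\complement$.

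For $\alpha \cup \beta$ the IH on the chosen branch suffices, using $\BV(\alpha \cup \beta)^\complement \subseteq \BV(\alpha)^\complement \cap \BV(\beta)^\complement$. For $\alpha \seq \beta$ a terminating computation factors through some $\pstate{u} \neq \bot$ with $\trace = \trace_1 \cdot \trace_2$, so chaining the two IHs gives $\pstate{v}(y) = \pstate{u}(y) = \pstate{w}(y)$ for real variables $y \in \BV(\alpha \seq \beta)^\complement$, while for trace variables $\historyVar \notin \BV(\alpha \seq \beta)$ both IHs force $\trace_i(\historyVar) = \epsilon$, collapsing $(\pstate{w} \cdot \trace)(\historyVar)$ back to $\pstate{w}(\historyVar) = \pstate{v}(\historyVar)$ via the first conjunct.

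The main obstacle is parallel composition $\alpha \parOp \beta$. A terminating computation splits as $(\pstate{v}, \trace \downarrow \alpha, \pstate{w}_\alpha) \in \sem{\alpha}$ and $(\pstate{v}, \trace \downarrow \beta, \pstate{w}_\beta) \in \sem{\beta}$ with $\pstate{w} = \pstate{w}_\alpha \merge \pstate{w}_\beta$, where $\pstate{w}_\alpha, \pstate{w}_\beta \neq \bot$ since $\pstate{w} \neq \bot$. For $y \in \BV(\alpha \parOp \beta)^\complement$ the merge definition yields $\pstate{w}(y) = \pstate{w}_\beta(y)$ (using $y \notin \BV(\alpha)$), reducing the first conjunct to the IH on $\beta$ applied with the projected trace $\trace \downarrow \beta$. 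The trace conjunct hinges on $\trace(\historyVar) = \epsilon$ for $\historyVar \notin \BV(\alpha \parOp \beta)$: every event of $\trace = \trace \downarrow (\alpha \parOp \beta)$ occurs in $\trace \downarrow \alpha$ or $\trace \downarrow \beta$, and by the IH applied to the respective subprogram any such event has a recorder in $\BV(\alpha) \cup \BV(\beta) = \BV(\alpha \parOp \beta)$, so none contributes to $\trace(\historyVar)$. The familiar chain then gives $\pstate{v}(\historyVar) = \pstate{w}_\beta(\historyVar) = \pstate{w}(\historyVar) = (\pstate{w} \cdot \trace)(\historyVar)$.
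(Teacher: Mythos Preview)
Your proposal is correct and uses the same structural induction as the paper, but organizes the argument differently. The paper factors the lemma into three separate (and largely routine) inductions: (i) $\trace(\historyVar) = \epsilon$ for every $\historyVar$ outside the recorder set $\REC(\gamma)$, (ii) $\pstate{v} = \pstate{w}$ on all of $\TVar$, and (iii) $\pstate{v} = \pstate{w}$ on $\BV(\gamma)^\complement$; only (iii) is worked out case by case, and both conjuncts of the lemma follow from (i)--(iii) by simple algebra. You instead run one simultaneous induction on both conjuncts, recovering the analogue of (i) in each compound case by playing the two halves of the IH against each other (from $\pstate{v}(\historyVar) = \pstate{u}(\historyVar)$ and $\pstate{v}(\historyVar) = \pstate{u}(\historyVar) \cdot \trace_1(\historyVar)$ conclude $\trace_1(\historyVar) = \epsilon$). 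That works, but your sketch leaves the first conjunct on $\TVar \cap \BV(\gamma)$ implicit: in $\alpha \parOp \beta$, for a trace variable $\historyVar \in \BV(\alpha)$ the merge gives $\pstate{w}(\historyVar) = \pstate{w}_\alpha(\historyVar)$, so you must invoke the IH on $\alpha$ (not $\beta$) to reach $\pstate{v}(\historyVar)$, and similarly a one-line chain through $\pstate{u}$ is missing in the sequential case. The paper's factorization handles this uniformly via (ii), which is the main payoff of its decomposition.
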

\begin{proof}
	Let $\REC(\gamma) \subseteq \TVar$ be the set of recorder variables in $\gamma$.
	Then $\trace(\historyVar) = \epsilon$ for all $\historyVar \not\in \REC(\gamma)$ by an induction on the structure of $\gamma$,
	where $\trace(\historyVar)$ is the subtrace of $\trace$ of communication recorded by $\historyVar$.
	Hence, $\pstate{w} = \pstate{w} \cdot \trace$ on $\REC(\gamma)^\complement$.
	Since $\REC(\gamma) \subseteq \BV(\gamma)$,
	we obtain $\pstate{v} = \pstate{w} = \pstate{w} \cdot \trace$ on $\BV(\gamma)^\complement \cap \REC(\gamma)^\complement = (\BV(\gamma) \cup \REC(\gamma))^\complement = \BV(\gamma)^\complement$ if only $\pstate{v} = \pstate{w}$ on $\BV(\gamma)^\complement$.
	The latter also shows $\pstate{v} = \pstate{w}$ on $\BV(\alpha)^\complement \cup \TVar$ because $\pstate{v} = \pstate{w}$ on $\TVar$ by another induction on $\gamma$.
	Now, we prove $\pstate{v} = \pstate{w}$ on $\BV(\gamma)^\complement$ by induction on the structure of $\gamma$. 
	We consider $\alpha^n$ to be structurally smaller than $\alpha^*$ for all programs $\alpha$ and all $n \in \naturals$.
	\begin{enumerate}
		\item $\gamma \in \{ x \ceq \rp, x \ceq *, \receive{}{}{x} \}$ and $\computation \in \sem{\gamma}$,
		then $\pstate{w} = \pstate{v} \subs{x}{a}$ for some $a \in \reals$.
		Thus, $\pstate{v} = \pstate{w}$ on $\{ x \}^\complement = \BV(\gamma)^\complement$.
		
		\item $\gamma \equiv \evolution{}{}$ and $\computation \in \sem{\gamma}$, 
		then a solution $\odeSolution: [0, \duration] \rightarrow \states$ exists with $\odeSolution(\zeta) = \pstate{v}$ on $\{ x, \globalTime \}^\complement$ for all $\zeta \in [0, \duration]$ and $\pstate{w} = \odeSolution(\duration)$. 
		Thus, $\pstate{v} = \pstate{w}$ on $\{ x, \globalTime \}^\complement = \BV(\gamma)^\complement$.
		
		\item $\gamma \in \{ \test{}, \send{}{}{} \}$ and $\computation \in \sem{\gamma}$, then $\pstate{v} = \pstate{w}$. Thus, $\pstate{v} = \pstate{w}$ on $\BV(\gamma)^\complement$.

		\item $\gamma \equiv \alpha \seq \beta$ and $\computation \in \sem{\gamma}$, then $\computation \in \sem{\alpha} \continuation \sem{\beta} \subseteq \sem{\gamma}$. 
		Thus, $\pstate{u} \neq \bot$ exists such that $(\pstate{v}, \trace_1, \pstate{u}) \in \sem{\alpha}$, $(\pstate{u}, \trace_2, \pstate{w}) \in \sem{\beta}$, and $\trace = \trace_1 \cdot \trace_2$. 
		By IH, $\pstate{v} = \pstate{u}$ on $\BV(\alpha)^\complement$ and $\pstate{u} = \pstate{w}$ on $\BV(\beta)^\complement$. 
		Thus, $\pstate{v} = \pstate{w}$ on $\BV(\alpha)^\complement \cap \BV(\beta)^\complement = (\BV(\alpha) \cup \BV(\beta))^\complement = \BV(\alpha \seq \beta)^\complement$.
		
		\item $\gamma \equiv \alpha \cup \beta$ and $\computation \in \sem{\gamma}$, then $\computation \in \sem{\alpha}$ or $\computation \in \sem{\beta}$.
		By IH, $\pstate{v} = \pstate{w}$ on $\BV(\alpha)^\complement$ or on $\BV(\beta)^\complement$, respectively.
		Always, $\pstate{v} = \pstate{w}$ on $\BV(\alpha)^\complement \cap \BV(\beta)^\complement = (\BV(\alpha) \cup \BV(\beta))^\complement = \BV(\alpha \cup \beta)^\complement$.
		
		\item $\gamma \equiv \repetition{\alpha}$ and $\computation \in \sem{\gamma}$, then $\computation \in \sem{\alpha^n}$ for some $n \in \naturals$. 
		Since $\alpha^n$ is structurally smaller than $\repetition{\alpha}$, we obtain $\pstate{v} = \pstate{w}$ on $\BV(\alpha^n)^\complement$ by IH. 
		Finally, $\BV(\alpha^n)^\complement \supseteq \BV(\alpha)^\complement = \BV(\repetition{\alpha})^\complement$ (note that $\alpha^0 \equiv \test{\true}$).

		\item $\gamma \equiv \alpha \parOp \beta$ and $\computation \in \sem{\gamma}$, then $\computation[proj=\alpha] \in \sem{\alpha}$, $\computation[proj=\beta] \in \sem{\beta}$, and $\pstate{w} = \pstate{w}_\alpha \merge \pstate{w}_\beta$. 
		By IH, $\pstate{v} = \pstate{w}_\beta$ on $\BV(\beta)^\complement$. 
		Moreover, $\pstate{w}_\beta = \pstate{w}$ on $\BV(\alpha)^\complement$ by definition of $\merge$ in \rref{sec:semantics}. 
		Thus, $\pstate{v} = \pstate{w}$ on $\BV(\alpha)^\complement \cap \BV(\beta)^\complement = (\BV(\alpha) \cup \BV(\beta))^\complement = \BV(\alpha \parOp \beta)^\complement$.
		\qedhere
	\end{enumerate}
\end{proof}

\begin{definition}[Parameters of terms]
	\label{def:freeTermParameters}
	The sets of free variables $\FV(\expr)$ and accessed channels $\CN(\expr)$ of a term $\expr \in \Trm$ are inductively defined in \rref{fig:freeTermParameters}.
\end{definition}

\begin{figure}[h!]
	\vspace*{-1em}
	\begin{minipage}{.5\textwidth}
		\begin{align*}
			\FV(\arbitraryVar) & = \{ \arbitraryVar \} 
			\sidecondition{$\arbitraryVar \in \V$} \\
			\FV(\kappa) & = \;\emptyset \\
			\FV(\expr_1 \bowtie \expr_2) & = \FV(\expr_1) \cup \FV(\expr_2) \\
			\FV(f(\te, \ie)) & = \FV(\te) \cup \FV(\ie) \\
			\FV(\len{\te}) & =  \FV(\te) \\
			\FV(\comItem{\ch{}, \rp_1, \rp_2}) & = \FV(\rp_1) \cup \FV(\rp_2) \\
			\FV(\te \downarrow \cset) & = \FV(\te)
		\end{align*}
	\end{minipage}
	\begin{minipage}{.5\textwidth}
		\begin{align*}
			\CN(\historyVar) & = \Chan \sidecondition{$\historyVar \in \TVar$}\\
			\CN(\arbitraryVar) = \CN(\kappa) & = \emptyset \sidecondition{$\arbitraryVar \not \in \TVar$}\\
			\CN(\expr_1 \bowtie \expr_2) & = \CN(\expr_1) \cup \CN(\expr_2) \\
			\CN(f(\te, \ie)) & = \CN(\te) \cup \CN(\ie) \\
			\CN(\len{\te}) & = \CN(\te) \\
			\CN(\comItem{\ch{}, \rp_1, \rp_2}) & = \emptyset \\
			\CN(\te \downarrow \cset) & = \CN(\te) \cap \cset
		\end{align*}
	\end{minipage}
	\caption[Free parameters of terms]{Inductive definition of free variables $\FV(\expr)$ and accessed channels $\CN(\expr)$ of a term $\expr \in \Trm$, where $\arbitraryVar \in \V$, $\kappa$ is a constant of any sort, $\bowtie \; \in \{ +, \cdot \}$ is any operator, and $f(\te, \ie) \in \{ \chan{\at{\te}{\ie}}, \val{\at{\te}{\ie}}, \stamp{\at{\te}{\ie}} \}$ is a term over $\te$ and $\ie$. }
	\label{fig:freeTermParameters}
	\vspace*{-1em}
\end{figure}

Unsuprisingly, a trace variable $\historyVar$ potentially accesses all channels $\Chan$ by \rref{def:freeTermParameters},
and with a projection on $\cset$ the accessed channels shrink by $\cset$.
The communication item $\comItem{\ch{}, \rp_1, \rp_2}$ does not access any channels since $\rp_1, \rp_2 \in \FolRA$.

For use in the coincide properties,
we lift projection for traces to states by defining that $\pstate{v} \downarrow \cset = \pstate{v}$ on $\RVar \cup \NVar$ and $(\pstate{v} \downarrow \cset)(\historyVar) = \pstate{v}(\historyVar) \downarrow \cset$ for all $\historyVar \in \TVar$ and $\cset \subseteq \Chan$.
As for traces, we often write $\pstate{v} \downarrow \expr$ instead of $\pstate{v} \downarrow \CN(\expr)$.

\begin{lemma}[Coincidence for terms] \label{lem:termCoincidence}
	The pair of free variables $\FV(\expr)$ and accessed channels $\CN(\expr)$ has the \emph{communication-aware coincidence property} for a term $\expr \in \Trm$. 
	That is, if $\pstate{v} \downarrow \expr = \pstate[alt]{v} \downarrow \expr$ on $\FV(\expr)$, then $\sem{\expr} \pstate{v} = \sem{\expr} \pstate[alt]{v}$.
	In particular, $\sem{\expr} \pstate{v} = \sem{\expr} \pstate{v}$ if $\pstate{v} = \pstate[alt]{v}$ on $\FV(\expr)$.
\end{lemma}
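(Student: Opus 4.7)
The plan is to prove the lemma by structural induction on the term $\expr \in \Trm$. The ``in particular'' clause follows immediately once the main statement is established: if $\pstate{v} = \pstate[alt]{v}$ on $\FV(\expr)$, then trivially $\pstate{v} \downarrow \CN(\expr) = \pstate[alt]{v} \downarrow \CN(\expr)$ on $\FV(\expr)$ since projecting equal values yields equal projections. Hence I focus on the main statement throughout.

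The base cases are immediate. For non-trace variables $\arbitraryVar \in \RVar \cup \NVar$, $\CN(\arbitraryVar) = \emptyset$ makes the projection the identity on the relevant portion of the state, so $\pstate{v}(\arbitraryVar) = \pstate[alt]{v}(\arbitraryVar)$ follows directly from the hypothesis. For trace variables $\historyVar$, $\CN(\historyVar) = \Chan$ makes $\pstate{v} \downarrow \Chan$ leave trace values intact, again yielding $\pstate{v}(\historyVar) = \pstate[alt]{v}(\historyVar)$. Constants trivialize. For the compound cases other than projection---arithmetic $+$, $\cdot$, trace concatenation $\te_1 \cdot \te_2$, length $\len{\te}$, tuple $\comItem{\ch{}, \rp_1, \rp_2}$, and indexed access $\val{\at{\te}{\ie}}$, $\stamp{\at{\te}{\ie}}$, $\chan{\at{\te}{\ie}}$---both $\FV$ and $\CN$ of the compound are defined as unions over immediate subterms, so the hypothesis at the compound level refines to the corresponding hypothesis at each subterm (using that a coarser projection preserves agreement under any finer projection on the same variables). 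Applying IH to the subterms and unfolding the compositional semantics from \rref{fig:termSemantics} then closes each case mechanically.

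The main obstacle is the projection case $\expr \equiv \te \downarrow \cset$: here $\CN(\te \downarrow \cset) = \CN(\te) \cap \cset$ is strictly smaller than $\CN(\te)$ in general, so the IH on $\te$ is not directly applicable---we only have state agreement under a coarser projection than the original IH requires. My plan is to handle this by strengthening the inductive statement to a parametric form for trace-sorted subterms: for every $\cset \subseteq \Chan$, if $\pstate{v} \downarrow (\CN(\te) \cap \cset) = \pstate[alt]{v} \downarrow (\CN(\te) \cap \cset)$ on $\FV(\te)$, then $(\sem{\te}\pstate{v}) \downarrow \cset = (\sem{\te}\pstate[alt]{v}) \downarrow \cset$. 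Taking $\cset = \Chan$ recovers the original statement (so that compound cases reading individual trace entries, such as $\val{\at{\te}{\ie}}$, still obtain full trace equality for their trace subterms), while the projection case $\te \downarrow \cset$ directly invokes the strengthened form on $\te$ with the same $\cset$. The remaining algebraic justification reduces to $(\trace \downarrow \cset_1) \downarrow \cset_2 = \trace \downarrow (\cset_1 \cap \cset_2)$ and commutativity of $\downarrow$ with trace concatenation, both of which are standard consequences of \rref{def:traceOperators} and are already implicit in the valuation clauses for $\te_1 \cdot \te_2$ and $\te \downarrow \cset$.
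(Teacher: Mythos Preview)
Your proposal is correct and identifies the genuine obstacle---the projection case $\te \downarrow \cset$, where $\CN(\te \downarrow \cset) = \CN(\te) \cap \cset$ may be strictly smaller than $\CN(\te)$---and your fix via a parametric strengthening of the induction hypothesis for trace-sorted terms is sound. The paper, however, takes a different route: rather than strengthening the induction, it first \emph{normalizes} the term by pushing all projections down to the leaves using the trace-algebra rewrites \RuleName{projNeutral}, \RuleName{concatDist}, \RuleName{projCut}, \RuleName{projIn}, \RuleName{projNotIn} from \rref{fig:traceAlgebra} (applied left to right), arguing that $\sem{\cdot}$, $\FV(\cdot)$, and $\CN(\cdot)$ are invariant under these rewrites. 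In the resulting normal form every projection has the shape $\historyVar \downarrow \cset$ for a raw trace variable $\historyVar$, so the projection case becomes immediate: the hypothesis directly gives $\pstate{v}(\historyVar) \downarrow \cset = \pstate[alt]{v}(\historyVar) \downarrow \cset$, which is exactly $\sem{\historyVar \downarrow \cset}\pstate{v} = \sem{\historyVar \downarrow \cset}\pstate[alt]{v}$. The trade-off is that the paper's argument is a plain structural induction once normalization is granted, but it relies on the (unproved, though routine) invariance of the static and dynamic semantics under those rewrites; your approach avoids that external dependency at the cost of carrying the extra universal quantifier over $\cset$ through every trace-term case and checking the algebraic identities $(\trace \downarrow \cset_1) \downarrow \cset_2 = \trace \downarrow (\cset_1 \cap \cset_2)$ and $(\trace_1 \cdot \trace_2) \downarrow \cset = (\trace_1 \downarrow \cset) \cdot (\trace_2 \downarrow \cset)$ inline.
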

\begin{proof}
	\Wlossg we push down projections in $\expr$ as far as possible by applying the axioms \RuleName{projNeutral}\!\!\!, \RuleName{concatDist}\!\!\!, and \RuleName{projCut}\!\!\!, and \RuleName{projIn}\!\!\!, and \RuleName{projNotIn} (see \rref{fig:traceAlgebra}) from left to right.
	This is justified since valuation $\sem{\expr}$, free variables $\FV(\expr)$, and accessed channels $\CN(\expr)$ are invariant under this rewritings.
	In the resulting normal form only raw trace variables occur in the scope of projections.
	Crucially, the communication item $\comItem{\ch{}, \rp_1, \rp_2}$ does not contain further trace variables since $\rp_1, \rp_2 \in \FolRA$.
	Now, the proof is by induction on the structure of the term $\expr$:
	\begin{enumerate}
		\item $\expr \equiv \arbitraryVar$ with $\arbitraryVar \in \V$,
		then $\FV(\expr) = \{ \arbitraryVar \}$.
		If $\arbitraryVar \in \V_{\reals\cup\naturals} \cup \{ \globalTime \}$, 
		then $\pstate{v}(\arbitraryVar) = \pstate[alt]{v}(\arbitraryVar)$ by premise.
		If $\arbitraryVar \in \TVar$, then $\pstate{v}(\arbitraryVar) = \pstate{v}(\arbitraryVar) \downarrow \expr = \pstate[alt]{v}(\arbitraryVar) \downarrow \expr = \pstate[alt]{v}(\arbitraryVar)$ since $\CN(\expr) = \Chan$ and $\pstate{v} \downarrow \expr = \pstate[alt]{v} \downarrow \expr$ on $\{\arbitraryVar\}$ by premise.
		Thus, $\sem{\expr} \pstate{v} = \pstate{v}(\arbitraryVar) = \pstate[alt]{v}(\arbitraryVar) = \sem{\expr} \pstate[alt]{v}$.
		
		\item $\expr \equiv \kappa$, where $\kappa$ is a constant of any sort, 
		then $\sem{\expr} \pstate{v} = \kappa = \sem{\expr} \pstate[alt]{v}$.
		
		\item $\expr \equiv \expr_1 \bowtie \expr_2$ with $\bowtie \;\in \{ +, \cdot \}$,
		then $\pstate{v} \downarrow \expr = \pstate[alt]{v} \downarrow \expr$ on $\FV(\expr_j)$ since $\FV(\expr_j) \subseteq \FV(\expr)$.
		For $\historyVar \in \TVar$, observe that $\pstate{v}(\historyVar) \downarrow D = \pstate[alt]{v}(\historyVar) \downarrow D$ implies $\pstate{v}(\historyVar) \downarrow E = \pstate[alt]{v}(\historyVar) \downarrow E$ if $E \subseteq D \subseteq \Chan$. 
		Hence, we obtain $\pstate{v} \downarrow \expr_j = \pstate[alt]{v} \downarrow \expr_j$ on $\FV(\expr_j)$ from $\CN(\expr_j) \subseteq \CN(\expr)$.
		By IH, we conclude as follows: 
		\begin{equation*}
			\sem{\expr} \pstate{v} = \sem{\expr_1} \pstate{v} \bowtie \sem{\expr_2} \pstate{v} \overset{\text{IH}}{=} \sem{\expr_1} \pstate[alt]{v} \bowtie \sem{\expr_2} \pstate[alt]{v} = \sem{\expr} \pstate[alt]{v}
		\end{equation*}
		
		\item $\expr \in \{ \chan{\at{\te}{\ie}}, \val{\at{\te}{\ie}}, \stamp{\at{\te}{\ie}}, \comItem{\ch{}, \rp_1, \rp_2}, \len{\te} \}$, 
		then we conclude similar to the last case by IH.
		
		\item $\expr \equiv \te \downarrow \cset$, then $\te \equiv \historyVar$ for some $\historyVar \in \TVar$ since $\expr$ is in normal form by assumption.
		From $\CN(\expr) = \cset$ and $\FV(\expr) = \{ \historyVar \}$, 
		we obtain $\pstate{v}(\historyVar) \downarrow \cset = \pstate[alt]{v}(\historyVar) \downarrow \cset$ by premise such that $\sem{\expr} \pstate{v} = \pstate{v}(\historyVar) \downarrow \cset = \pstate[alt]{v}(\historyVar) \downarrow \cset = \sem{\expr} \pstate[alt]{v}$.
		\qedhere
	\end{enumerate}
\end{proof}

We adapt must-bound variables from $\dL$ \cite{DBLP:journals/jar/Platzer17} leading to a fine-grained definition of free variables in formulas and programs.
Defining $\FV([\alpha] \psi)$ as $\FV(\alpha) \cup \FV(\psi)$ would be sound \wrt the coincidence property for formulas but imprecise because $[ x \ceq 0 ] x \ge 0$ does not depend on $x$. 
However, defining $\FV([\alpha] \psi)$ as $\FV(\alpha) \cup (\FV(\psi) \setminus \BV(\alpha))$ is unsound because $[ x \ceq 0 \cup y \ceq 0 ] x \ge 0$ depends on~$x$ as $x$ is not bound on all execution paths.
The must-bound variables are the variables that are bound on all execution paths of a program.
Hence, they may be removed from $\FV(\psi)$ soundly in $\FV([ \alpha ] \psi) = \FV(\alpha) \cup (\FV(\psi) \setminus \MBV(\alpha))$.

\begin{definition}[Must-bound variables] \label{def:mustBoundVariables}
	The set of \emph{must-bound variables} $\MBV(\gamma)$ of a program $\gamma \in \Chp$ is inductively defined in \rref{fig:mustBoundVars}.
\end{definition}

\begin{figure}[ht]
	\vspace*{-1em}
	\begin{align*}
		\MBV(\alpha) & = \BV(\alpha) \quad \sidecondition{$\alpha$ is any atomic program}\\
		\MBV(\alpha \cup \beta) & = \MBV(\alpha) \cap \MBV(\beta) \\
		\MBV(\alpha \seq \beta) = \MBV(\alpha \parOp \beta) & = \MBV(\alpha) \cup \MBV(\beta) \\
		\MBV(\repetition{\alpha}) & = \emptyset
	\end{align*}
	\caption[Must-bound variables of a program]{Inductive definition of the must-bound variables $\MBV(\gamma)$ of a program $\gamma \in \Chp$.}
	\vspace*{-1em}
	\label{fig:mustBoundVars}
	\vspace*{-1em}
\end{figure}

\begin{definition}[Free variables of programs] \label{def:freeProgramParameters}
	The set of free variables $\FV(\gamma)$ of a program $\gamma \in \Chp$ is inductively defined in \rref{fig:freeProgramParameters}.
\end{definition}

\begin{figure}[h!]
	\vspace*{-3em}
	\begin{align*}
		\FV(x \ceq \rp) & = \FV(\rp) \\
		\FV(x \ceq *) & = \emptyset \\
		\FV(\test{}) & = \FV(\chi) \\
		\FV(\evolution{}{}) & = \{ x, \globalTime \} \cup \FV(\rp) \cup \FV(\chi) \\
		\FV(\send{}{}{}) & = \{\historyVar, \globalTime\} \cup \FV(\rp) \\
		\FV(\receive{}{}{}) & = \{\historyVar, \globalTime\} \\
		\FV(\alpha \seq \beta) & = \FV(\alpha) \cup (\FV(\beta) \setminus \MBV(\alpha)) \\
		\FV(\alpha \cup \beta) = \FV(\alpha \parOp \beta) & = \FV(\alpha) \cup \FV(\beta) \\
		\FV(\repetition{\alpha}) & = \FV(\alpha)
	\end{align*}
	\vspace*{-1.5em}
	\caption[Free variables of programs]{%
		Inductive definition of free variables $\FV(\gamma)$ of a program $\gamma \in \Chp$.}
	\label{fig:freeProgramParameters}
	\vspace*{-1em}
\end{figure}

\rref{def:freeProgramParameters} is as usual except that a continuous evolution silently (no syntactic occurrence) depends on the global time $\globalTime$ as it evolves like $x$ with the ODE.
Moreover, $\globalTime$ is free in $\send{}{}{}$ and $\receive{}{}{}$ because the current time is recorded as timestamp for each communication event.

\begin{definition}[Free parameters of formulas] \label{def:freeFormulaParameters}
	The sets of free variables $\FV(\phi)$ and accessed channels $\CN(\phi)$ of a formula $\phi \in \Fml$ is inductively defined in \rref{fig:freeFormulaParameters}.
\end{definition}

\begin{figure}[h!]
	\begin{minipage}{.5\textwidth}
		\begin{align*}
			\FV(\expr_1 \sim \expr_2) & = \FV(\expr_1) \cup \FV(\expr_2) \\
			\FV(\neg \varphi) & = \FV(\varphi) \\
			\FV(\varphi \rightleftharpoons \psi) & = \FV(\varphi) \cup \FV(\psi) \\
			\FV(\quantor{\arbitraryVar} \varphi) & = \FV(\varphi) \setminus \{ \arbitraryVar \} \\
			\FV([ \alpha ] \psi) & = \FV(\alpha) \cup (\FV(\psi) \setminus \MBV(\alpha)) \\
			\FV([ \alpha ] \ac \psi) & = \FV([ \alpha ] \psi) \cup \FV(\A) \cup \FV(\C)
		\end{align*}
	\end{minipage}
	\begin{minipage}{.5\textwidth}
		\begin{align*}
			\CN(\expr_1 \sim \expr_2) & = \CN(\expr_1) \cup \CN(\expr_2) \\
			\CN(\neg \varphi) & = \CN(\varphi) \\
			\CN(\varphi \rightleftharpoons \psi) & = \CN(\varphi) \cup \CN(\psi) \\
			\CN(\quantor{\arbitraryVar} \varphi) & = \CN(\varphi) \\
			\CN( [ \alpha ] \psi) & = \CN(\psi) \\
			\CN([ \alpha ] \ac \psi) & = \CN([ \alpha ] \psi) \cup \CN(\A) \cup \CN(\C)
		\end{align*}
	\end{minipage}
	\caption[Free parameters of formulas]{%
		Inductive definition of free variables $\FV(\phi)$ and accessed channels $\CN(\phi)$ of a formula $\phi \in \Fml$, 
		where $\sim\; \in \{ =, \ge, \preceq \}$, and $\rightleftharpoons\; \in \{ \wedge, \vee, \rightarrow \}$, and $\mathcal{Q} \in \{ \forall, \exists \}$.}
	\label{fig:freeFormulaParameters}
	\vspace*{-1em}
\end{figure}

\begin{lemma}[Coincidence for formulas] \label{lem:formulaCoincidence}
	The pair of free variables $\FV(\phi)$ and accessed channels $\CN(\phi)$ has the \emph{communication-aware coincidence property} for a formula $\phi \in \Fml$: 
	If $\pstate{v} \downarrow \phi = \pstate[alt]{v} \downarrow \phi$ on $\FV(\phi)$, then: 
	$\pstate{v} \in \sem{\phi}$ iff $\pstate[alt]{v} \in \sem{\phi}$.
\end{lemma}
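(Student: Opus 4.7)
The plan is to proceed by structural induction on $\phi$, carried out simultaneously with a mutual coincidence lemma for programs, since tests, assignments, and continuous evolutions contained in programs refer back to formulas and terms. The auxiliary program coincidence statement to be proved in parallel reads: if $\pstate{v} \downarrow \alpha = \pstate[alt]{v} \downarrow \alpha$ on $\FV(\alpha)$, then for every $\computation \in \sem{\alpha}$ there exists $(\pstate[alt]{v}, \trace, \pstate[alt]{w}) \in \sem{\alpha}$ whose trace and final state agree with $(\trace, \pstate{w})$ on the channels and variables relevant to any surrounding postcondition, with $\pstate[alt]{w} = \bot$ iff $\pstate{w} = \bot$.

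The atomic cases $\expr_1 \sim \expr_2$ reduce directly to \rref{lem:termCoincidence} applied to each side, using that $\FV$ and $\CN$ distribute over the subterms. Boolean connectives are immediate by IH. For quantifiers $\fa{\arbitraryVar}\varphi$ (dually $\exists$), fix $d \in \type(\arbitraryVar)$ and observe that $\pstate{v}\subs{\arbitraryVar}{d}$ and $\pstate[alt]{v}\subs{\arbitraryVar}{d}$ still agree on $\FV(\varphi)$ under the $\CN(\varphi)$-projection (the overwrite at $\arbitraryVar$ matches on both sides, and if $\arbitraryVar \in \TVar$ the assigned trace is the same on both sides), so the formula IH on $\varphi$ applies.

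For the dynamic modality $[\alpha]\psi$, assume $\pstate{v} \vDash [\alpha]\psi$ and $\pstate{v} \downarrow [\alpha]\psi = \pstate[alt]{v} \downarrow [\alpha]\psi$ on $\FV([\alpha]\psi) = \FV(\alpha) \cup (\FV(\psi) \setminus \MBV(\alpha))$. Given any $(\pstate[alt]{v}, \trace, \pstate[alt]{w}) \in \sem{\alpha}$ with $\pstate[alt]{w} \neq \bot$, the program coincidence IH yields a matching $\computation \in \sem{\alpha}$ with $\pstate{w} \neq \bot$ whose final state and trace agree with $(\pstate[alt]{w},\trace)$ on $\MBV(\alpha)$ and on the relevant channels; combining this with the bound effect property (\rref{lem:boundEffect}), which ensures that variables outside $\BV(\alpha) \supseteq \MBV(\alpha)$ are preserved by each run, shows that $\pstate{w} \cdot \trace$ and $\pstate[alt]{w} \cdot \trace$ agree on all of $\FV(\psi)$ under the $\CN(\psi)$-projection. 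The formula IH on $\psi$ (structurally smaller) then concludes $\pstate[alt]{w} \cdot \trace \vDash \psi$. The ac-box case $[\alpha]\ac\psi$ is analogous, additionally invoking the IH on $\A$ at each prefix state $\pstate{v} \cdot \trace[pre]$ and on $\C$ at $\pstate{v} \cdot \trace$; the well-formedness side condition $(\FV(\A) \cup \FV(\C)) \cap \BV(\alpha) \subseteq \TVar$ confines the extra free variables to trace variables already visible in the initial state plus the emitted trace, so no new obligations on $\BV(\alpha) \cap \RVar$ arise.

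The main obstacle is the auxiliary program coincidence lemma, especially its parallel composition case. A run of $\alpha \parOp \beta$ factors through projections onto $\sem{\alpha}$ and $\sem{\beta}$ whose joint trace must satisfy $\trace = \trace \downarrow (\alpha \parOp \beta)$ and whose final states must agree on $\globalTime$; applying the IH to each side requires delicately tracking which channel projections each subprogram controls, and then reassembling the pieces so that chronology, the time-synchronization side condition $\statetime{\pstate{w}_\alpha} = \statetime{\pstate{w}_\beta}$, and the merge $\pstate{w}_\alpha \merge \pstate{w}_\beta$ are all preserved after swapping initial states. The continuous evolution case is similarly delicate: one must show that a solution $\odeSolution$ from $\pstate{v}$ transports to a solution from $\pstate[alt]{v}$ agreeing on $\{x, \globalTime\}$ throughout $[0, \duration]$, which reduces via pointwise application of \rref{lem:termCoincidence} to the right-hand side $\rp$ and of the formula IH to the domain constraint $\chi$.
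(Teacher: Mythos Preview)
Your overall architecture—simultaneous structural induction on formulas and programs with a mutual program coincidence lemma—is exactly the paper's approach. The gap is in the precise formulation of that auxiliary program lemma.

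You state that the matching run's final state agrees with the original ``on $\MBV(\alpha)$ and on the relevant channels,'' and then invoke the bound effect property to recover agreement on variables outside $\BV(\alpha)$. This does not cover variables in $\FV(\psi) \cap (\BV(\alpha) \setminus \MBV(\alpha))$: bound effect says nothing about them (they are bound), and your stated conclusion (agreement only on $\MBV(\alpha)$) excludes them. Such variables do occur—take $\alpha \equiv (x \ceq 0 \cup y \ceq 0)$ with $\psi \equiv x \ge 0$, where $x \in \BV(\alpha) \setminus \MBV(\alpha)$.

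The paper's fix (\rref{lem:programCoincidence}) is to parameterize the program lemma by an arbitrary set $\varset \supseteq \FV(\gamma)$ and an arbitrary channel set $\cset$, and to conclude that the final states agree on $\varset \cup \MBV(\gamma)$ under projection $\downarrow \cset$, with traces \emph{literally equal}. In the box case one takes $\varset = \FV([\alpha]\psi)$, which yields agreement on $\FV([\alpha]\psi) \cup \MBV(\alpha) \supseteq \FV(\psi)$ directly. The same $\varset$-threading is what makes sequential composition go through in the program lemma itself: after the IH on $\alpha$, intermediate states agree on $\varset \cup \MBV(\alpha) \supseteq \FV(\alpha) \cup (\FV(\beta) \setminus \MBV(\alpha)) \cup \MBV(\alpha) \supseteq \FV(\beta)$, which is exactly the hypothesis needed to re-apply the IH to $\beta$. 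Without this parameterization your sequential case would stall for the same reason as your box case. You correctly flagged parallel composition and ODEs as delicate, but sequential composition is where the need for the $\varset$ device first becomes visible.
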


\begin{lemma}[Coincidence for programs] \label{lem:programCoincidence}
	The set of free variables $\FV(\gamma)$ has the \emph{communication-aware coincidence property} for a program $\gamma \in \Chp$: 
	If $\pstate{v} \downarrow \cset = \pstate[alt]{v} \downarrow \cset$ on $\varset \supseteq \FV(\gamma)$ with $\cset \subseteq \Chan$ and $\computation \in \sem{\gamma}$, then $\computation[alt] \in \sem{\gamma}$ exists with $\pstate{w} \downarrow \cset = \pstate[alt]{w} \downarrow \cset$ on $\varset \cup \MBV(\gamma)$, and $\trace = \trace[alt]$, and ($\pstate{w} = \bot$ iff $\pstate[alt]{w} = \bot$).
\end{lemma}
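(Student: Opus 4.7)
The plan is to proceed by structural induction on $\gamma$, treating $\alpha^n$ as structurally smaller than $\repetition{\alpha}$ in line with the proofs of \rref{prop:prefixClosedAndTotal} and \rref{lem:conservativeProgSemantics}. For each case I will construct the witness $\computation[alt]$ by mirroring the semantic clause of $\gamma$, simultaneously establishing all three conjuncts $\pstate{w} \downarrow \cset = \pstate[alt]{w} \downarrow \cset$ on $\varset \cup \MBV(\gamma)$, $\trace[alt] = \trace$, and the equivalence $\pstate{w} = \bot$ iff $\pstate[alt]{w} = \bot$. Throughout, the coincidence lemmas for terms (\rref{lem:termCoincidence}) and formulas (\rref{lem:formulaCoincidence}) are applied, and communication-aware projections are propagated carefully.

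For the atomic cases, if $\pstate{w} = \bot$ with $\trace = \epsilon$, the witness $\leastComputation[alt] \in \sem{\gamma}$ is immediate from $\pLeast$. The remaining subcases construct the witness directly: assignment applies \rref{lem:termCoincidence} since $\FV(\rp) \subseteq \varset$; nondeterministic assignment reuses the witness value; test applies \rref{lem:formulaCoincidence}; continuous evolution reuses the duration $\duration$ and defines a companion solution $\odeSolution[alt]$ agreeing with $\odeSolution$ on $\{x, \globalTime\}$ pointwise along $[0, \duration]$ and with $\pstate[alt]{v}$ elsewhere, which remains a valid solution because $\rp$ and $\chi$ depend only on $\FV(\rp) \cup \FV(\chi) \subseteq \varset$ that coincide throughout. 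For $\send{}{}{}$ and $\receive{}{}{}$, the event $\comItem{\historyVar, \ch{}, a, \duration}$ may be reused verbatim since its value $a$ equals $\sem{\rp} \pstate{v} = \sem{\rp} \pstate[alt]{v}$ by \rref{lem:termCoincidence} (or may be freely chosen identically for receive) and the timestamp satisfies $\duration = \pstate{v}(\globalTime) = \pstate[alt]{v}(\globalTime)$ by $\globalTime \in \varset$; the final state may equally well be $\bot$. For compound programs: choice picks the matching branch; iteration picks the matching $n$ and recurs on $\alpha^n$; sequential composition chains two IH invocations, where the intermediate coincidence set $\varset \cup \MBV(\alpha)$ already subsumes $\FV(\beta)$ via the identity $\FV(\beta) \subseteq \FV(\alpha \seq \beta) \cup \MBV(\alpha)$; parallel composition applies IH separately to the subcomputations $\computation[proj=\alpha]$ and $\computation[proj=\beta]$, whose projected traces $\trace \downarrow \alpha$ and $\trace \downarrow \beta$ are preserved verbatim by IH, so the global $\trace$ may itself be reused as $\trace[alt]$.

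The main obstacle is the bookkeeping in the sequential and parallel cases. For sequential composition, one must verify that the monotonic growth from $\varset$ through $\varset \cup \MBV(\alpha)$ subsumes $\FV(\beta)$, which is precisely the design motivation behind must-bound variables in $\FV(\alpha \seq \beta)$. For parallel composition, admissibility of the reused trace $\trace$ in the new computation hinges on the fact that chronology and junk-freeness $\trace = \trace \downarrow (\alpha \parOp \beta)$ hold of $\trace$ intrinsically and are unaffected by the change of initial state, while agreement of the merged final states on $\varset \cup \MBV(\alpha \parOp \beta)$ leverages the no-shared-state condition $\V(\alpha) \cap \BV(\beta) \subseteq \{\globalTime\} \cup \TVar$ from \rref{def:syntax_chps} to route each relevant variable unambiguously through $\merge$. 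A secondary subtlety is the time-synchronization $\statetime{\pstate[alt]{w}_\alpha} = \statetime{\pstate[alt]{w}_\beta}$ in the new parallel computation, which needs a small case analysis according to whether $\globalTime$ is covered by $\varset$ or by $\MBV$ of either subprogram; in all cases the constraint transfers from the original computation via the agreement just established.
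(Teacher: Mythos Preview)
Your proposal is correct and takes essentially the same approach as the paper's proof: structural induction on $\gamma$ with the same witnesses and the same key observations (notably $\FV(\beta) \subseteq \FV(\alpha\seq\beta) \cup \MBV(\alpha)$ for sequential composition and the case analysis on $\globalTime$ for parallel time-synchronization). The paper makes two additional simplifications you might find useful: it proves the lemma \emph{simultaneously} with \rref{lem:formulaCoincidence} by mutual induction (so the appeal to formula coincidence in the $\test{}$ and ODE cases is the simultaneous IH), and it opens the program part with a WLOG reduction to $\varset \cap \TVar = \emptyset$ via \rref{lem:boundEffect}, so that the remainder works with plain equality $\pstate{v} = \pstate[alt]{v}$ on $\varset$ rather than carrying the projection $\downarrow \cset$ through every case.
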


\begin{proof}[\rref{lem:formulaCoincidence} and \rref{lem:programCoincidence}]
	We generalize the proof of the coincidence properties for $\dL$ \cite[Lemma 17]{DBLP:journals/jar/Platzer17} to CHPs and formulas featuring ac-modalities. 
	The proof is by simultaneous induction on the structure of formulas and programs.
	
	We start with induction on formula $\phi$. 
	By premise, $\pstate{v} \downarrow \phi = \pstate[alt]{v} \downarrow \phi$ on $\FV(\phi)$. 
	Then we prove that $\pstate{v} \in \sem{\phi}$ implies $\pstate[alt]{v} \in \sem{\phi}$. 
	For the converse, swap $\pstate{v}$ and $\pstate[alt]{v}$.
	We consider $\fa{\arbitraryVar} \neg \varphi$ to be structurally smaller than $\ex{\arbitraryVar} \varphi$ for any formula $\varphi$.
	Note that $\pstate{v} \downarrow D = \pstate[alt]{v} \downarrow D$ implies $\pstate{v} \downarrow E = \pstate[alt]{v} \downarrow E$ if $E \subseteq D \subseteq \Chan$.
	We use this fact throughout the proof without further mentioning it.
	
	\begin{enumerate}
		\item $\pstate{v} \vDash \expr_1 \sim \expr_2$, where $\sim \; \in \{ =, \ge, \preceq \}$, iff $\sem{\expr_1} \pstate{v} \sim \sem{\expr_2} \pstate{v}$ holds.
		For $j = 1, 2$, $\FV(\expr_j) \subseteq \FV(\expr_1 \sim \expr_2)$ and $\CN(\expr_j) \subseteq \CN(\expr_1 \sim \expr_2)$ such that $\pstate{v} \downarrow \expr_j = \pstate[alt]{v} \downarrow \expr_j$ on $\FV(\expr_j)$.
		Hence, $\sem{\expr_j} \pstate{v} = \sem{\expr_j} \pstate[alt]{v}$ for $j = 1, 2$ by coincidence for terms (\rref{lem:termCoincidence}), which implies $\sem{\expr_1} \pstate[alt]{v} \sim \sem{\expr_2} \pstate[alt]{v}$.
		Thus, $\pstate[alt]{v} \vDash \expr_1 \sim \expr_2$.
		
		\item $\pstate{v} \vDash \neg \varphi$ implies $\pstate{v} \not \in \sem{\varphi}$. 
		By IH, $\pstate[alt]{v} \not \in \sem{\varphi}$ since $\FV(\neg \varphi) = \FV(\varphi)$ and $\CN(\neg \varphi) = \CN(\varphi)$. 
		Hence, $\pstate[alt]{v} \vDash \neg \varphi$.
		
		\item $\pstate{v} \vDash \varphi \wedge \psi$ iff $\pstate{v} \vDash \varphi$ and $\pstate{v} \vDash \psi$.
		Since $\FV(\chi) \subseteq \FV(\varphi \wedge \psi)$ and $\CN(\chi) \subseteq \CN(\varphi \wedge \psi)$ for $\chi \in \{ \varphi, \psi \}$,
		we have $\pstate{v} \downarrow \chi = \pstate[alt]{v} \downarrow \chi$ on $\FV(\chi)$.
		Then $\pstate[alt]{v} \vDash \varphi$ and $\pstate[alt]{v} \vDash \psi$ by IH.
		Finally, $\pstate[alt]{v} \vDash \varphi \wedge \psi$.
		
		\item The cases $\varphi \vee \psi$ and $\varphi \rightarrow \psi$ can be handled analogously to $\varphi \wedge \psi$.
		
		\item $\pstate{v} \vDash \fa{\arbitraryVar} \varphi$ iff $\pstate{v} \subs{\arbitraryVar}{a} \vDash \varphi$ for all $a \in \type(\arbitraryVar)$. 
		We have $\pstate{v} \subs{\arbitraryVar}{a} \downarrow \varphi = \pstate[alt]{v} \subs{\arbitraryVar}{a} \downarrow \varphi$ on $\FV(\varphi)$ since $\pstate{v} \downarrow \varphi = \pstate[alt]{v} \downarrow \varphi$ on $\FV(\fa{\arbitraryVar} \varphi) = \FV(\varphi) \setminus \{ \arbitraryVar \}$ and $\CN(\fa{\arbitraryVar} \varphi) = \CN(\varphi)$. 
		Hence, by IH, $\pstate[alt, subs=\subs{\arbitraryVar}{a}]{v} \vDash \varphi$ for all $a \in \type(\arbitraryVar)$, 
		which implies $\pstate[alt]{v} \vDash \fa{\arbitraryVar} \varphi$.
		
		\item $\pstate{v} \vDash \ex{\arbitraryVar} \varphi$ iff $\pstate{v} \nvDash \fa{\arbitraryVar} \neg \varphi$, 
		which implies $\pstate[alt]{v} \nvDash \fa{\arbitraryVar} \neg \varphi$ by IH because $\FV(\fa{\arbitraryVar} \neg \varphi) = \FV(\ex{\arbitraryVar} \varphi)$, and $\CN(\fa{\arbitraryVar} \neg \varphi) = \CN(\ex{\arbitraryVar} \varphi)$, and we considered $\fa{z} \neg \varphi$ to be structurally smaller than $\ex{z} \varphi$.
		Hence, $\pstate[alt]{v} \vDash \ex{\arbitraryVar} \varphi$.

		\item $\phi \equiv [ \alpha ] \psi$, then let $\pstate{v} \vDash \phi$. 
		To show $\pstate[alt]{v} \vDash \phi$, let $\computation[alt] \in \sem{\alpha}$ with $\pstate[alt]{w} \neq \bot$.
		Since $\pstate{v} \downarrow \phi = \pstate[alt]{v} \downarrow \phi$ on $\FV(\phi) \supseteq \FV(\alpha)$, 
		a computation $\computation \in \sem{\alpha}$ with $\pstate{w} \neq \bot$, and $\pstate{w} \downarrow \phi = \pstate[alt]{w} \downarrow \phi$ on $\FV(\phi) \cup \MBV(\alpha)$, and $\trace = \trace[alt]$ exists by the simultaneous IH.
		Now, $\pstate{w} \cdot \trace \vDash \psi$ because $\pstate{v} \vDash \phi$.
		By $\FV(\phi) = \FV(\alpha) \cup (\FV(\psi) \setminus \MBV(\alpha))$, 
		we obtain $\pstate{w} \downarrow \phi = \pstate[alt]{w} \downarrow \phi$ on $\FV(\psi) \subseteq \FV(\phi) \cup \MBV(\alpha)$, 
		which implies $\pstate{w} \downarrow \psi = \pstate[alt]{w} \downarrow \psi$ on $\FV(\psi)$ by $\CN(\phi) = \CN(\psi)$.
		Furthermore, $(\pstate{w} \cdot \trace) \downarrow \psi = (\pstate[alt]{w} \cdot \trace[alt]) \downarrow \psi$ because $\trace = \trace[alt]$.
		Thus, IH is applicable on $\pstate{w} \cdot \trace \vDash \psi$ such that $\pstate[alt]{w} \cdot \trace[alt] \vDash \psi$ by IH.
			
		\item $\phi \equiv [ \alpha ] \ac \psi$, then let $\pstate{v} \vDash \phi$.
		In order to show $\pstate[alt]{v} \vDash \phi$ let $\computation[alt] \in \sem{\alpha}$.
		Since $\pstate{v} \downarrow \phi = \pstate[alt]{v} \downarrow \phi$ on $\FV(\phi) \supseteq \FV(\alpha)$, 
		a computation $\computation \in \sem{\alpha}$ with $\pstate{w} \downarrow \phi = \pstate[alt]{w} \downarrow \phi$ on $\FV(\phi) \cup \MBV(\alpha)$, and $\trace = \trace[alt]$, and ($\pstate{w} = \bot$ iff $\pstate[alt]{w} = \bot$) exists by the simultaneous IH.
		
		For \acCommit, assume $\proppre{\pstate[alt]{v}}{\trace[alt]} \vDash \A$.
		Then $\proppre{\pstate{v}}{\trace} \vDash \A$ by IH since $\trace[alt] = \trace$ and $\pstate{v} \downarrow \phi = \pstate[alt]{v} \downarrow \phi$ on $\FV(\phi)$ implies $\pstate{v} \downarrow \A = \pstate[alt]{v} \downarrow \A$ on $\FV(\A)$ by $\FV(\A) \subseteq \FV(\phi)$ and $\CN(\A) \subseteq \CN(\phi)$.
		Hence, $\pstate{v} \cdot \trace \vDash \C$ because $\pstate{v} \vDash \phi$, 
		which in turn implies $\pstate[alt]{v} \cdot \trace[alt] \vDash \C$ by IH because $(\pstate{v} \cdot \trace) \downarrow \C = (\pstate[alt]{v} \cdot \trace[alt]) \downarrow \C$ on $\FV(\C)$.
		
		For \acPost, assume $\pstate[alt]{w} \neq \bot$ and $\preeq{\pstate[alt]{v}}{\trace[alt]} \vDash \A$.
		Since $\pstate{v} \downarrow \phi = \pstate[alt]{v} \downarrow \phi$ on $\FV(\phi) \supseteq \FV(\A)$ with $\CN(\phi) \supseteq \CN(\A)$, we obtain $\preeq{\pstate{v}}{\trace} \vDash \A$ by IH.
		Hence, $\pstate{w} \cdot \trace \vDash \psi$ by $\pstate{v} \vDash \phi$.
		Moreover, $\pstate{w} \downarrow \psi = \pstate[alt]{w} \downarrow \psi$ on $\FV(\psi)$ because $\FV(\psi) \subseteq \FV(\phi) \cup \MBV(\alpha)$ and $\CN(\psi) \subseteq \CN(\phi)$.
		Finally, $\pstate[alt]{w} \cdot \trace \vDash \psi$ by IH.
	\end{enumerate}

	Now, we proceed with induction on program $\alpha$.
	By premise, $\pstate{v} \downarrow \cset = \pstate[alt]{v} \downarrow \cset$ on $\varset \supseteq \FV(\gamma)$. 
	\Wlossg $\varset \cap \TVar = \emptyset$ because $\pstate{v} = \pstate{w}$ and $\pstate[alt]{v} = \pstate[alt]{w}$ both on $(\BV(\alpha) \setminus \TVar)^\complement$ by the bound effect property (\rref{lem:boundEffect}) such that $\pstate[alt]{w} \downarrow \cset = \pstate[alt]{v} \downarrow \cset = \pstate{v} \downarrow \cset = \pstate{w} \downarrow \cset$ on $\varset \cap \TVar$.
	Hence, $\pstate{v} = \pstate{v} \downarrow \cset = \pstate[alt]{v} \downarrow \cset = \pstate[alt]{v}$ on $\varset$.
	Further, we consider $\alpha^n$ to be structurally smaller than $\repetition{\alpha}$ for all programs $\alpha$ and all $n \in \naturals$, and $x \ceq * \seq \send{}{}{x}$ to be smaller than $\receive{}{}{}$.
	
	If $\gamma$ is an atomic non-communicating program, 
	we can handle $\computation \in \sem{\gamma}$ with $\pstate{w} = \bot$ uniformly as follows: 
	Let $(v, \trace, \bot) \in \sem{\gamma}$. 
	Then $\trace = \epsilon$, and we define $\trace[alt] = \epsilon$ and $\pstate[alt]{w} = \bot$. 
	By totality and prefix-closedness (\rref{prop:prefixClosedAndTotal}), $\computation[alt] \in \sem{\gamma}$. 
	Further, $\pstate{w} = \pstate[alt]{w}$ on $\varset \cup \MBV(\gamma)$ since $\bot = \bot$ on any set of variables. 
	Therefore, we assume $\pstate{w} \neq \bot$ \wlossg in the first four cases below.
	
	\begin{enumerate}
		\item $\computation \in \sem{x \ceq \rp}$, then $\trace = \epsilon$ and $\pstate{w} = \pstate{v} \subs{x}{\sem{\rp} \pstate{v}}$. 
		We define $\trace[alt] = \epsilon$ and $\pstate[alt]{w} = \pstate[alt]{v} \subs{x}{\sem{\rp} \pstate[alt]{v}}$ such that $\computation[alt] \in \sem{x \ceq \rp}$.
		By coincidence (\rref{lem:termCoincidence}), $\pstate{w}(x) = \sem{\rp} \pstate{v} = \sem{\rp} \pstate[alt]{v} = \pstate[alt]{w}(x)$ since $\pstate{v} = \pstate[alt]{v}$ on $\varset \supseteq \FV(x \ceq \rp) = \FV(\rp)$.
		Moreover, $\pstate{w} = \pstate[alt]{w}$ on $\varset \setminus \{ x \}$ because $\pstate{v} = \pstate[alt]{v}$ on $\varset$, and $\pstate{v} = \pstate{w}$ and $\pstate[alt]{v} = \pstate[alt]{w}$ on $\{ x \}^\complement$.  
		Hence, $\pstate{w} = \pstate[alt]{w}$ on $\varset \cup \MBV(x \ceq \rp)$ since $\MBV(x \ceq \rp) = \{ x \}$, and $\trace = \trace[alt]$, and $\pstate[alt]{w} \neq \bot$.
		
		\item $\computation \in \sem{x \ceq *}$, then $\trace = \epsilon$ and $\pstate{w} = \pstate{v}\subs{x}{a}$ for some $a \in \reals$.
		We define $\trace[alt] = \epsilon$ and $\pstate[alt]{w} = \pstate[alt]{v} \subs{x}{a}$. 
		Obviously $\pstate{v} = \pstate{w}$ and $\pstate[alt]{v} = \pstate[alt]{w}$ on $ \{ x \}^\complement$.
		Thus, $\pstate{v} = \pstate[alt]{v}$ on $\varset$ implies $\pstate{w} = \pstate[alt]{w}$ on $\varset \setminus \{ x \}$. 
		Moreover, $\pstate{w}(x) = \pstate{v}\subs{x}{a}(x) = \pstate[alt]{v} \subs{x}{a}(x) = \pstate[alt]{w}(x)$.
		Overall, $\computation[alt] \in \sem{\alpha}$ exists with $\pstate{w} = \pstate[alt]{w}$ on $\varset \cup \MBV(x \ceq *)$ since $\MBV(x \ceq *) = \{ x \}$.
		Finally, $\trace = \trace[alt]$ and $\pstate[alt]{w} \neq \bot$.
		
		\item $\gamma \equiv \evolution{}{}$ and $\computation \in \sem{\gamma}$, then $\trace = \epsilon$ and a solution $\odeSolution: [0, \duration] \rightarrow \states$ exists with $\pstate{v} = \odeSolution(0)$, and $\pstate{w} = \odeSolution(\duration)$, and $\odeSolution(\zeta) \vDash x' = \rp \wedge \chi$ and $\odeSolution(\zeta)(\globalTime) = \pstate{v}(\globalTime) + \zeta$ for all $\zeta \in [0, \duration]$, where $\odeSolution(\zeta)(x') = \solutionDerivative{\odeSolution}{x}(\zeta)$, and $\odeSolution(\zeta)(y) = \pstate{v}(y)$ for $y \not \in \{ x, \globalTime \}$.
		We define a solution $\odeSolution[alt]: [0, \duration] \rightarrow \states$ of equal duration $\duration$ for $\zeta \in [0, \duration]$, 
		by $\odeSolution[alt](\zeta) = \odeSolution(\zeta)$ on $\{ x, \globalTime \}$ and if $y \not \in \{ x, \globalTime \}$ by $\odeSolution[alt](\zeta)(y) = \pstate[alt]{v}(y)$.
		
		Since $\odeSolution(\zeta)(x) = \odeSolution[alt](\zeta)(x)$ for all $\zeta \in [0, \duration]$,
		we have $\odeSolution(\zeta)(x') = \odeSolution[alt](\zeta)(x')$.
		Moreover, $\odeSolution[alt](\zeta) = \odeSolution(\zeta)$ on $\{ x, \globalTime \}$ and $\odeSolution(\zeta) = \pstate{v} = \pstate[alt]{v} = \odeSolution[alt](\zeta)$ on $\varset \setminus \{ x, \globalTime \}$ 
		imply $\odeSolution(\zeta) = \odeSolution[alt](\zeta)$ on $\varset \cup \MBV(\gamma)$ because $\MBV(\gamma) = \{ x, \globalTime \}$.
		Since $\varset \supseteq \FV(\gamma) \supseteq \FV(\rp) \cup \FV(\chi)$ by premise,
 		$\sem{\rp} \odeSolution(\zeta) = \sem{\rp} \odeSolution[alt](\zeta)$ by coincidence for terms (\rref{lem:termCoincidence}).
		Moreover, $\odeSolution(\zeta) \vDash \chi$ iff $\odeSolution[alt](\zeta) \vDash \chi$ by the simultaneous IH.
		Thus, $\odeSolution[alt](\zeta) \vDash x' = \rp \wedge \chi$ for all $\zeta \in [0, \duration]$. 
		Moreover, $\odeSolution[alt](\zeta)(y) = \pstate[alt]{v}(y)$ for $y \not \in \{ x, \globalTime \}$ by definition of $\odeSolution[alt]$. 
		If we define $\pstate[alt]{v} = \odeSolution[alt](0)$, and $\trace[alt] = \epsilon$, and $\pstate[alt]{w} = \odeSolution[alt](\duration)$, 
		then $\computation[alt] \in \sem{\alpha}$.
		Finally, $\pstate[alt]{w} = \odeSolution[alt](\duration) = \odeSolution(\duration) = \pstate{w}$ on $\varset \cup \MBV(\gamma)$, and $\trace = \trace[alt]$, and $\pstate[alt]{w} \neq \bot$. 
		
		\item $\computation \in \sem{\test{}}$, 
		then $\trace = \epsilon$, and $\pstate{w} = \pstate{v}$, and $\pstate{v} \vDash \chi$. 
		We define $\trace[alt] = \epsilon$ and $\pstate[alt]{w} = \pstate[alt]{v}$. 
		Since $\pstate{v} \vDash \chi$ and $\pstate{v} = \pstate[alt]{v}$ on $\varset \supseteq \FV(\test{}) = \FV(\chi)$, 
		we obtain $\pstate[alt]{v} \vDash \chi$ by the simultaneous IH. 
		Hence, $\computation[alt] \in \sem{\test{}}$.
		Moreover, $\pstate{w} = \pstate{v} = \pstate[alt]{v} = \pstate[alt]{w}$ on $\varset \cup \MBV(\test{})$ because $\MBV(\test{}) = \emptyset$. 
		Finally, $\trace = \trace[alt]$ and $\pstate[alt]{w} \neq \bot$.

		\item $\gamma \equiv \send{}{}{}$ and $\computation \in \sem{\gamma}$, 
		then $(\trace, \pstate{w}) \preceq (\comItem{\historyVar, \ch{}, \sem{\rp} \pstate{v}, \statetime{\pstate{v}}}, \pstate{v})$.
		We define $\trace[alt] = \trace$, and $\pstate[alt]{w} = \bot$ if $\pstate{w} = \bot$ and $\pstate[alt]{w} = \pstate[alt]{v}$ otherwise.
		Since $\pstate{v} = \pstate[alt]{v}$ on $\varset \supseteq \FV(\gamma) = \{\historyVar, \globalTime\} \cup \FV(\rp)$,
		we have $\sem{\rp} \pstate{v} = \sem{\rp} \pstate[alt]{v}$ and $\statetime{\pstate{v}} = \statetime{\pstate[alt]{v}}$.
		Hence, $(\trace[alt], \pstate[alt]{w}) \preceq (\comItem{\historyVar, \ch{}, \sem{\rp} \pstate[alt]{v}, \statetime{\pstate[alt]{v}}}, \pstate[alt]{v})$, 
		which implies $\computation[alt] \in \sem{\gamma}$.
		Finally, if $\pstate{w} \neq \bot$,
		then $\pstate{w} = \pstate[alt]{w}$ on $\varset \cup \MBV(\gamma) = \varset \cup \{\historyVar\}$ because $\pstate{w} = \pstate{v} = \pstate[alt]{v} = \pstate[alt]{w}$ on $\varset$ 
		since $\pstate{v} = \pstate[alt]{v}$ on $\varset \subseteq \FV(\gamma) \supseteq \{\historyVar\}$.
		Moreover, $\trace = \trace[alt]$ and ($\pstate{w} = \bot$ iff $\pstate[alt]{w} = \bot$).

		\item $\computation \in \sem{\receive{}{}{}}$, then $\computation \in \sem{\alpha}$,
		where $\alpha \equiv x \ceq * \seq \send{}{}{x}$. 
		Since $\FV(\receive{}{}{}) = \FV(\alpha)$,
		we have $\pstate{v} = \pstate[alt]{v}$ on $\varset \supseteq \FV(\alpha)$.
		Hence, by IH, $\computation[alt] \in \sem{\alpha}$ exists with $\trace = \trace[alt]$ and $\pstate{w} = \pstate[alt]{w}$ on $\varset \cup \MBV()$ and ($\pstate{w} = \bot$ iff $\pstate[alt]{w} = \bot$) because we considered $\alpha$ to be structurally smaller than $\receive{}{}{}$.
		Finally, $\computation[alt] \in \sem{\receive{}{}{}}$ and observe that $\MBV(\receive{}{}{}) = \MBV(\alpha)$.
		
		\item $\computation \in \sem{\alpha_1 \cup \alpha_2}$, then $\computation \in \sem{\alpha_j}$ for some $j \in \{ 1, 2 \}$.
		Since $\varset \supseteq \FV(\alpha_1 \cup \alpha_2) \supseteq \FV(\alpha_j)$, 
		we have $\pstate{v} = \pstate[alt]{v}$ on $\varset \supseteq \FV(\alpha_j)$.
		Thus, $\computation[alt] \in \sem{\alpha_j} \subseteq \sem{\alpha_1 \cup \alpha_2}$ exists by IH such that $\pstate{w} = \pstate[alt]{w}$ on $\varset \cup \MBV(\alpha_j)$, and $\trace = \trace[alt]$, and ($\pstate{w} = \bot$ iff $\pstate[alt]{w} = \bot$). 
		Finally, $\pstate{w} = \pstate[alt]{w}$ on $\varset \cup \MBV(\alpha_1 \cup \alpha_2)$ because $\MBV(\alpha_1 \cup \alpha_2) = \MBV(\alpha_1) \cap \MBV(\beta_2)$.
		
		\item $\computation \in \sem{\alpha \seq \beta}$, then $\computation \in \sem{\alpha}_\bot$ or $\computation \in \sem{\alpha} \continuation \sem{\beta}$. 
		
		If $\computation \in \botop{\sem{\alpha}} \subseteq \sem{\alpha}$, a computation $\computation[alt] \in \sem{\alpha}$ exists by IH such that $\trace = \trace[alt]$ and $\pstate[alt]{w} = \bot$. 
		Since $\pstate{w} = \pstate[alt]{w} = \bot$, we have $\computation[alt] \in \sem{\alpha}_\bot \subseteq \sem{\alpha \seq \beta}$ with $\pstate{w} = \pstate[alt]{w}$ on $\varset \cup \MBV(\alpha \seq \beta)$, and $\trace = \trace[alt]$, and ($\pstate[pre]{w} = \bot$ iff $\pstate[alt]{w} = \bot$).
		
		If $\computation \in \sem{\alpha} \continuation \sem{\beta}$, computations $(\pstate{v}, \trace_1, \pstate{u}) \in \sem{\alpha}$ and $(\pstate{u}, \trace_2, \pstate{w}) \in \sem{\beta}$ with $\trace = \trace_1 \cdot \trace_2$ exist.
		Since $\varset \supseteq \FV(\alpha \seq \beta) \supseteq \FV(\alpha)$, by IH, $(\pstate[alt]{v}, \trace[alt]_1, \pstate[alt]{u}) \in \sem{\alpha}$ exists with $\pstate{u} = \pstate[alt]{u}$ on $\varset \cup \MBV(\alpha)$, and $\trace_1 = \trace[alt]_1$, and $\pstate[alt]{u} \neq \bot$ because $u \neq \bot$. 
		Observe that $\varset \cup \MBV(\alpha) \supseteq \FV(\alpha \seq \beta) \cup \MBV(\alpha) = \FV(\alpha) \cup (\FV(\beta) \setminus \MBV(\alpha)) \cup \MBV(\alpha) \supseteq \FV(\beta)$.
		Therefore, $(\pstate[alt]{u}, \trace[alt]_2, \pstate[alt]{w}) \in \sem{\beta}$ exists by IH again with $\pstate{w} = \pstate[alt]{w}$ on $\varset \cup \MBV(\alpha) \cup \MBV(\beta) = \varset \cup \MBV(\alpha \seq \beta)$, $\trace_2 = \trace[alt]_2$, and ($\pstate{w} = \bot$ iff $\pstate[alt]{w} = \bot$). 
		Reassembling gives us $\computation[alt] \in \sem{\alpha} \continuation \sem{\beta} \subseteq \sem{\alpha \seq \beta}$ with $\trace[alt] = \trace[alt]_1 \cdot \trace[alt]_2$. 
		Finally, $\pstate{w} = \pstate[alt]{w}$ on $\varset \cup \MBV(\alpha \seq \beta)$, and $\trace = \trace[alt]$, and ($\pstate{w} = \bot$ iff $\pstate[alt]{w} = \bot$).
		
		\item $\computation \in \sem{\repetition{\alpha}}$, then $\computation \in \sem{\alpha^n}$ for some $n \in \naturals$. 
		By induction on $n$, we show that $\computation[alt] \in \sem{\alpha^n}$ exists with $\pstate{w} = \pstate[alt]{w}$ on $\varset$, and $\trace = \trace[alt]$, and ($\pstate{w} = \bot$ iff $\pstate[alt]{w} = \bot$). 
		Finally, we conclude by $\sem{\alpha^n} \subseteq \sem{\repetition{\alpha}}$ and $\MBV(\repetition{\alpha}) = \emptyset$.
		
		\begin{enumerate}
			\item $n = 0$: 
			$\computation \in \sem{\alpha^0} = \sem{\test{\true}}$, then $\trace = \epsilon$, and $\pstate{w} = \bot$ or $\pstate{w} = \pstate{v}$. 
			We define $\trace[alt] = \epsilon$, and $\pstate[alt]{w} = \bot$ if $\pstate{w} = \bot$ and $\pstate[alt]{w} = \pstate[alt]{v}$ otherwise. 
			Hence, $\computation[alt] \in \sem{\alpha^0}$ and $\pstate{w} = \pstate[alt]{w}$ on $\varset$ because if $\pstate{w} \neq \bot$, then $\pstate{v} = \pstate[alt]{v}$ on $\varset$ implies $\pstate{w} = \pstate{v} = \pstate[alt]{v} = \pstate[alt]{w}$ on $\varset$. 
			Moreover, $\trace = \trace[alt]$, and ($\pstate{w} = \bot$ iff $\pstate[alt]{w} = \bot$).
			
			\item $n > 0$: 
			$\computation \in \sem{\alpha^n}$, then $\computation \in \sem{\alpha \seq \alpha^{n-1}} = \botop{\sem{\alpha}} \cup (\sem{\alpha} \continuation \sem{\alpha^{n-1}})$. 
			If $\computation \in \botop{\sem{\alpha}} \subseteq \sem{\alpha}$, by IH, $\computation[alt] \in \sem{\alpha}$ exists with $\trace = \trace[alt]$ and $\pstate[alt]{w} = \bot$. 
			That is, $\computation[alt] \in \botop{\sem{\alpha}} \subseteq \sem{\alpha^n}$ with $\pstate{w} = \pstate[alt]{w}$ on $\varset$ since $\bot = \bot$ on any set of variables, $\trace = \trace[alt]$, and ($\pstate{w} = \bot$ iff $\pstate[alt]{w} = \bot$).
			
			If $\computation \in \sem{\alpha} \continuation \sem{\alpha^{n-1}}$, computations $(\pstate{v}, \trace_1, \pstate{u}) \in \sem{\alpha}$ and $(\pstate{u}, \trace_2, \pstate{w}) \in \sem{\alpha^{n-1}}$ exist with $\trace = \trace_1 \cdot \trace_2$. 
			Since $\varset \supseteq \FV(\repetition{\alpha}) = \FV(\alpha)$, by IH, $(\pstate[alt]{v}, \trace[alt]_1, \pstate[alt]{u}) \in \sem{\alpha}$ exists with $\pstate{u} = \pstate[alt]{u}$ on $\varset$, and $\trace_1 = \trace[alt]_1$, and $\pstate[alt]{u} \neq \bot$. 
			Since $\varset \supseteq \FV(\repetition{\alpha}) = \FV(\alpha^{n-1})$ and $\alpha^{n-1}$ is structurally smaller than~$\repetition{\alpha}$, 
			by IH again, $(\pstate[alt]{u}, \trace[alt]_2, \pstate[alt]{w}) \in \sem{\alpha^{n-1}}$ with $\pstate{w} = \pstate[alt]{w}$ on $\varset$, and $\trace_2 = \trace[alt]_2$, and ($\pstate{w} = \bot$ iff $\pstate[alt]{w} = \bot$).
			Reassembling gives us $\computation[alt] \in \sem{\alpha} \continuation \sem{\alpha^{n-1}} \subseteq \sem{\alpha^n}$ with $\trace[alt] = \trace[alt]_1 \cdot \trace[alt]_2$. 
			Finally, $\pstate{w} = \pstate[alt]{w}$ on $\varset$, and $\trace = \trace[alt]$, and ($\pstate{w} = \bot$ iff $\pstate[alt]{w} = \bot$).
		\end{enumerate}
		
		\item $\computation \in \sem{\alpha_1 \parOp \alpha_2}$, then $\computation[proj={\alpha_j}] \in \sem{\alpha}$ exists for $j = 1, 2$ such that $\statetime{\pstate{w}_{\alpha_1}} = \statetime{\pstate{w}_{\alpha_2}}$, and $\trace = \trace \downarrow (\alpha_1 \parOp \alpha_2)$ and $\pstate{w} = \pstate{w}_{\alpha_1} \merge \pstate{w}_{\alpha_2}$.
		Since $\varset \supseteq \FV(\alpha_1 \parOp \alpha_2) \supseteq \FV(\alpha_j)$, 
		by IH, $(\pstate[alt]{v}, \trace[alt]_{\alpha_j}, \pstate[alt]{w}_{\alpha_j}) \in \sem{\alpha_j}$ exists with $\pstate{w}_{\alpha_j} = \pstate[alt]{w}_{\alpha_j}$ on $\varset \cup \MBV(\alpha_j)$, 
		and $\trace \downarrow \alpha = \trace[alt]_{\alpha_j}$, 
		and ($\pstate{w}_{\alpha_j} = \bot$ iff $\pstate[alt]{w}_{\alpha_j} = \bot$).
		We define $\trace[alt] = \trace$. 
		Thus, $\trace[alt] \downarrow \alpha_j = \trace[alt]_{\alpha_j}$, and $\trace[alt] = \trace[alt] \downarrow (\alpha_1 \parOp \alpha_2)$. 
		Moreover, $\statetime{\pstate[alt]{w}_{\alpha_1}} = \statetime{\pstate[alt]{w}_{\alpha_2}}$ either because $\pstate[alt]{w}_{\alpha_1} = \pstate[alt]{v} = \pstate[alt]{w}_{\alpha_2}$ on $\globalTime$ by \rref{lem:boundEffect} if $\globalTime$ is not bound in both programs, 
		or because $\globalTime \in \varset$ since $\globalTime \in \BV(\alpha_1 \parOp \alpha_2)$ implies  $\globalTime \in \FV(\alpha_1 \parOp \alpha_2)$ such that $\pstate[alt, proj={\alpha_1}]{w} = \pstate[proj={\alpha_1}]{w} = \pstate[proj={\alpha_2}]{w} = \pstate[alt, proj={\alpha_2}]{w}$ on $\globalTime$.
		In summary, $\computation[alt] \in \sem{\alpha_1 \parOp \alpha_2}$,
		where $\pstate[alt]{w} = \pstate[alt]{w}_{\alpha_1} \merge \pstate[alt]{w}_{\alpha_2}$, 
		such that $\trace = \trace[alt]$, and ($\pstate{w} = \bot$ iff $\pstate[alt]{w} = \bot$).
		
		Finally, we show $\pstate{w} = \pstate[alt]{w}$ on $\varset \cup \MBV(\alpha_1 \parOp \alpha_2)$.
		Let $\arbitraryVar \in \varset \cup \MBV(\alpha_1 \parOp \alpha_2) = \varset \cup \MBV(\alpha_1) \cup \MBV(\alpha_2)$.
		If $\arbitraryVar \in \BV(\alpha_1)$, then $\arbitraryVar \in \varset \cup \MBV(\alpha_1) \cup \{\globalTime\} \cup \TVar$ because $\V(\alpha_2) \cap \BV(\alpha_1) \subseteq \{\globalTime\} \cup \TVar$ (well-formedness) and $\MBV(\alpha_2) \subseteq \V(\alpha_2)$.
		Moreover, $\BV(\alpha_1) \cap (\{\globalTime\} \cup \TVar)\subseteq \MBV(\alpha_1)$.
		Now, $\pstate{w}(\arbitraryVar) = \pstate{w}_{\alpha_1}(\arbitraryVar) = \pstate[alt]{w}_{\alpha_1}(\arbitraryVar) = \pstate[alt]{w}(\arbitraryVar)$ since $\pstate{w}_{\alpha_1} = \pstate[alt]{w}_{\alpha_1}$ on $\varset \cup \MBV(\alpha)$, and moreover, $\pstate{w} = \pstate{w}_{\alpha_1}$ on $\BV(\alpha_1)$ and $\pstate[alt]{w} = \pstate[alt]{w}_{\alpha_1}$ on $\BV(\alpha_1)$ by the definition of $\merge$ in \rref{sec:semantics}.
		If $\arbitraryVar \not \in \BV(\alpha_1)$, then $\arbitraryVar \in \varset \cup \MBV(\alpha_2)$. 
		We conclude by $\pstate{w}(\arbitraryVar) = \pstate{w}_{\alpha_1}(\arbitraryVar) = \pstate[alt]{w}_{\alpha_2}(\arbitraryVar) = \pstate[alt]{w}(\arbitraryVar)$ because $\pstate{w} = \pstate{w}_{\alpha_2}$ on $\BV(\alpha_1)^\complement$ and $\pstate[alt]{w} = \pstate[alt]{w}_{\alpha_2}$ on $\BV(\alpha_1)^\complement$ by the definition of $\merge$ again, and $\pstate{w}_{\alpha_2} = \pstate[alt]{w}_{\alpha_2}$ on $\varset \cup \MBV(\alpha_2)$.
		\qedhere
	\end{enumerate}
\end{proof}

\section{Details of the Calculus} \label{app:calculus}

Reasoning on trace-terms is by simple algebraic laws (see \rref{fig:traceAlgebra}). 
Their soundness should be clear from the semantics of terms (see \rref{def:termSemantics}).

\renewcommand{\rulesep}{\\[.4em]}
\begin{figure}[ht]
	\begin{minipage}{\textwidth}
		\begin{calculus}
			\startAxiom{concatDist}
				$(\te_1 \cdot \te_2) \downarrow \cset = \te_1 \downarrow \cset \cdot \te_2 \downarrow \cset$
			\stopAxiom
			\startAxiom{projIn}
				$\comItem{\ch{}, \rp_1, \rp_2} \downarrow \cset = \comItem{\ch{}, \rp_1, \rp_2} \sidecondition{$\ch{} \in \cset$}$
			\stopAxiom
			\startAxiom{projCut}
				$(\te \downarrow \cset') \downarrow \cset = \te \downarrow (\cset' \cap \cset)$
			\stopAxiom
			\startAxiom{valAccessBase}
				$\len{\te} = \ie \rightarrow \val{\at{(\te \cdot \te_0)}{\ie}} = \rp_1$
			\stopAxiom
			\startAxiom{timeAccessBase}
				$\len{\te} = \ie \rightarrow \stamp{\at{(\te \cdot \te_0)}{\ie}} = \rp_2$
			\stopAxiom
			\startAxiom{chanAccessBase}
				$\len{\te} = \ie \rightarrow \chan{\at{(\te \cdot \te_0)}{\ie}} = \ch{}$
			\stopAxiom
		\end{calculus}
		\begin{calculus}
			\startAxiom{concatAssoc}
				$(\te_1 \cdot \te_2) \cdot \te_3 = \te_1 \cdot (\te_2 \cdot \te_3)$
			\stopAxiom
			\startAxiom{concatNeutral}
				$\te \cdot \epsilon = \te = \epsilon \cdot \te$
			\stopAxiom
			\startAxiom{projNeutral}
				$\epsilon \downarrow \cset = \epsilon$
			\stopAxiom
			\startAxiom{projNotIn}
				$\comItem{\ch{}, \rp_1, \rp_2} \downarrow \cset = \epsilon \sidecondition{$\ch{} \not \in \cset$}$
			\stopAxiom
			\startAxiom{nonNegative}
				$\len{\te} \ge 0$
			\stopAxiom
			\startAxiom{unroll}
				$\len{\te \cdot \te_0} = \len{\te} + 1$
			\stopAxiom
		\end{calculus}
		
		\vspace{.5em}
		\begin{calculus}
			\startAxiom{valAccessInd}
				$\len{\te} > \ie \rightarrow \val{\at{(\te \cdot \te_0)}{\ie}} = \val{\at{\te}{\ie}}$
			\stopAxiom
			\startAxiom{timeAccessInd}
				$\len{\te} > \ie \rightarrow \stamp{\at{(\te \cdot \te_0)}{\ie}} = \stamp{\at{\te}{\ie}}$
			\stopAxiom
			\startAxiom{chanAccessInd}
				$\len{\te} > \ie \rightarrow \chan{\at{(\te \cdot \te_0)}{\ie}} = \chan{\at{\te}{\ie}}$
			\stopAxiom
		\end{calculus}
	\end{minipage}
	\caption{Algebra of traces \cite{Zwiers_Phd}, where $\te_0 \equiv \comItem{\ch{}, \rp_1, \rp_2}$.}
	\label{fig:traceAlgebra}
\end{figure}

In preparation for the soundness proof (see \rref{thm:soundness}),
\rref{lem:ac_invariance} shows that $\A$ and $\C$ do not depend on the state of $\alpha$ if $[ \alpha ] \ac \psi$ is well-formed.
Moreover, \rref{lem:noninterference} exploits \rref{def:noninterference} showing that a program $\beta$ not interfering with $[ \alpha ] \ac \psi$ has no influence on the validity of $\A$, $\C$, and $\psi$ in the parallel composition $[ \alpha \parOp \beta ] \ac \psi$.

\begin{lemma} \label{lem:ac_invariance}
	Let $[ \alpha ] \ac \psi$ be well-formed and $\computation \in \sem{\alpha}$ with $\pstate{w} \neq \bot$.
	Then $\pstate{v} = \pstate{w}$ on $\FV(\A) \cup \FV(\C)$.
	Moreover, $\pstate{v} \cdot \trace[alt] \vDash \chi$ iff $\pstate{w} \cdot \trace[alt] \vDash \chi$ for $\chi \in \{ \A, \C \}$ and arbitrary $\trace[alt] \in \recTraces$.
\end{lemma}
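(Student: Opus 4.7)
The plan is to derive both claims directly from the bound effect property (Lemma \ref{lem:boundEffect}) together with the well-formedness hypothesis $(\FV(\A) \cup \FV(\C)) \cap \BV(\alpha) \subseteq \TVar$, and then to lift the pointwise agreement to an equivalence of satisfaction by invoking the coincidence lemma for formulas (Lemma \ref{lem:formulaCoincidence}).

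First I would dispatch the state equality on $\FV(\A) \cup \FV(\C)$. Well-formedness of $[\alpha]\ac\psi$ is equivalent to $\FV(\A) \cup \FV(\C) \subseteq \BV(\alpha)^\complement \cup \TVar$. Since the bound effect property guarantees $\pstate{v} = \pstate{w}$ on $\BV(\alpha)^\complement \cup \TVar$ whenever $\computation \in \sem{\alpha}$ with $\pstate{w} \neq \bot$, restricting this agreement to $\FV(\A) \cup \FV(\C)$ yields the first claim.

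For the second claim, I would fix $\chi \in \{\A, \C\}$ and $\trace[alt] \in \recTraces$, and establish $\pstate{v} \cdot \trace[alt] = \pstate{w} \cdot \trace[alt]$ on $\FV(\chi)$ by a short case split on a variable $\arbitraryVar \in \FV(\chi)$. If $\arbitraryVar \in \TVar$, then state-trace concatenation gives $(\pstate{v} \cdot \trace[alt])(\arbitraryVar) = \pstate{v}(\arbitraryVar) \cdot \trace[alt](\arbitraryVar)$ and similarly for $\pstate{w}$, and the two sides coincide by the first claim applied at $\arbitraryVar$. Otherwise $\arbitraryVar \notin \TVar$, so concatenation acts as identity and both sides reduce to $\pstate{v}(\arbitraryVar) = \pstate{w}(\arbitraryVar)$, again by the first claim. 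In either case the required pointwise equality holds, hence also after projection, so communication-aware coincidence for formulas (Lemma \ref{lem:formulaCoincidence}) delivers $\pstate{v} \cdot \trace[alt] \vDash \chi$ iff $\pstate{w} \cdot \trace[alt] \vDash \chi$.

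I do not expect any genuine obstacle; the proof is short and mechanical once the bound effect property is applied with the correct bookkeeping. The only mildly subtle point worth calling out explicitly is that although trace variables from $\TVar$ may well lie in $\BV(\alpha)$ (they are bound by send and receive statements), their \emph{state-component} value is nevertheless invariant along a run of $\alpha$; this is exactly what is captured by the extra $\TVar$ in the agreement set of Lemma \ref{lem:boundEffect}, and this is precisely what the well-formedness side condition $(\FV(\A) \cup \FV(\C)) \cap \BV(\alpha) \subseteq \TVar$ is designed to exploit.
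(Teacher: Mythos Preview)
Your proposal is correct and follows essentially the same approach as the paper: use the bound effect property together with the well-formedness side condition to obtain $\pstate{v} = \pstate{w}$ on $\FV(\A)\cup\FV(\C)$, then invoke coincidence for formulas. You spell out the intermediate step (that $\pstate{v}\cdot\trace[alt]$ and $\pstate{w}\cdot\trace[alt]$ agree on $\FV(\chi)$, via a case split on $\TVar$) more explicitly than the paper does, but the argument is the same.
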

\begin{proof}
	By the bound effect property (\rref{lem:boundEffect}), $\pstate{v} = \pstate{w}$ on $\BV(\alpha)^\complement \cup \TVar$.
	Since $[ \alpha ] \ac \psi$ is well-formed, we have $(\FV(\A) \cup \FV(\C)) \cap \BV(\alpha) \subseteq \TVar$, which implies $\pstate{v} = \pstate{w}$ on $\FV(\A) \cup \FV(\C)$.
	By coincidence (\rref{lem:formulaCoincidence}), $\pstate{v} \cdot \trace[alt] \vDash \chi$ iff $\pstate{w} \cdot \trace[alt] \vDash \chi$.
	\qedhere
\end{proof}

\begin{lemma}[Noninterference prevents invalidation] \label{lem:noninterference}
	Let $\beta \in \Chp$ be a program, which does not interfere with $[ \alpha ] \ac \psi$ (\rref{def:noninterference}). 
	Moreover, let $\computation \in \sem{\alpha \parOp \beta}$, 
	\iest among others $(\pstate{v}, \trace \downarrow \alpha, \pstate{w}_\alpha) \in \sem{\alpha}$ and $(\pstate{v}, \trace \downarrow \beta, \pstate{w}_\beta) \in \sem{\beta}$ with $\pstate{w} = \pstate{w}_\alpha \merge \pstate{w}_\beta$.
	Then for $\chi \in \{ \A, \C \}$, the following holds:
	\begin{enumerate}
		\item \label{itm:invalidation1}
		$\pstate{v} \cdot (\trace \downarrow \alpha) \vDash \chi$
		\;iff\;
		$\pstate{v} \cdot \trace \vDash \chi$
		
		\item \label{itm:invalidation2} 
		$\pstate{w} \neq \bot$ \;implies\; $\big( \pstate{w}_\alpha \cdot (\trace \downarrow \alpha) \vDash \psi$
		\;iff\;
		$\pstate{w} \cdot \trace \vDash \psi \big)$
	\end{enumerate}
\end{lemma}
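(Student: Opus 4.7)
The plan is to apply communication-aware coincidence (\rref{lem:formulaCoincidence}) to each biconditional by exhibiting two states that agree on $\FV(\chi)$ after projecting onto $\CN(\chi)$, respectively on $\FV(\psi)$ after projecting onto $\CN(\psi)$. Noninterference (\rref{def:noninterference}) is exactly the side condition that makes this work.

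For \rref{itm:invalidation1}, the states $\pstate{v} \cdot (\trace \downarrow \alpha)$ and $\pstate{v} \cdot \trace$ coincide with $\pstate{v}$ on every non-trace variable, so it suffices to handle each $\historyVar \in \FV(\chi) \cap \TVar$. After projecting onto $\CN(\chi)$, the only possible discrepancy between $(\pstate{v} \cdot (\trace \downarrow \alpha))(\historyVar)$ and $(\pstate{v} \cdot \trace)(\historyVar)$ consists of events of $\trace$ recorded by $\historyVar$ whose channel lies in $\CN(\chi) \setminus \CN(\alpha)$. Since $\trace = \trace \downarrow (\alpha \parOp \beta)$ by the definition of $\sem{\alpha \parOp \beta}$, every event in $\trace$ already has channel in $\CN(\alpha) \cup \CN(\beta)$, so such a discrepancy would have channel in $\CN(\chi) \cap \CN(\beta)$, which \rref{eq:noninterference_2} bounds by $\CN(\alpha)$. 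The difference is therefore empty, and \rref{lem:formulaCoincidence} closes the case.

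For \rref{itm:invalidation2}, assume $\pstate{w} \neq \bot$, so $\pstate{w}_\alpha \neq \bot$ and $\pstate{w}_\beta \neq \bot$ by the definition of $\merge$. Trace variables in $\FV(\psi)$ are handled exactly as above using \rref{eq:noninterference_2}, because $\pstate{w}$ and $\pstate{w}_\alpha$ agree with $\pstate{v}$ on $\TVar$ by the bound effect property (\rref{lem:boundEffect}). For a non-trace variable $\arbitraryVar \in \FV(\psi)$, I distinguish whether $\arbitraryVar \in \BV(\alpha)$: if so, the definition of $\merge$ gives $\pstate{w}(\arbitraryVar) = \pstate{w}_\alpha(\arbitraryVar)$ directly; otherwise $\pstate{w}(\arbitraryVar) = \pstate{w}_\beta(\arbitraryVar)$, and \rref{eq:noninterference_0} together with $\arbitraryVar \notin \TVar$ leaves two subcases. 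If $\arbitraryVar = \globalTime$, then the time-synchronization $\statetime{\pstate{w}_\alpha} = \statetime{\pstate{w}_\beta}$ built into the semantics of $\alpha \parOp \beta$ yields agreement; otherwise $\arbitraryVar \notin \BV(\beta)$, whence \rref{lem:boundEffect} gives $\pstate{w}_\beta(\arbitraryVar) = \pstate{v}(\arbitraryVar) = \pstate{w}_\alpha(\arbitraryVar)$, again by the fact that $\arbitraryVar \notin \BV(\alpha)$.

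The main obstacle is the careful channel bookkeeping: the gap between $\trace$ and $\trace \downarrow \alpha$ consists precisely of events whose channels lie in $\CN(\beta) \setminus \CN(\alpha)$, and conditions \rref{eq:noninterference_0}--\rref{eq:noninterference_2} are crafted exactly so that such events stay invisible to $\A$, $\C$, and $\psi$. Once this observation is made, the remaining work is a direct case analysis combining $\merge$, bound effect, and communication-aware coincidence.
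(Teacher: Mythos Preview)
Your proposal is correct and follows essentially the same route as the paper: establish that the events in $\trace$ missing from $\trace\downarrow\alpha$ have channels in $\CN(\beta)\setminus\CN(\alpha)$ and are hence filtered out by $\downarrow\CN(\chi)$ thanks to \rref{eq:noninterference_2}, then invoke communication-aware coincidence; for \rref{itm:invalidation2} both proofs combine the definition of $\merge$ with the bound effect property to obtain $\pstate{w}_\alpha=\pstate{w}$ on $\FV(\psi)$. Your variable-by-variable case split is organized slightly differently from the paper's set-algebraic computation, and your explicit treatment of $\globalTime$ via the time-synchronization condition $\statetime{\pstate{w}_\alpha}=\statetime{\pstate{w}_\beta}$ is in fact more careful than the paper, which silently folds this case into the claim $\FV(\psi)\subseteq\BV(\beta)^\complement\cup\TVar$.
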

\begin{proof}
	First, we show $(\trace \downarrow \alpha) \downarrow \lambda = \trace \downarrow \lambda$ for $\lambda \in \{ \A, \C, \psi \}$, \iest that $\lambda$ only depends on $\trace \downarrow \alpha$.
	This holds if only a communication event $\rawtrace \in \{ \comItem{\ch{}, a, \duration}, \comItem{\historyVar, \ch{}, a, \duration} \}$ in $\trace$,
	which is not removed by $\downarrow \lambda$ is also not removed by $\downarrow \alpha$.
	So let $\rawtrace \downarrow \lambda = \rawtrace$.
	Then $\ch{} \in \CN(\lambda)$.
	If $\ch{} \not \in \CN(\beta)$,
	then $\ch{} \in \CN(\alpha)$ because $\ch{} \in \CN(\alpha) \cup \CN(\beta)$ as $\rawtrace$ is emitted by $\alpha \parOp \beta$.
	Otherwise, if $\ch{} \in \CN(\beta)$,
	then $\ch{} \in \CN(\alpha)$ by noninterference (\rref{def:noninterference}).
	Hence, $\ch{} \in \CN(\alpha)$ such that $\rawtrace$ is not removed by $\downarrow \alpha$.
	
	Since $(\trace \downarrow \alpha) \downarrow \chi = \trace \downarrow \chi$, we have 
	\begin{equation} \label{eq:historyProjs}
		\big( \pstate{v} \cdot (\trace \downarrow \alpha) \big) \downarrow \chi
		= (\pstate{v} \downarrow \chi) \cdot \big( (\trace \downarrow \alpha) \downarrow \chi \big)
		= (\pstate{v} \downarrow \chi) \cdot (\trace \downarrow \chi)
		= (\pstate{v} \cdot \trace) \downarrow \chi
	\end{equation}
	on $\V \supseteq \FV(\chi)$.
	Now, \rref{itm:invalidation1} follows by coincidence (\rref{lem:formulaCoincidence}).
	
	For \rref{itm:invalidation2}, assume $\pstate{w} \neq \bot$. 
	Then $\pstate{w}_\alpha \neq \bot$ and $\pstate{w}_\beta \neq \bot$ by the definition of $\merge$ in \rref{sec:semantics}. 
	First, observe that $\pstate{w}_\alpha = \pstate{w}$ on $\BV(\alpha)$ by the definition of $\merge$, thus $\pstate{w}_\alpha = \pstate{w}$ on $\BV(\alpha) \cap \BV(\beta)^\complement$.
	Second, $\pstate{w}_\alpha = \pstate{v}$ on $\BV(\alpha)^\complement \cup \TVar$ by the bound effect property (\rref{lem:boundEffect}), 
	$\pstate{v} = \pstate{w}_\beta$ on $\BV(\beta)^\complement \cup \TVar$ also by the bound effect property, 
	and $\pstate{w}_\beta = \pstate{w}$ on $\BV(\alpha)^\complement$ by the definition of $\merge$.
	Chaining the equalities of the last sentence gives us $\pstate{w}_\alpha = \pstate{w}$ on
	\begin{equation*}
		(\BV(\alpha)^\complement \cup \TVar) \cap (\BV(\beta)^\complement \cup \TVar) \cap \BV(\alpha)^\complement \supseteq \BV(\alpha)^\complement \cap \BV(\beta)^\complement \text{.}
	\end{equation*}
	Third, $\pstate{w}_\alpha = \pstate{v} = \pstate{w}_\beta$ on $\TVar$ by the bound effect property such that $\pstate{w}_\alpha = \pstate{w}_\alpha \merge \pstate{w}_\beta = \pstate{w}$ on $\TVar$.
	Summarizing, where $\pstate{w}_\alpha$ equals $\pstate{w}$, 
	we obtain $\pstate{w}_\alpha = \pstate{w}$ on
	\begin{equation*}
		(\BV(\alpha) \cap \BV(\beta)^\complement) \cup (\BV(\alpha)^\complement \cap \BV(\beta)^\complement) \cup \TVar = \BV(\beta)^\complement \cup \TVar \text{.}
	\end{equation*} 
	Thus, $\pstate{w}_\alpha = \pstate{w}$ on $\FV(\psi)$ because $\beta$ does not interfere with $[ \alpha ] \ac \psi$ (\rref{def:noninterference}), which implies $\FV(\psi) \subseteq \BV(\beta)^\complement \cup \TVar$.
	Therefore, $(\pstate[ind=\alpha]{w} \cdot (\trace \downarrow \alpha)) \downarrow \chi = (\pstate{w} \cdot \trace) \downarrow \chi$ on $\FV(\psi)$ since $(\pstate[ind=\alpha]{w} \cdot (\trace \downarrow \alpha)) \downarrow \chi = (\pstate{w}_\alpha \cdot \trace) \downarrow \chi$ holds analogously to \rref{eq:historyProjs}.
	Finally, \rref{itm:invalidation2} holds by coincidence.
	\qedhere
\end{proof}

\begin{proposition} \label{prop:parAssocComm}
	The parallel operator $\parOp$ is associative and commutative.
\end{proposition}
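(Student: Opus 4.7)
The plan is to unfold the semantic definition of $\parOp$ on both sides of each equation and match up the conditions, reducing everything to the corresponding property for the state-merge operator $\merge$.

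For commutativity $\sem{\alpha \parOp \beta} = \sem{\beta \parOp \alpha}$, the plan is to verify that each of the four conditions in the defining set of $\sem{\alpha} \parOp \sem{\beta}$ is symmetric in $\alpha$ and $\beta$: the two projection conditions $\computation[proj=\alpha] \in \sem{\alpha}$ and $\computation[proj=\beta] \in \sem{\beta}$ simply swap, the condition $\statetime{\pstate{w}_\alpha} = \statetime{\pstate{w}_\beta}$ is symmetric, and $\trace \downarrow (\alpha \parOp \beta) = \trace \downarrow (\CN(\alpha) \cup \CN(\beta))$ is invariant under the swap. The only nontrivial point is $\pstate{w}_\alpha \merge \pstate{w}_\beta = \pstate{w}_\beta \merge \pstate{w}_\alpha$: a case distinction on membership in $\BV(\alpha)$ and $\BV(\beta)$ reduces to showing that on the overlap $\BV(\alpha) \cap \BV(\beta)$, which is contained in $\{\globalTime\} \cup \TVar$ by the well-formedness condition of $\alpha \parOp \beta$, both states agree; this follows because synchronization forces agreement on $\globalTime$ and the bound effect property (\rref{lem:boundEffect}) forces agreement with $\pstate{v}$ on $\TVar$. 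Off $\BV(\alpha) \cup \BV(\beta)$, the same property again makes both states equal $\pstate{v}$.

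For associativity, the plan is to unfold the definition of $\parOp$ twice on each side of $\sem{(\alpha \parOp \beta) \parOp \gamma} = \sem{\alpha \parOp (\beta \parOp \gamma)}$, collapsing nested projections via axiom \RuleName{projCut} so that both bracketings share the common conditions $(\pstate{v}, \trace \downarrow \delta, \pstate{w}_\delta) \in \sem{\delta}$ for $\delta \in \{\alpha, \beta, \gamma\}$ together with $\trace = \trace \downarrow (\CN(\alpha) \cup \CN(\beta) \cup \CN(\gamma))$. The nested synchronization conditions reassociate because $\statetime{\pstate{w}_\alpha \merge \pstate{w}_\beta}$ coincides with $\statetime{\pstate{w}_\alpha}$ and $\statetime{\pstate{w}_\beta}$ whenever these agree, directly from the definition of $\merge$ on $\globalTime$. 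What remains is $(\pstate{w}_\alpha \merge \pstate{w}_\beta) \merge \pstate{w}_\gamma = \pstate{w}_\alpha \merge (\pstate{w}_\beta \merge \pstate{w}_\gamma)$.

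The associativity of $\merge$ is the main obstacle, because its definition is asymmetric (keyed on $\BV(\alpha)$). The approach is to exploit that the outer well-formedness condition for $(\alpha \parOp \beta) \parOp \gamma$ combined with the inner one for $\alpha \parOp \beta$ forces the pairwise intersections of $\BV(\alpha), \BV(\beta), \BV(\gamma)$ to lie in $\{\globalTime\} \cup \TVar$. Then any variable $\arbitraryVar$ either falls into exactly one of $\BV(\alpha), \BV(\beta), \BV(\gamma)$ or into none of them, with the remaining triple overlap confined to $\{\globalTime\} \cup \TVar$. On each such region, both bracketings select the same component, and on the distinguished overlap all three states agree---on $\globalTime$ by synchronization and on $\TVar$ by the bound effect property applied at the common initial state $\pstate{v}$.
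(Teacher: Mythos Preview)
Your proposal is correct and follows essentially the same route as the paper: unfold the semantic definition of $\parOp$ on both sides, reduce to commutativity and associativity of $\merge$, and establish these by a case analysis on $\BV(\alpha)$, $\BV(\beta)$, $\BV(\gamma)$ using the bound effect property (\rref{lem:boundEffect}) on $\TVar$ and the time-synchronization condition on $\globalTime$. The only point the paper makes explicit that you leave implicit is the $\bot$ case for $\merge$ in the associativity argument (both bracketings collapse to $\bot$ if any component does), and the observation that projections of the chronological trace $\trace$ remain chronological so that the reassembled tuples lie in $\pDomain$; both are routine additions.
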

\begin{proof}[\rref{prop:parAssocComm}]
	We prove that the parallel operator $\parOp$ is associative and commutative:
	\begin{enumerate}
		\item Let $\alpha\beta$ be short for $\alpha \parOp \beta$.
		Now, $\computation \in \sem{\alpha \parOp \beta\gamma}$
		iff $\computation[proj=\pi] \in \sem{\pi}$ for $\pi \in \{ \alpha, \beta\gamma \}$,
		and $\statetime{\pstate{w}_\alpha} = \statetime{\pstate{w}_{\beta\gamma}}$,
		and $\trace \downarrow (\alpha \parOp \beta\gamma) = \trace$,
		and $\pstate{w} = \pstate{w}_\alpha \merge \pstate{w}_{\beta\gamma}$.
		Moreover, $\computation[proj={\beta\gamma}] \in \sem{\beta\gamma}$ iff $\computation[proj=\pi] \in \sem{\pi}$ since $(\trace \downarrow \beta\gamma) \downarrow \pi = \trace \downarrow \pi$ for $\pi \in \{ \beta, \gamma \}$,
		and $(\trace \downarrow \beta\gamma) \downarrow \beta\gamma = \trace \downarrow \beta\gamma$,
		and $\statetime{\pstate{w}_\beta} = \statetime{\pstate{w}_\gamma}$,
		and $\pstate{w}_{\beta\gamma} = \pstate{w}_\beta \merge \pstate{w}_\gamma$.

		Likewise, $\computation \in \sem{\alpha\beta \parOp \gamma}$ iff $\computation[proj=\pi] \in \sem{\pi}$ for $\pi \in \{ \alpha, \beta, \gamma \}$,
		and $\statetime{\pstate{w}_{\alpha\beta}} = \statetime{\pstate{w}_\gamma}$,
		and $\statetime{\pstate{w}_\alpha} = \statetime{\pstate{w}_\beta}$,
		and $\trace \downarrow (\alpha\beta \parOp \gamma) = \trace$,
		and $(\trace \downarrow \alpha\beta) \downarrow \alpha\beta = \trace \downarrow \alpha\beta$,
		and $\pstate{w} = \pstate{w}_{\alpha\beta} \merge \pstate{w}_\gamma$,
		and $\pstate{w}_{\alpha\beta} = \pstate{w}_\alpha \merge \pstate{w}_\beta$.

		Both decompositions can be reassembled into the other one because $\trace \downarrow \cset$ is chronological for any~$\cset$ as $\trace$ is chronological.
		Moreover, all $\statetime{\pstate{w}_\pi}$ with $\pi \in \{ \alpha, \beta, \gamma \}$ equate since $\statetime{\pstate{w}_{\beta\gamma}} = \statetime{\pstate{w}_\beta}$ and $\statetime{\pstate{w}_{\alpha\beta}} = \statetime{\pstate{w}_\alpha}$.
		Finally, $\pstate{w}_\alpha \merge (\pstate{w}_\beta \merge \pstate{w}_\gamma) = (\pstate{w}_\alpha \merge \pstate{w}_\beta) \merge \pstate{w}_\gamma$:

		If $\pstate{w}_\pi = \bot$ for some $\pi \in \{ \alpha, \beta, \gamma \}$,
		then both sides equal $\bot$.
		Otherwise, if $\pstate{w}_\pi \neq \bot$ for all $\pi \in \{ \alpha, \beta, \gamma \}$, 
		the equation holds 
		because each variable is written by at most one of $\alpha$, $\beta$, or $\gamma$.

		\item Let $\computation \in \sem{\alpha \parOp \beta}$.
		Then $\computation[proj=\alpha] \in \sem{\alpha}$, 
		and $\computation[proj=\beta] \in \sem{\beta}$, 
		and $\trace \downarrow (\alpha \parOp \beta) = \trace$, 
		and $\statetime{\pstate{w}_\alpha} = \statetime{\pstate{w}_\beta}$, and $\pstate{w} = \pstate{w}_\alpha \merge \pstate{w}_\beta$.
		First, $\trace \downarrow (\alpha \parOp \beta) = \trace \downarrow (\beta \parOp \alpha)$ since $\CN(\alpha \parOp \beta) = \CN(\alpha) \cup \CN(\beta)$.
		By the bound effect property (\rref{lem:boundEffect}), $\pstate{w}_\alpha \merge \pstate{w}_\beta = \pstate{v} = \pstate{w}_\beta \merge \pstate{w}_\alpha$ on $(\BV(\alpha) \cup \BV(\beta))^\complement$. 
		Moreover, $\pstate{w}_\alpha \merge \pstate{w}_\beta = \pstate{w}_\alpha = \pstate{w}_\beta \merge \pstate{w}_\alpha$ on $\BV(\alpha)$ since $\BV(\alpha) \subseteq \BV(\beta)^\complement$ because $\alpha$ and $\beta$ do not share state.
		Accordingly, $\pstate{w}_\alpha \merge \pstate{w}_\beta = \pstate{w}_\beta \merge \pstate{w}_\alpha$ on $\BV(\beta)$ such that overall $\pstate{w}_\alpha \merge \pstate{w}_\beta = \pstate{w}_\beta \merge \pstate{w}_\alpha$.
		Finally, $\computation \in \sem{\beta \parOp \alpha}$.
		\qedhere
	\end{enumerate}
\end{proof}

\begin{proof}[\rref{thm:soundness}]
	We prove soundness of the novel ac-axioms and rules. 
	Since \dLCHP is a conservative extension of \dL (\rref{prop:conservative}), we can soundly use the \dL proof calculus for reasoning about \dLCHP formulas. 
	Hence, we point to the literature for soundness of the axioms and rules adopted from~$\dL$ \cite{DBLP:journals/jar/Platzer08, DBLP:journals/jar/Platzer17, Platzer18}. 
	
	\begin{itemize}[leftmargin=1em, itemindent=-1em]
		\itemsep.5em
		\item[] \RuleName{acNoCom}:
		Let $\pstate{v} \vDash [ \alpha ] \ac \psi$. 
		Then $\pstate{v} \vDash \C$ by axiom \RuleName{acWeak}.
		Now, let $\computation \in \sem{\alpha}$, 
		and assume $\pstate{w} \neq \bot$ and $\pstate{v} \vDash \A$. 
		Then $\trace = \epsilon$ because $\CN(\alpha) = \emptyset$. 
		Therefore, the precondition $\preeq{\pstate{v}}{\trace} \vDash \A$ of \acPost is fulfilled, and we conclude $\pstate{w} \vDash \psi$. 
		Conversely, let $\pstate{v} \vDash \C \wedge (\A \rightarrow [ \alpha ] \psi)$ and $\computation \in \sem{\alpha}$.
		Condition \acCommit holds since $\trace = \epsilon$ and $\pstate{v} \vDash \C$ by precondition. 
		For \acPost, assume $\pstate{w} \neq \bot$ and $\preeq{\pstate{v}}{\trace} \vDash \A$, which contains $\pstate{v} \vDash \A$.
		Then $\pstate{v} \vDash \A \rightarrow [ \alpha ] \psi$ implies $\pstate{w} \vDash \psi$.
		\item[] \RuleName{acWeak}:
		Let $\pstate{v} \vDash [ \alpha ] \ac \psi$.
		Then $\pstate{v} \vDash \C$ by \acCommit since $(v, \epsilon, \bot) \in \sem{\alpha}$ by  totality (\rref{prop:prefixClosedAndTotal}) and $\proppre{\pstate{v}}{\epsilon} \vDash \A$ trivially holds as $\proppre{\pstate{v}}{\epsilon} = \emptyset$.
		For $[ \alpha ] \ac ( \C \wedge ( \A \rightarrow \psi ) )$, \acCommit holds by assumption. 
		For \acPost, let $\computation \in \sem{\alpha}$, and assume $\pstate{w} \neq \bot$ and $\preeq{\pstate{v}}{\trace} \vDash \A$. 
		By \acCommit, $\pstate{v} \cdot \trace \vDash \C$ such that $\pstate{w} \cdot \trace \vDash \C$ by \rref{lem:ac_invariance}.
		From \acPost, we obtain $\pstate{w} \cdot \trace \vDash \psi$,
		which implies $\pstate{w} \cdot \trace \vDash \A \rightarrow \psi$. 
		The converse direction derives by rule \RuleName{acMono}.
		\item[] \RuleName{K}:
		Let $\pstate{v} \vDash \universal \K$, where $\K \equiv \Kexpanded$ and $\universal \K$ denotes the universal closure of $\K$.
		Moreover, let $\pstate{v} \vDash \Phi$ for $\Phi \equiv [ \alpha ] \acpair{\A_1 \wedge \A_2, \C_1 \wedge \C_2} \psi$ and $\computation \in \sem{\alpha}$.
		By induction on the length of $\trace$,
		we simultaneously prove $\pstate{v} \vDash [ \alpha ] \acpair{\A, \C_1 \wedge \C_2} \psi$, and $\proppre{\pstate{v}}{\trace} \vDash \A$ implies $\proppre{\pstate{v}}{\trace} \vDash \A_1 \wedge \A_2$:
		
		\begin{enumerate}
			\item $\semLen{\trace} = 0$, 
			then $\pstate{v} \vDash \Phi$ implies $\pstate{v} \vDash \C_1 \wedge \C_2$ by axiom \RuleName{acWeak}.
			Hence, \acCommit holds since $\trace = \epsilon$.
			For \acPost, assume $\pstate{w} \neq \bot$ and $\preeq{\pstate{v}}{\trace} \vDash \A$. 
			Then $\preeq{\pstate{v}}{\trace} \vDash \A_1 \wedge \A_2$ since $\pstate{v} \vDash \C_1 \wedge \C_2$ holds, $\K$ is valid, and $\trace = \epsilon$.
			Thus, $\pstate{w} \cdot \trace \vDash \psi$ by assumption. 
			Note that $\proppre{\pstate{v}}{\trace} \vDash \A_1 \wedge \A_2$ is trivially fulfilled since $\proppre{\pstate{v}}{\trace} = \emptyset$.
			
			\item $\semLen{\trace} > 0$,
			then $\trace = \trace_0 \cdot \rawtrace$ with $\semLen{\rawtrace} = 1$.
			For \acCommit, assume $\proppre{\pstate{v}}{\trace} \vDash \A$.
			Then $\preeq{\pstate{v}}{\trace_0} \vDash \A$, which implies $\proppre{\pstate{v}}{\trace_0} \vDash \A_1 \wedge \A_2$ by IH.
			Since $(v, \trace_0, \bot) \in \sem{\alpha}$ by prefix-closedness (\rref{prop:prefixClosedAndTotal}),
			we obtain $\pstate{v} \cdot \trace_0 \vDash \C_1 \wedge \C_2$ by $\pstate{v} \vDash \Phi$.
			Hence, $\pstate{v} \cdot \trace_0 \vDash \A_1 \wedge \A_2$ as $\K$ is valid and $\pstate{v} \cdot \trace_0 \vDash \A$ such that $\proppre{\pstate{v}}{\trace} \vDash \A_1 \wedge \A_2$. 
			Finally, $\pstate{v} \cdot \trace \vDash \C_1 \wedge \C_2$ by $\pstate{v} \vDash \Phi$ again. 
			
			For \acPost, assume $\pstate{w} \neq \bot$ and $\preeq{\pstate{v}}{\trace} \vDash \A$. 
			Then $\proppre{\pstate{v}}{\trace} \vDash \A$, which implies $\proppre{\pstate{v}}{\trace} \vDash \A_1 \wedge \A_2$ and $\pstate{v} \cdot \trace \vDash \C_1 \wedge \C_2$ as in case \acCommit. 
			By validity of $\K$ and $\pstate{v} \cdot \trace \vDash \A$, we obtain $\pstate{v} \cdot \trace \vDash \A_1 \wedge \A_2$. 
			In summary, $\preeq{\pstate{v}}{\trace} \vDash \A_1 \wedge \A_2$, which implies $\pstate{w} \cdot \trace \vDash \psi$ by $\pstate{v} \vDash \Phi$.
		\end{enumerate}
		\item[] \RuleName{acComposition}
		Let $\pstate{v} \vDash [ \alpha \seq \beta ] \ac \psi$. 
		To show $\pstate{v} \vDash [ \alpha ] \ac [\beta] \ac \psi$, 
		let $(v, \trace_1, \pstate{u}) \in \sem{\alpha}$.
		For \acCommit, assume $\proppre{\pstate{v}}{\trace_1} \vDash \A$ and observe that $(v, \trace_1, \bot) \in \botop{\sem{\alpha}} \subseteq \sem{\alpha \seq \beta}$.
		Then $\pstate{v} \cdot \trace_1 \vDash \C$ by assumption. 
		For \acPost, assume $\pstate{u} \neq \bot$ and $\preeq{\pstate{v}}{\trace_1} \vDash \A$. 
		To show $\pstate{u} \cdot \trace_1 \vDash [ \beta ] \ac \psi$, let $(\pstate{u}, \trace_2, \pstate{w}) \in \sem{\beta}$.
		
		\begin{enumerate}
			\item For \acCommit, assume $\proppre{\pstate{u} \cdot \trace_1}{\trace_2} \vDash \A$.
			Then $\proppre{\pstate{v} \cdot \trace_1}{\trace_2} \vDash \A$ by \rref{lem:ac_invariance}, which implies $\proppre{\pstate{v}}{\trace_1 \cdot \trace_2} \vDash \A$ by assumption $\preeq{\pstate{v}}{\trace_1} \vDash \A$.
			Then $\pstate{v} \cdot \trace_1 \cdot \trace_2 \vDash \C$ because $(v, \trace_1 \cdot \trace_2, \pstate{w}) \in \sem{\alpha} \continuation \sem{\beta} \subseteq \sem{\alpha \seq \beta}$. 
			Finally, $\pstate{u} \cdot \trace_1 \cdot \trace_2 \vDash \C$, using \rref{lem:ac_invariance} again.
			
			\item For \acPost, assume $\pstate{w} \neq \bot$ and $\preeq{\pstate{u} \cdot \trace_1}{\trace_2} \vDash \A$. 
			Then $\preeq{\pstate{v}}{\trace_1 \cdot \trace_2} \vDash \A$ by \rref{lem:ac_invariance}. 
			Since $(v, \trace_1 \cdot \trace_2, \pstate{w}) \in \sem{\alpha \seq \beta}$, we obtain $\pstate{w} \cdot \trace_1 \cdot \trace_2 \vDash \psi$ by assumption.
		\end{enumerate}
		
		Conversely, let $\pstate{v} \vDash [ \alpha ] \ac [ \beta ] \ac \psi$ and $\computation \in \sem{\alpha \seq \beta}$.
		If $\computation \in \botop{\sem{\alpha}}$, \acCommit holds by the assumption. 
		Since $\pstate{w} = \bot$, \acPost holds trivially.
		Otherwise, if $\computation \in \sem{\alpha} \continuation \sem{\beta}$,
		computations $(\pstate{v}, \trace_1, \pstate{u}) \in \sem{\alpha}$ and $(\pstate{u}, \trace_2, \pstate{w}) \in \sem{\beta}$ with $\trace = \trace_1 \cdot \trace_2$ exist. 
		Now, we conclude as follows:
		
		\begin{enumerate}
			\item 
			For \acCommit, assume $\proppre{\pstate{v}}{\trace} \vDash \A$. 
			If $\trace_2 = \epsilon$, \acCommit by the assumption because $(\pstate{v}, \trace, \bot) \in \botop{\sem{\alpha}}$.
			If $\trace_2 \neq \epsilon$, then $\preeq{\pstate{v}}{\trace_1} \vDash \A$.
			Hence, $\pstate{u} \cdot \trace_1 \vDash [ \beta ] \ac \psi$ by \acPost. 
			By \rref{lem:ac_invariance}, $\proppre{\pstate{u}}{\trace_1 \cdot \trace_2} \vDash \A$  
			which implies $\proppre{\pstate{u} \cdot \trace_1}{\trace_2} \vDash \A$.
			By $\pstate{u} \cdot \trace_1 \vDash [ \beta ] \ac \psi$ and $(\pstate{u}, \trace_2, \pstate{w}) \in \sem{\beta}$, we obtain $\pstate{u} \cdot \trace_1 \cdot \trace_2 \vDash \C$. 
			Using \rref{lem:ac_invariance} again, $\pstate{v} \cdot \trace_1 \cdot \trace_2 \vDash \C$.
			
			\item 
			For \acPost, assume $\preeq{\pstate{v}}{\trace} \vDash \A$ and $\pstate{w} \neq \bot$. 
			Then $\pstate{u} \cdot \trace_1 \vDash [ \beta ] \ac \psi$ as above.
			Finally, $\pstate{w} \cdot \trace \vDash \psi$ because $\preeq{\pstate{u} \cdot \trace_1}{\trace_2} \vDash \A$ by \rref{lem:ac_invariance}.
		\end{enumerate}
		\item[] \RuleName{acChoice}:
		The axiom follows directly from the semantics $\sem{\alpha \cup \beta} = \sem{\alpha} \cup \sem{\beta}$ of choice. 
		\item[] \RuleName{acIteration}
		Since $\sem{\repetition{\alpha}} = \bigcup_{n \in \naturals} \sem{\alpha^n}$, the formula $[ \repetition{\alpha} ] \ac \psi \leftrightarrow [ \alpha^0 ] \ac \psi \wedge [ \alpha \seq \alpha^* ] \ac \psi$ is valid. Axiom \RuleName{acIteration} follows from the formula by axiom \RuleName{acComposition}.    
		\item[] \RuleName{acInduction}:
		Let $\pstate{v} \vDash [ \repetition{\alpha} ] \ac \psi$. 
		Then $\pstate{v} \vDash [ \alpha^0 ] \ac \psi \wedge [ \alpha \seq \repetition{\alpha} ] \ac \psi$ by axioms \RuleName{acIteration} and \RuleName{acComposition}.
		Since $\sem{\alpha \seq \repetition{\alpha}} = \sem{\repetition{\alpha} \seq \alpha}$ by induction and \rref{lem:compositionAssoc}, 
		$\pstate{v} \vDash [ \repetition{\alpha} ] \ac [ \alpha ] \ac \psi$ by axiom \RuleName{acComposition}. 
		Finally, $\pstate{v} \vDash [ \repetition{\alpha} ] \acpair{\A, \true} (\psi \rightarrow [ \alpha ] \ac \psi)$ by rule \RuleName{acMono}.
		
		Conversely, let $\pstate{v} \vDash [ \alpha^0 ] \ac \psi \wedge [ \repetition{\alpha} ] \acpair{\A, \true} ( \psi \rightarrow [ \alpha ] \ac \psi )$ and $\computation \in \sem{\repetition{\alpha}}$.
		Since $\computation \in \sem{\alpha^n}$ for some $n \in \naturals$, we proceed by induction on $n$:
		\begin{enumerate}
			\item $n = 0$,
			then \acCommit holds by $\pstate{v} \vDash [ \alpha^0 ] \ac \psi$ and axiom \RuleName{acWeak} since $\trace = \epsilon$.
			For \acPost, additionally assume $\pstate{w} \neq \bot$ and observe that $\pstate{v} = \pstate{w}$ since $\alpha^0 \equiv \test{\true}$ such that $\pstate{w} \vDash \psi$.

			\item $n > 0$,
			then $\computation \in \sem{\alpha^n} = \sem{\alpha \seq \alpha^{n-1}}$, which,
			by \rref{lem:compositionAssoc},
			equals $\sem{\alpha^{n-1} \seq \alpha} = \botop{\sem{\alpha^{n-1}}} \cup \sem{\alpha^{n-1}} \continuation \sem{\alpha}$.
			In case $\botop{\sem{\alpha^{n-1}}} \subseteq \sem{\alpha^{n-1}}$, 
			\acCommit and \acPost hold by IH.
			Otherwise, if $\computation \in \sem{\alpha^{n-1}} \continuation \sem{\alpha}$, computations $(\pstate{v}, \trace_1, \pstate{u}) \in \sem{\alpha^{n-1}}$ and $(\pstate{u}, \trace_2, \pstate{w}) \in \sem{\alpha}$ exist with $\trace = \trace_1 \cdot \trace_2$.
			\begin{enumerate}
				\item If $\trace_2 = \epsilon$, \acCommit holds by IH since $(\pstate{v}, \trace, \bot) \in \sem{\alpha^{n-1}}$.
				If $\trace_2 \neq \epsilon$, assume $\proppre{\pstate{v}}{\trace} \vDash \A$, which implies $\preeq{\pstate{v}}{\trace_1} \vDash \A$.
				Thus, $\pstate{u} \cdot \trace_1 \vDash \psi \rightarrow [ \alpha ] \ac \psi$ by assumption.
				Since $\pstate{u} \cdot \trace_1 \vDash \psi$ by IH and \acPost, we obtain $\pstate{u} \cdot \trace_1 \vDash [ \alpha ] \ac \psi$.
				By \rref{lem:ac_invariance} and assumption $\proppre{\pstate{v}}{\trace} \vDash \A$, we obtain $\proppre{\pstate{u} \cdot \trace_1}{\trace_2} \vDash \A$.
				Thus, $\pstate{u} \cdot \trace_1 \cdot \trace_2 \vDash \C$, which implies $\pstate{v} \cdot \trace \vDash \C$ by \rref{lem:ac_invariance} again.

				\item For \acPost, let $\pstate{w} \neq \bot$ and $\preeq{\pstate{v}}{\trace} \vDash \A$.
				Then $\preeq{\pstate{v}}{\trace_1} \vDash \A$ and $\preeq{\pstate{v} \cdot \trace_1}{\trace_2} \vDash \A$. 
				As in case \acCommit, we obtain $\pstate{u} \cdot \trace_1 \vDash [ \alpha ] \ac \psi$. 
				Now, \rref{lem:ac_invariance} implies $\preeq{\pstate{u} \cdot \trace_1}{\trace_2} \vDash \A$ such that $\pstate{w} \cdot \trace \vDash \psi$.
			\end{enumerate}
		\end{enumerate}
		\item[] \RuleName{acDropComp}:
		Let $\pstate{v} \vDash [ \alpha ] \ac \psi$ and $\computation \in \sem{\alpha \parOp \beta}$. 
		Then $(\pstate{v}, \trace \downarrow \alpha, \pstate{w}_\alpha) \in \sem{\alpha}$ with $\pstate{w} = \pstate{w}_\alpha \merge \pstate{w}_\beta$ for some $\pstate{w}_\beta \in \botop{\states}$.
		For \acCommit, assume $\proppre{\pstate{v}}{\trace} \vDash \A$.
		If $\trace[pre]_\alpha \prec \trace \downarrow \alpha$, then $\trace[pre] \prec \trace$ exists such that $\trace[pre]_\alpha = \trace[pre] \downarrow \alpha$. 
		Thus, $(\pstate{v}, \trace[pre], \bot) \in \sem{\alpha \parOp \beta}$ and $(\pstate{v}, \trace[pre]_\alpha, \bot) \in \sem{\alpha}$ by prefix-closedness (\rref{prop:prefixClosedAndTotal}).
		Hence, $\proppre{\pstate{v}}{\trace \downarrow \alpha} \vDash \A$ by \rref{lem:noninterference} since $\beta$ does not interfere with $[ \alpha ] \ac \psi$.
		Therefore, $\pstate{v} \cdot (\trace \downarrow \alpha) \vDash \C$ by $\pstate{v} \vDash [ \alpha ] \ac \psi$.
		Now, we obtain $\pstate{v} \cdot \trace \vDash \C$ using \rref{lem:noninterference} again.
		For \acPost, assume $\pstate{w} \neq \bot$ and $\preeq{\pstate{v}}{\trace} \vDash \A$. 
		Then $\pstate[ind=\alpha]{w} \neq \bot$ by the definition of $\merge$ in \rref{sec:semantics} and $\preeq{\pstate{v}}{\trace \downarrow \alpha} \vDash \A$ by \rref{lem:noninterference} as above. 
		Hence, $\pstate{w}_\alpha \cdot (\trace \downarrow \alpha) \vDash \psi$ by $\pstate{v} \vDash [ \alpha ] \ac \psi$.
		Finally, $\pstate{w} \cdot \trace \vDash \psi$ by \rref{lem:noninterference}.
		\item[] \RuleName{boxesDual}:
		Let $\pstate{v} \vDash [ \alpha ] \psi$ and $\computation \in \sem{\alpha}$. 
		Then \acCommit holds trivially. 
		For \acPost, assume $\pstate{w} \neq \bot$ and $\preeq{\pstate{v}}{\trace} \vDash \true$.
		Then $\pstate{w} \cdot \trace \vDash \psi$ by assumption.
		Conversely, let $\pstate{v} \vDash [ \alpha ] \acpair{\true, \true} \psi$ and $\computation \in \sem{\alpha}$ with $\pstate{w} \neq \bot$. 
		Then $\pstate{w} \cdot \trace \vDash \psi$ holds by \acPost since $\preeq{\pstate{v}}{\trace} \vDash \true$ holds trivially.
		\item[] \RuleName{acBoxesDist}:
		The implication $[ \alpha ] \acpair{\A, \C_1} \psi_1 \wedge [ \alpha ] \acpair{\A, \C_2} \psi_2 \rightarrow [ \alpha ] \acpair{\A, \C_1 \wedge \C_2} (\psi_1 \wedge \psi_2)$ can be easily shown by the semantics.
		The converse impliciation derives by rule \RuleName{acMono}.
		\item[] \RuleName{acCom}:
		Let $\pstate{v} \vDash [ \send{}{}{} ] \ac \psi$.
		Then $\pstate{v} \vDash \C$ by \RuleName{acWeak}.
		Now, assume $\pstate{v} \vDash \A$ and let $\computation \in \sem{\send{}{}{}}$ with $\pstate{w} \neq \bot$.
		Since $\pstate{v} \vDash \A$ and $\semLen{\trace} = 1$,
		we have $\proppre{\pstate{v}}{\trace} \vDash \A$.
		Thus, $\pstate{v} \cdot \trace \vDash \C$ by \acCommit, 
		which implies $\pstate{w} \cdot \trace \vDash \C$ by \rref{lem:ac_invariance}.
		For $\A \rightarrow \psi$, assume $\pstate{w} \cdot \trace \vDash \A$.
		Then $\pstate{v} \cdot \trace \vDash \A$ by \rref{lem:ac_invariance} such that $\preeq{\pstate{v}}{\trace} \vDash \A$.
		Hence, $\pstate{w} \cdot \trace \vDash \psi$ by \acPost.
		
		Conversely, let $\pstate{v} \vDash \Phi([ \send{}{}{} ] \Phi(\psi))$, where $\Phi(\phi) \equiv \C \wedge (\A \rightarrow \phi)$ and $\computation \in \sem{\send{}{}{}}$.
		For \acCommit, assume $\proppre{\pstate{v}}{\trace} \vDash \A$.
		If $\trace = \epsilon$, we obtain $\pstate{v} \vDash \C$ by assumption.
		If $\trace \neq \epsilon$, then $\pstate{v} \vDash \A$ such that $\pstate{v} \vDash [ \send{}{}{} ] \Phi(\psi)$.
		\Wlossg $\pstate{w} \neq \bot$ since $\send{}{}{}$ has a terminating computation from any $\pstate{v}$.
		Hence, $\pstate{w} \cdot \trace \vDash \Phi(\psi)$ such that $\pstate{w} \cdot \trace \vDash \C$.
		By \rref{lem:ac_invariance}, we obtain $\pstate{v} \cdot \trace \vDash \C$.
		For \acPost, assume $\pstate{w} \neq \bot$ and $\preeq{\pstate{v}}{\trace} \vDash \A$.
		As above $\pstate{w} \cdot \trace \vDash \Phi(\psi)$ such that $\pstate{w} \cdot \trace \vDash \A \rightarrow \psi$.
		Since $\pstate{v} \cdot \trace \vDash \A$ by assumption, which implies $\pstate{w} \cdot \trace \vDash \A$ by \rref{lem:ac_invariance}, we obtain $\pstate{w} \cdot \trace \vDash \psi$.

		\item[] \RuleName{send}:
		\newcommand{\tstate}{\pstate{o}}%
		\newcommand{\qstate}{\pstate{u}}%
		Let $\pstate{v} \vDash \fa{\historyVar_0} (\historyVar_0 = \historyVar \cdot \comItem{\ch{}, \rp, \globalTime} \rightarrow \psi(\historyVar_0))$ and $\computation \in \sem{\send{}{}{}}$ with $\pstate{w} \neq \bot$.
		Instantiating $\fa{\historyVar_0}$ with $\trace_0 = \sem{\historyVar \cdot \comItem{\ch{}, \rp, \globalTime}} \pstate{v}$ to $\qstate = \pstate{v} \subs{\historyVar_0}{\trace_0}$,
		we have $\qstate \vDash \historyVar_0 = \historyVar \cdot \comItem{\ch{}, \rp, \globalTime}$.
		Therefore, $\qstate \vDash \psi(\historyVar_0)$.	
		By substitution, $\qstate \subs{\historyVar}{\sem{\historyVar_0} \qstate} \vDash \psi(\historyVar)$.
		Since $\historyVar_0 \not \in \FV(\rp)$,
		we have $\sem{\historyVar \cdot \comItem{\ch{}, \rp, \globalTime}} \qstate = \sem{\historyVar \cdot \comItem{\ch{}, \rp, \globalTime}} \pstate{v}$ by coincidence (\rref{lem:formulaCoincidence}).
		Moreover, $\sem{\historyVar_0} \qstate = \sem{\historyVar \cdot \comItem{\ch{}, \rp, \globalTime}} \qstate$ such that $\qstate \subs{\historyVar}{\sem{\historyVar \cdot \comItem{\ch{}, \rp, \globalTime}} \pstate{v}} \vDash \psi(\historyVar)$,
		which implies $\pstate{v} \subs{\historyVar}{\sem{\historyVar \cdot \comItem{\ch{}, \rp, \globalTime}} \pstate{v}} \vDash \psi(\historyVar)$ by coincidence because $\historyVar_0 \not \in \FV(\psi(\historyVar))$.
		Now, observe that $\trace = \comItem{\historyVar, \ch{}, \sem{\rp} \pstate{v}, \statetime{\pstate{v}}}$ such that $\pstate{v} \cdot \trace = \pstate{v} \subs{\historyVar}{\sem{\historyVar \cdot \comItem{\ch{}, \rp, \globalTime}} \pstate{v}}$.
		Finally, $\pstate{w} \cdot \trace \vDash \psi(\historyVar)$ because $\pstate{w} = \pstate{v}$.

		Conversly, let $\pstate{v} \vDash [ \send{}{}{} ] \psi(\historyVar)$.
		To prove the quantifier let $\trace_0 \in \traces$ and assume $\pstate{v} \subs{\historyVar_0}{\trace_0} \vDash \historyVar_0 = \historyVar \cdot \comItem{\ch{}, \rp, \globalTime}$.
		Now, observe that $(\pstate{v}, \trace, \pstate{v}) \in \sem{\send{}{}{}}$ with $\trace = \comItem{\historyVar, \ch{}, \sem{\rp} \pstate{v}, \statetime{\pstate{v}}}$ such that $\pstate{v} \cdot \trace \vDash \psi(\historyVar)$.
		Since $\pstate{v} \cdot \trace = \pstate{v} \subs{\historyVar}{\sem{\historyVar \cdot \comItem{\ch{}, \rp, \globalTime}} \pstate{v}}$ 
		and $\sem{\historyVar \cdot \comItem{\ch{}, \rp, \globalTime}} \pstate{v} = \sem{\historyVar_0} \pstate{v} = \trace_0$,
		we obtain $\pstate{v} \subs{\historyVar_0}{\trace_0} \vDash \psi(\historyVar_0)$ by substitution.

		\item[] \RuleName{comDual}:
		First, observe that $\sem{\receive{}{}{}} = \sem{x \ceq *} \continuation \sem{\send{}{}{x}}$.

		Now, let $\pstate{v} \vDash [ \receive{}{}{} ] \ac \psi$ and \wlossg $(\pstate{v}, \epsilon, \pstate{u}) \in \sem{x \ceq *}$ with $\pstate{u} \neq \bot$.
		Futher, let $(\pstate{u}, \trace, \pstate{w}) \in \sem{\send{}{}{x}}$.
		Then $\computation \in \sem{x \ceq *} \continuation \sem{\send{}{}{x}} = \sem{\receive{}{}{}}$.
		For \acCommit, assume $\proppre{\pstate{u}}{\trace} \vDash \A$.
		Since $[ \receive{}{}{} ] \ac \psi$ is well-formed,
		we have $(\FV(\A) \cup \FV(\C)) \cap \BV(\receive{}{}{}) \subseteq \TVar$.
		Since $x \in \BV(\receive{}{}{})$ and $\BV(x \ceq *) = \{x\}$, we have $\pstate{v} = \pstate{u}$ on $\FV(\A) \cup \FV(\C)$.
		By coincide (\rref{lem:formulaCoincidence}),
		we obtain $\proppre{\pstate{v}}{\trace} \vDash \A$ such that $\pstate{v} \cdot \trace \vDash \C$ by premise, which implies $\pstate{u} \cdot \trace \vDash \C$ by coincidence again.
		For \acPost, assume $\pstate{w} \neq \bot$ and $\preeq{\pstate{u}}{\trace} \vDash \A$.
		Then $\preeq{\pstate{v}}{\trace} \vDash \A$ as above.
		By premise, $\pstate{w} \cdot \trace \vDash \psi$.

		Conversely, let $\pstate{v} \vDash [ x \ceq * ] [ \send{}{}{x} ] \ac \psi$ and $\computation \in \sem{\receive{}{}{}}$.
		Then $(\pstate{v}, \epsilon, \pstate{u}) \in \sem{x \ceq *}$ and $(\pstate{u}, \trace, \pstate{w}) \in \sem{\send{}{}{x}}$ exist.
		By premise, we obtain $\pstate{u} \vDash \phi$,
		where $\phi \equiv [ \send{}{}{x} ] \ac \psi$.
		Now, we have to prove $\pstate{v} \vDash [ \receive{}{}{} ] \ac \psi$.
		Therfore, assume $\proppre{\pstate{v}}{\trace} \vDash \A$ for \acCommit.
		As in the converse direction, 
		$\proppre{\pstate{u}}{\trace} \vDash \A$ sucht that $\pstate{u} \cdot \trace \vDash \C$ by $\pstate{u} \vDash \phi$, 
		which again implies $\pstate{v} \cdot \trace \vDash \C$.
		For \acPost, assume $\pstate{w} \neq \bot$ and $\preeq{\pstate{v}}{\trace}$.
		Finally, $\pstate{w} \cdot \trace \vDash \psi$ by $\pstate{u} \vDash \phi$.

		\item[] \RuleName{acMono}:
		Let $\pstate{v} \vDash [ \alpha ] \acj{1} \psi$ and $\computation \in \sem{\alpha}$. 
		For \acCommit, assume $\proppre{\pstate{v}}{\trace} \vDash \A_2$.
		Then $\proppre{\pstate{v}}{\trace} \vDash \A_1$, which implies $\pstate{v} \cdot \trace \vDash \C_1$ by assumption. 
		Finally, $\pstate{v} \cdot \trace \vDash \C_2$ by premise $\C_1 \rightarrow \C_2$.
		For \acPost, assume $\pstate{w} \neq \bot$ and $\preeq{\pstate{v}}{\trace} \vDash \A_2$. 
		Then $\preeq{\pstate{v}}{\trace} \vDash \A_1$ by premise $\A_2 \rightarrow \A_1$. 
		Thus, $\pstate{w} \cdot \trace \vDash \psi_1$ by assumption. 
		Finally, $\pstate{w} \cdot \trace \vDash \psi_2$ by premise $\psi_1 \rightarrow \psi_2$.

		\item[] \RuleName{acG}:
		Let the premise $\C \wedge \psi$ be valid.
		To prove validity of $[ \alpha ] \ac \psi$, let $\pstate{v}$ be an arbitrary state and $\computation \in \sem{\alpha}$.
		Condition \acCommit holds since $\pstate{v} \cdot \trace \vDash \C$ by validity of $\C$.
		For \acPost, assume $\pstate{w} \neq \bot$.
		Then $\pstate{w} \cdot \trace \vDash \psi$ since $\psi$ is valid.

		\item[] \RuleName{gtime}:
		Let $\odeName \equiv \evolution{x' = \rp}{\chi}$.
		Now, $\computation \in \sem{\evolution{\globalTime' = 1, \odeName}{non}}$ iff a solution $\odeSolution : [0, \duration] \rightarrow \states$ from $\pstate{v}$ to $\pstate{w}$ exists such that $\odeSolution(\zeta) \vDash \globalTime' = 1 \wedge \globalTime' = 1 \wedge x' = \rp$
		and $\pstate{v} = \odeSolution(\zeta)$ on $\{x, \globalTime\}^\complement$ for $\zeta \in [0, \duration]$
		iff, since $\globalTime' = 1 \wedge \globalTime' = 1 \leftrightarrow \globalTime' = 1$ at all $\odeSolution(\zeta)$, we have $\computation \in \sem{\odeName}$.
		Thus, $\pstate{v} \vDash [ \evolution{\globalTime' = 1, \odeName}{non} ] \psi$ iff $\pstate{v} \vDash [ \odeName ] \psi$.
		\qedhere
	\end{itemize}
\end{proof}

\begin{corollary} \label{cor:derived}
	The axioms and rules in \rref{fig:derived} derive syntactically.
	Additionally, we derive the following rule \RuleName{acInvariant} in preparation of the proof of rule \RuleName{acLoop}:
	\vspace*{-1em}
	\begin{prooftree}
		\Axiom{$\inv \vdash [ \alpha ] \ac \psi$}	

		\RuleNameLeft{acInvariant}
		\UnaryInf{$\inv \vdash [ \repetition{\alpha} ] \ac \psi$}
	\end{prooftree}
\end{corollary}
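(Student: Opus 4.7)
The plan is to exhibit an explicit sequent derivation of each rule in Figure~\ref{fig:derived} from the axioms and rules of Figure~\ref{fig:calculus}, together with standard sequent machinery (propositional and quantifier rules and the decision procedures \RuleName{real} and \RuleName{PA}). Since \dLCHP conservatively extends \dL (Proposition~\ref{prop:conservative}), the entire \dL sequent calculus is available inside the dynamic fragment, and the trace-algebra laws of Figure~\ref{fig:traceAlgebra} handle the trace-term simplifications that arise along the way.

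I would first dispatch the structurally simple rules. Rule \RuleName{if} unfolds the macro $\ifstat{\varphi}{\{\alpha\}} \equiv (\test{\varphi}\seq\alpha)\cup\test{\neg\varphi}$ and proceeds by \RuleName{acChoice}, \RuleName{acComposition}, \RuleName{acNoCom}, and \RuleName{test}, leaving exactly the two premises. Rule \RuleName{acSendRight} peels the ac-box by \RuleName{acCom} into $\C \wedge (\A \rightarrow [\send{}{}{}](\C \wedge (\A \rightarrow \psi)))$, then executes the primitive via axiom \RuleName{send}; splitting the resulting conjunctions yields the three premises (initial commitment, post-send commitment, and assumption-guarded postcondition, all rephrased via the fresh recorder $\historyVar_0$). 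Rule \RuleName{acParCompRight} applies \RuleName{acDropComp} twice---using commutativity of $\parOp$ from Proposition~\ref{prop:parAssocComm}---to lift each per-component ac-box to the parallel composition, combines the results by \RuleName{acBoxesDist}, and finally weakens the joined assumption $\A_1\wedge\A_2$ down to $\A$ by axiom \RuleName{K} with the compositionality premise $\vdash\K$.

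The loop rules proceed via axiom \RuleName{acInduction}. For \RuleName{acInvariant} I would expand $[\repetition{\alpha}]\ac\psi$: the base case $[\alpha^0]\ac\psi$ reduces to $\C \wedge (\A \rightarrow \psi)$ by \RuleName{acNoCom} and \RuleName{test} (as per the footnote of \RuleName{acIteration}), while the induction step $[\repetition{\alpha}]\acpair{\A,\true}(\psi \rightarrow [\alpha]\ac\psi)$ is discharged from the premise $\inv \vdash [\alpha]\ac\psi$ via \RuleName{acG} applied to the implication that the invariant carries across iterations. Rule \RuleName{acLoop} then follows by instantiating \RuleName{acInvariant} with the carried invariant $\C \wedge \inv$: premise~1 establishes $\C\wedge\inv$ initially, premise~2 supplies the preservation step, and premise~3 weakens the loop postcondition from $\inv$ to $\psi$ via \RuleName{acMono}.

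The hard case is rule \RuleName{CG}, since its premise and conclusion use syntactically different programs $\alpha(\historyVar_0)$ and $\alpha(\historyVar)$, and no single axiom of Figure~\ref{fig:calculus} directly relates ac-boxes of renamed programs. My plan is to invoke axiom \RuleName{send}: starting from the premise $\Gamma, H_0 \vdash [\alpha(\historyVar_0)]\psi$, $\forall$-introduction on the fresh $\historyVar_0$ (not occurring in $\Gamma$ or $\psi$) yields $\Gamma \vdash \fa{\historyVar_0}(H_0 \rightarrow [\alpha(\historyVar_0)]\psi)$, which by axiom \RuleName{send} is equivalent to $\Gamma \vdash [\send{\ch{}}{\historyVar}{\rp}][\alpha(\historyVar)]\psi$, noting that $[\alpha(\historyVar)]\psi \synsubs{\historyVar}{\historyVar_0} = [\alpha(\historyVar_0)]\psi$ because $\historyVar \notin \FV(\psi)$. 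The remaining step is to collapse the ghost send: since $\psi$ does not mention $\historyVar$ and the send only extends $\historyVar$, the equivalence $[\alpha(\historyVar)]\psi \leftrightarrow [\send{\ch{}}{\historyVar}{\rp}][\alpha(\historyVar)]\psi$ follows from the coincidence property (Lemma~\ref{lem:formulaCoincidence}) applied to the postcondition. Justifying this transparency of the ghost event entirely within the calculus, rather than merely semantically, is the main technical crux of the corollary.
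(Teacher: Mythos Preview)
Your derivations of \RuleName{if}, \RuleName{acSendRight}, \RuleName{acParCompRight}, \RuleName{acInvariant}, and \RuleName{acLoop} follow the paper's approach essentially step for step (the paper in fact omits \RuleName{if}, and in \RuleName{acParCompRight} inserts one \RuleName{acMono} step to align both assumptions to $\A_1\wedge\A_2$ before applying \RuleName{acBoxesDist}, which you skipped but is routine).

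The genuine gap is \RuleName{CG}. You correctly reach $\Gamma\vdash[\send{\ch{}}{\historyVar}{\rp}]\,\Phi(\historyVar)$ with $\Phi(\historyVar)\equiv[\alpha(\historyVar)]\psi$, and you correctly diagnose that collapsing the ghost send is the crux---but then you appeal to the coincidence lemma, which is a semantic fact, not a calculus derivation, and you explicitly leave the syntactic version open. The paper does \emph{not} try to prove a transparency equivalence $[\alpha(\historyVar)]\psi \leftrightarrow [\send{\ch{}}{\historyVar}{\rp}][\alpha(\historyVar)]\psi$ at all. Instead it uses a Cut with the ghost-send formula $[\send{\ch{}}{\historyVar}{\rp}]\,\Phi(\historyVar)$ so that it appears on \emph{both} sides of the sequent. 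In the succedent branch, axiom \RuleName{send} plus $\forall$R and $\rightarrow$R reduce directly to the rule's premise $\Gamma, H_0 \vdash \Phi(\historyVar_0),\Delta$. In the antecedent branch, axiom \RuleName{send} (now unfolding a hypothesis) yields $\forall\historyVar_0\,(H_0 \rightarrow \Phi(\historyVar_0))$ on the left; $\forall$L instantiates $\historyVar_0$ with $\historyVar\cdot\comItem{\ch{},\rp,\globalTime}$, the implication's antecedent becomes the reflexive equality $\historyVar\cdot\te = \historyVar\cdot\te$, and the consequent yields $\Phi(\historyVar)$ on the left, closing by \RuleName{Id} against the original goal. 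The idea you were missing is that the \RuleName{send} equivalence can be exploited in the \emph{antecedent} to eliminate the ghost, not just in the succedent to introduce it---so no separate transparency principle is needed.
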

\begin{proof}[\rref{cor:derived}]
	We prove the axioms and rules by derivation in our calculus.
	For rule \RuleName{CG}, see \rref{fig:proof_CG},
	for rule \RuleName{acParCompRight}, see \rref{fig:proof_acParComp}
	for rule \RuleName{acInvariant}, see \rref{fig:proof_acInvariant},
	for rule \RuleName{acLoop}, see \rref{fig:proof_acLoop}, and
	for rule \RuleName{acSendRight}, see \rref{fig:proof_acSend}.
	\qedhere
\end{proof}

\begin{figure}
	\begin{prooftree}
		\Axiom{$*$}

		\RuleNameLeft{equal}
		\UnaryInf{$\vdash \historyVar \cdot \te = \historyVar \cdot \te, \Phi(\historyVar)$}

		\Axiom{$*$}

		\RuleNameRight{Id}
		\UnaryInf{$\Phi(\historyVar) \vdash \Phi(\historyVar)$}

		\RuleNameLeft{implL}
		\BinaryInf{$\historyVar \cdot \te = \historyVar \cdot \te \rightarrow \Phi(\historyVar) \vdash \Phi(\historyVar)$}

		\RuleNameLeft{forallL}
		\UnaryInf{$\fa{\historyVar_0} ( \historyVar_0 = \historyVar \cdot \te \rightarrow \Phi(\historyVar_0)) \vdash \Phi(\historyVar) $}

		\RuleNameLeft{send}
		\UnaryInf{$[ \send{}{}{} ] \Phi(\historyVar) \vdash \Phi(\historyVar)$}

		\RuleNameLeft{WL, WR}
		\UnaryInf{$\Gamma, [ \send{}{}{} ] \Phi(\historyVar) \vdash \Phi(\historyVar), \Delta$}

		\Axiom{$\Gamma, \historyVar_0 = \historyVar \cdot \te \vdash \Phi(\historyVar_0), \Delta$}

		\RuleNameRight{implR}
		\UnaryInf{$\Gamma \vdash \historyVar_0 = \historyVar \cdot \te \rightarrow \Phi(\historyVar_0), \Delta$}

		\RuleNameRight{forallR}
		\UnaryInf{$\Gamma \vdash \fa{\historyVar_0} (\historyVar_0 = \historyVar \cdot \te \rightarrow \Phi(\historyVar_0)), \Delta$}

		\RuleNameRight{send}
		\UnaryInf{$\Gamma \vdash [ \send{}{}{} ] \Phi(\historyVar), \Delta$}

		\RuleNameRight{WR}
		\UnaryInf{$\Gamma \vdash [ \send{}{}{} ] \Phi(\historyVar), \Phi(\historyVar), \Delta$}

		\RuleNameRight{Cut}
		\BinaryInf{$\Gamma \vdash \Phi(\historyVar), \Delta$}
	\end{prooftree}
	\vspace*{-1em}
	\caption{Derivation of rule \RuleName{CG}. In the proof, $\te$ abbreviates $\comItem{\ch{}, \rp, \globalTime}$ and $\Phi(\historyVar)$ abbreviates $[ \alpha(\historyVar) ] \psi$.}
	\label{fig:proof_CG}
\end{figure}

\begin{figure}[ht!]
	\vspace*{-3em}
	\begin{small}
		\begin{prooftree}
			\Axiom{$\vdash \K$}

			\RuleNameLeft{forallR}
			\UnaryInf{$\vdash \universal \K$}

			\RuleNameLeft{WL, WR}
			\UnaryInf{$\Gamma \vdash \universal \K, \Delta$}

			\Axiom{$\triangleleft$}

			\RuleNameRight{acDropComp}
			\Axiom{$\Gamma \vdash [ \alpha_j ] \acpair{\A_j, \C_j} \psi_j, \Delta$ \sidecondition{for $j = 1, 2$}}

			\RuleNameRight{acDropComp}
			\UnaryInf{$\Gamma \vdash [ \alpha_1 \parOp \alpha_2 ] \acpair{\A_j, \C_j} \psi_j, \Delta$ \sidecondition{for $j = 1, 2$}}

			\RuleNameRight{acMono}
			\BinaryInf{$\Gamma \vdash [ \alpha_1 \parOp \alpha_2 ] \acpair{\A_1 \wedge \A_2, \C_j} \psi_j, \Delta$ \sidecondition{for $j = 1, 2$}}

			\RuleNameRight{acBoxesDist}
			\UnaryInf{$\Gamma \vdash [ \alpha_1 \parOp \alpha_2 ] \acpair{\A_1 \wedge \A_2, \C_1 \wedge \C_2} (\psi_1 \wedge \psi_2), \Delta$}

			\RuleNameRight{andR}
			\BinaryInf{$\Gamma \vdash \universal \K \wedge [ \alpha_1 \parOp \alpha_2 ] \acpair{\A_1 \wedge \A_2, \C_1 \wedge \C_2} (\psi_1 \wedge \psi_2), \Delta$}

			\RuleNameRight{K}
			\UnaryInf{$\Gamma \vdash [ \alpha_1 \parOp \alpha_2 ] \acpair{\A, \C_1 \wedge \C_2} (\psi_1 \wedge \psi_2), \Delta$}
		\end{prooftree}
	\end{small}
	\vspace*{-1em}
	\caption{
		Derivation of rule \RuleName{acParCompRight}.
		For $j = 1, 2$, 
		the open premise $\A_1 \wedge \A_2 \rightarrow \A_j$ marked by $\triangleleft$ obviously holds.
		Axiom \RuleName{acDropComp} can be applied since $\alpha_{3-j}$ does not interfere with $[ \alpha_j ] \acj{j} \psi_j$ by premise.
	}
	\label{fig:proof_acParComp}
\end{figure}

\begin{figure}[ht!]
	\vspace{-1em}
	\hspace{-2.8cm}\begin{tikzpicture}
		\node (lefttree)
		{\vbox{
			\begin{small}
				\begin{prooftree}
					\Axiom{$*$}
	
					\RuleNameLeft{Id}
					\UnaryInf{$I, \C, [ \alpha ] \ac I \vdash \C$}

					\RuleNameLeft{andL}
					\UnaryInf{$I, \C \wedge [ \alpha ] \ac I \vdash \C$}

					\RuleNameLeft{acWeak}
					\UnaryInf{$I, [ \alpha ] \ac I \vdash \C$}

					\Axiom{$I \vdash [ \alpha ] \ac I$}

					\RuleNameLeft{Cut}
					\BinaryInf{$I \vdash \C$}
				\end{prooftree}
			\end{small}
		}};

		\node (righttree) [right=-7.5cm of lefttree]
		{\vbox{
			\begin{small}
				\begin{prooftree}
					\Axiom{$*$}
	
					\RuleNameRight{Id}
					\UnaryInf{$I, \A, \true \vdash I$}

					\RuleNameRight{implR}
					\UnaryInf{$I, \A \vdash \true \rightarrow I$}

					\RuleNameRight{test}
					\UnaryInf{$I, \A \vdash [ \alpha^0 ] I$}

					\RuleNameRight{implR}
					\UnaryInf{$I \vdash \A \rightarrow [ \alpha^0 ] I$}
				\end{prooftree}
			\end{small}
		}};
		
		\node (proof) [below right= -.5cm and -9.5cm of lefttree] 
		{\vbox{\begin{small}
			\begin{prooftree}
				\Axiom{$$}
			
				\Axiom{$$}
			
				\RuleNameLeft{andR}
				\BinaryInf{$I \vdash \C \wedge (\A \rightarrow [ \alpha^0 ] I)$}
			
				\RuleNameLeft{acNoCom}
				\UnaryInf{$I \vdash [ \alpha^0 ] \ac I$}
			
				\Axiom{$*$}
			
				\UnaryInf{$\vdash \true$}
			
				\Axiom{$I \vdash [ \alpha ] \ac I$}
			
				\RuleNameRight{implR}
				\UnaryInf{$\vdash I \rightarrow [ \alpha ] \ac I$}
			
				\RuleNameRight{andR}
				\BinaryInf{$\vdash \true \wedge (I \rightarrow [ \alpha ] \ac I)$}
			
				\RuleNameRight{acG}
				\UnaryInf{$I \vdash [ \repetition{\alpha} ] \acpair{\A, \true} (I \rightarrow [ \alpha ] \ac I)$}
			
				\RuleNameRight{andR}
				\BinaryInf{$I \vdash [ \alpha^0 ] \ac I \wedge [ \repetition{\alpha} ] \acpair{\A, \true} (I \rightarrow [ \alpha ] \ac I)$}
			
				\RuleNameRight{acInduction}
				\UnaryInf{$I \vdash [ \repetition{\alpha} ] \ac I$}
			\end{prooftree}
		\end{small}}};
		
		\coordinate (arrowendleft) at (lefttree.south);
		\draw[-stealth] (proof.north) ++ (-3.3,-1.3) to [in=270, out=90] ([shift={(.5, .25)}]arrowendleft);

		\coordinate (arrowendright) at (righttree.south);
		\draw[-stealth] (proof.north) ++ (-1.2,-1.3) to [in=270, out=90] ([shift={(-0.95, .35)}]arrowendright);
	\end{tikzpicture}
	\vspace*{-2.5em}
	\caption{Derivation of rule \RuleName{acInvariant}}
	\label{fig:proof_acInvariant}
\end{figure}

\begin{figure}[ht!]
	\begin{small}
		\begin{prooftree}
			\Axiom{$\C, \inv \vdash [ \alpha ] \ac \inv$}
	
			\RuleNameLeft{andL}
			\UnaryInf{$\inv_\C \vdash [ \alpha ] \ac \inv$}
		
			\RuleNameLeft{acMono}
			\UnaryInf{$\inv_\C \vdash [ \alpha ] \ac \inv_\C$}

			\RuleNameLeft{acInvariant}
			\UnaryInf{$\inv_\C \vdash [ \repetition{\alpha} ] \ac \inv_\C$}
		
			\RuleNameLeft{implR}
			\UnaryInf{$\Gamma \vdash \inv_\C \rightarrow [ \repetition{\alpha} ] \ac \inv_\C, \Delta$}
		
			\Axiom{$\Gamma \vdash \inv_\C, \Delta$}
		
			\Axiom{$*$}

			\RuleNameLeft{Id}
			\UnaryInf{$\inv, \C \vdash \C$}
			
			\RuleNameLeft{andL}
			\UnaryInf{$\inv_\C \vdash \C$}

			\Axiom{$\C, \A, \inv \vdash \psi$}

			\RuleNameRight{andL}
			\UnaryInf{$\inv_\C, \A \vdash \psi$}

			\RuleNameRight{implR}
			\UnaryInf{$\inv_\C \vdash \A \rightarrow \psi$}

			\RuleNameRight{andR}
			\BinaryInf{$\inv_\C \vdash (\C \wedge (\A \rightarrow \psi))$}
		
			\RuleNameRight{acMono}
			\UnaryInf{$[ \repetition{\alpha} ] \ac \inv_\C \vdash [ \repetition{\alpha} ] \ac (\C \wedge (\A \rightarrow \psi))$}
		
			\RuleNameRight{acWeak}
			\UnaryInf{$[ \repetition{\alpha} ] \ac \inv_\C \vdash [ \repetition{\alpha} ] \ac \psi$}
		
			\RuleNameRight{implL}
			\BinaryInf{$\Gamma, \inv_\C \rightarrow [ \repetition{\alpha} ] \ac \inv_\C \vdash [ \repetition{\alpha} ] \ac \psi, \Delta$}
		
			\RuleNameRight{Cut}
			\BinaryInf{$\Gamma \vdash [ \repetition{\alpha} ] \ac \psi, \Delta$}
		\end{prooftree}
	\end{small}
	\vspace*{-1em}
	\caption{Derivation of rule \RuleName{acLoop}.
	In the proof,
	we silently apply weakening several times.
	The formula $\inv_\C \equiv \inv \wedge \C$ combines the invariant with the commitment $\C$. 
	Moreover, $\Psi$ is short for $\C \wedge (\A \rightarrow \psi)$.}
	\label{fig:proof_acLoop}
\end{figure}

\begin{figure}[ht!]
	\begin{small}
		\begin{prooftree}

			\Axiom{$\Gamma \vdash \C(\historyVar), \Delta$}

			\Axiom{$\Gamma, \A(\historyVar), H_0 \vdash \C(\historyVar_0), \Delta$}

			\Axiom{$\Gamma, \A(\historyVar), H_0, \A(\historyVar_0) \vdash \psi(\historyVar_0), \Delta$}

			\RuleNameRight{implR}
			\UnaryInf{$\Gamma, \A(\historyVar), H_0 \vdash \A(\historyVar_0) \rightarrow \psi(\historyVar_0), \Delta$}

			\RuleNameRight{andR}
			\BinaryInf{$\Gamma, \A(\historyVar), H_0 \vdash \C(\historyVar_0) \wedge (\A(\historyVar_0) \rightarrow \psi(\historyVar_0)), \Delta$}

			\RuleNameRight{implR}
			\UnaryInf{$\Gamma, \A(\historyVar) \vdash H_0 \rightarrow 
			 (\C(\historyVar_0) \wedge (\A(\historyVar_0) \rightarrow \psi(\historyVar_0))), \Delta$}

			\RuleNameRight{forallR}
			\UnaryInf{$\Gamma, \A(\historyVar) \vdash \fa{\historyVar_0} ( H_0 \rightarrow 
			 (\C(\historyVar_0) \wedge (\A(\historyVar_0) \rightarrow \psi(\historyVar_0))) ), \Delta$}

			\RuleNameRight{send}
			\UnaryInf{$\Gamma, \A(\historyVar) \vdash [ \send{}{}{} ]  (\C(\historyVar) \wedge (\A(\historyVar) \rightarrow \psi(\historyVar))), \Delta$}

			\RuleNameRight{implR}
			\UnaryInf{$\Gamma \vdash \A(\historyVar) \rightarrow [ \send{}{}{} ]  (\C(\historyVar) \wedge (\A(\historyVar) \rightarrow \psi(\historyVar))), \Delta$}

			\RuleNameRight{andR}
			\BinaryInf{$\Gamma \vdash \C(\historyVar) \wedge (\A(\historyVar) \rightarrow [ \send{}{}{} ]  (\C(\historyVar) \wedge (\A(\historyVar) \rightarrow \psi(\historyVar)))), \Delta$}

			\RuleNameRight{acCom}
			\UnaryInf{$\Gamma \vdash [ \send{}{}{} ] \acpair{\A(\historyVar), \C(\historyVar) } \psi(\historyVar), \Delta$}
		\end{prooftree}
	\end{small}
	\vspace*{-1em}
	\caption{Derivation of rule \RuleName{acSendRight}. In the proof, $H_0$ is short for $\historyVar_0 = \historyVar \cdot \comItem{\ch{}, \rp, \globalTime}$.}
	\label{fig:proof_acSend}
\end{figure}

\section{Details of the Example}

We continue the proof of the example from \rref{sec:example} and provide the remaining derivations:

\let\oldparameters\parameters
\renewcommand{\parameters}[1]{\oldparameters{\text{\scriptsize $#1$}}}

\begin{figure}[ht!]
	\begin{small}
		\begin{prooftree}[shape=justified]
			\Axiom{$\triangleright$\rref{fig:leader_vel_continuous}}
		
			\UnaryInf{$\lCommit, \invariant{l}, \chi_{vel} \parameters{v_0} \vdash [ \Plant_l \parameters{v_0} ] \invariant{l} \parameters{v_0, x_l, \waitvar, \globalTime, \historyVar}$}

			\RuleNameRight{projNotIn}
			\UnaryInf{$\lCommit, \invariant{l}, \chi_{vel} \parameters{v_0} \vdash [ \Plant_l \parameters{v_0} ] \invariant{l} \parameters{v_0, x_l, \waitvar, \globalTime, \historyVar \cdot \comItem{\ch{vel}, v_0, \globalTime}}$}

			\SideAx$\triangleright \RuleName{Id} \quad \triangleright\!\text{\rref{fig:leader_commit}}$
			\RuleNameRight{subsR}
			\UnaryInf{$\begin{aligned}
				& \lCommit, \invariant{l}, \chi_{vel} \parameters{v_0}, \historyVar_\ch{vel} = \historyVar \cdot \comItem{\ch{vel}, v_0, \globalTime} \\
				&\qquad \vdash [ \Plant_l \parameters{v_0} ] \invariant{l} \parameters{v_0, x_l, \waitvar, \globalTime, \historyVar_\ch{vel}}
			\end{aligned}$}
		
			\SideAx$\triangleright \text{\rref{fig:leader_vel_lose}}$
			\RuleNameRight{acSendRight}
			\UnaryInf{$\lCommit, \invariant{l}, \chi_{vel} \parameters{v_0} \vdash [ \send{\ch{vel}}{\historyVar}{v_0} ] \acpair{\true, \lCommit} [ \Plant_l \parameters{v_0} ] \invariant{l} \parameters{v_0, x_l, \waitvar, \globalTime, \historyVar}$}
		
			\RuleNameRight{acChoice}
			\UnaryInf{$\lCommit, \invariant{l}, \chi_{vel} \parameters{v_0} \vdash [ \send{\ch{vel}}{\historyVar}{v_0} \cup \skipProg ] \acpair{\true, \lCommit} [ \Plant_l \parameters{v_0} ] \invariant{l} \parameters{v_0, x_l, \waitvar, \globalTime, \historyVar}$}
		
			\RuleNameRight{test, implR}
			\UnaryInf{$\lCommit, \invariant{l} \vdash [ \test{\chi_{vel} \parameters{v_0}} ] [ \send{\ch{vel}}{\historyVar}{v_0} \cup \skipProg ] \acpair{\true, \lCommit} [ \Plant_l \parameters{v_0} ] \invariant{l} \parameters{v_0, x_l, \waitvar, \globalTime, \historyVar}$}
		
			\RuleNameRight{nondetAssign, forallR}
			\UnaryInf{$\lCommit, \invariant{l} \vdash [ v_l \ceq * ] [ \test{\chi_{vel} \parameters{v_l}} ] [ \send{\ch{vel}}{\historyVar}{v_l} \cup \skipProg ] \acpair{\true, \lCommit} [ \Plant_l \parameters{v_l} ] \invariant{l} \parameters{v_l, x_l, \waitvar, \globalTime, \historyVar}$}
		
			\SideAx$\triangleright\RuleName{Id}\!$
			\RuleNameRight{composition}
			\UnaryInf{$\lCommit, \invariant{l} \vdash [ v_l \ceq * \seq \test{\chi_{vel} \parameters{v_l}} ] [ \send{\ch{vel}}{\historyVar}{v_l} \cup \skipProg ] \acpair{\true, \lCommit} [ \Plant_l \parameters{v_l} ] \invariant{l} \parameters{v_l, x_l, \waitvar, \globalTime, \historyVar}$}
		
			\RuleNameRight{andR}
			\UnaryInf{$\lCommit, \invariant{l} \vdash \lCommit \wedge [ v_l \ceq * \seq \test{\chi_{vel} \parameters{v_l}} ] [ \send{\ch{vel}}{\historyVar}{v_l} \cup \skipProg ] \acpair{\true, \lCommit} [ \Plant_l \parameters{v_l} ] \invariant{l} \parameters{v_l, x_l, \waitvar, \globalTime, \historyVar}$}
		
			\RuleNameRight{acNoCom}
			\UnaryInf{$\lCommit, \invariant{l} \vdash [ v_l \ceq * \seq \test{\chi_{vel} \parameters{v_l}} ] \acpair{\true, \lCommit} [ \send{\ch{vel}}{\historyVar}{v_l} \cup \skipProg ] \acpair{\true, \lCommit} [ \Plant_l \parameters{v_l} ] \invariant{l} \parameters{v_l, x_l, \waitvar, \globalTime, \historyVar}$}
		
			\SideAx$\triangleright \text{\rref{fig:leader_pos}}$
			\RuleNameRight{acComposition}
			\UnaryInf{$\lCommit, \invariant{l} \vdash [ \ctrlNotify(\historyVar) ] \acpair{\true, \lCommit} [ \Plant_l \parameters{v_l} ] \invariant{l} \parameters{v_l, x_l, \waitvar, \globalTime, \historyVar}$}
		
			\UnaryText{Execution by \RuleName{acComposition}, \RuleName{acChoice}, \RuleName{acNoCom}, \RuleName{acWeak}, and \RuleName{andR}}

			\SideAx$\triangleright_1\RuleName{TA, real}\;\;\triangleright_2\RuleName{real}$
			\UnaryInf{$\lCommit, \invariant{l} \vdash [ (\ctrlNotify(\historyVar) \cup \ctrlUpdate(\historyVar)) \seq \Plant_l \parameters{v_l} ] \acpair{\true, \lCommit} \invariant{l} \parameters{v_l, x_l, \waitvar, \globalTime, \historyVar}$}

			\RuleNameRight{acLoop}
			\UnaryInf{$\Gamma \vdash [ \progtt{leader}(\historyVar) ] \acpair{\true, \lCommit} \psi_l$}
		\end{prooftree}
	\end{small}
	\vspace*{-1em}
	\caption{
		Continues the derivation from \rref{fig:pardecomp}.
		With $\Gamma$ we abbreviate the list 
		$\varphi, \historyVar_1 = \historyVar_0 \cdot \comItem{\ch{vel}, 0, \globalTime}, \historyVar = \historyVar_1 \cdot \comItem{\ch{pos}, x_l, \globalTime}$ 
		of formulas.
		The induction uses invariant 
		$\invariant{l} \equiv 
		\val{\historyVar\downarrow \ch{pos}} \le x_l \wedge 
		v_l \ge 0 \wedge 
		w = \timespan{\globalTime}{\historyVar\downarrow \ch{pos}} \le \epsilon$.
		The induction base $\Gamma \vdash \invariant{l}$ ($\triangleright_1$) closes by trace algebra \RuleName{TA} and real arithmetic \RuleName{real}.
		Postcondition $\invariant{l} \vdash \psi_l$ ($\triangleright_2$) holds by \RuleName{real}.
	}
	\label{fig:leader_vel_send}
\end{figure}

\begin{figure}[ht!]
	\vspace*{-3em}
	\begin{small}
		\begin{prooftree}[shape=justified, JustifiedScoresWidth=.85\textwidth]
			\Axiom{$*$}
		
			\RuleNameRight{real}
			\UnaryInf{$\begin{aligned}
				& \invariant{l}, \chi_{vel} \parameters{v_0}, t \ge 0, \waitvar + t \le \epsilon \\
				&\qquad \vdash \waitvar + t = \timespan{\globalTime + t}{\historyVar} \le \epsilon \wedge v_0 \ge 0 \wedge \val{\historyVar} \le x_l + t \cdot v_0
			\end{aligned}$} 

			\RuleNameRight{forallL}
			\UnaryInf{$\invariant{l}, \chi_{vel} \parameters{v_0}, t \ge 0, \mathcal{E} \vdash \invariant{l} \parameters{v_0, x_l + t \cdot v_0, \waitvar + t, \globalTime + t, \historyVar}$}

			\RuleNameRight{implR, WL}
			\UnaryInf{$\lCommit, \invariant{l}, \chi_{vel} \parameters{v_0}, t \ge 0 \vdash \mathcal{E} \rightarrow \invariant{l} \parameters{v_0, x_l + t \cdot v_0, \waitvar + t, \globalTime + t, \historyVar}$}

			\RuleNameRight{forallR}
			\UnaryInf{$\lCommit, \invariant{l}, \chi_{vel} \parameters{v_0} \vdash \fa{t{\ge}0} ( \mathcal{E} \rightarrow \invariant{l} \parameters{v_0, x_l + t \cdot v_0, \waitvar + t, \globalTime + t, \historyVar} )$}

			\RuleNameRight{solution, assign}
			\UnaryInf{$\lCommit, \invariant{l}, \chi_{vel} \parameters{v_0} \vdash [ \evolution{\globalTime' {=} 1, \waitvar' {=} 1, x_l' {=} v_0}{\waitvar \le \epsilon} ] \invariant{l} \parameters{v_0, x_l, \waitvar, \globalTime, \historyVar}$}

			\RuleNameRight{gtime}
			\UnaryInf{$\lCommit, \invariant{l}, \chi_{vel} \parameters{v_0} \vdash [ \Plant_l \parameters{v_0} ] \invariant{l} \parameters{v_0, x_l, \waitvar, \globalTime, \historyVar}$}
		\end{prooftree}
	\end{small}
	\vspace*{-1em}
	\caption{
		Continues the proofs from \rref{fig:leader_vel_send} and \rref{fig:leader_vel_lose}.
		Thereby $\mathcal{E}$ is short for $\fa{\range{0}{\tilde{t}}{t}} \waitvar \le \tilde{t} \le \epsilon$.
	}
	\label{fig:leader_vel_continuous}
\end{figure}

\begin{figure}[ht!]
	\begin{small}
		\begin{prooftree}[shape=justified]
			\Axiom{$*$}

			\SideAx$\triangleright \RuleName{PA}$
			\RuleNameRight{Id}
			\UnaryInf{$\lCommit, \invariant{l}, \chi_{vel} \parameters{v_0} \vdash \range{0}{v_0}{\maxvelo}$}

			\RuleNameRight{valAccessBase}
			\UnaryInf{$\lCommit, \invariant{l}, \chi_{vel} \parameters{v_0} \vdash \range{0}{\val{\historyVar_\ch{vel} \cdot \comItem{\ch{out}, v_0, \globalTime}}}{\maxvelo}$}

			\RuleNameRight{subsR}
			\UnaryInf{$\lCommit, \invariant{l}, \chi_{vel} \parameters{v_0}, \historyVar_\ch{vel} = \historyVar \cdot \comItem{\ch{out}, v_0, \globalTime} \vdash \range{0}{\val{\historyVar_0}}{\maxvelo}$}
		\end{prooftree}
	\end{small}
	\vspace*{-1em}
	\caption{
		Continues the proof from \rref{fig:leader_vel_send}
	}
	\label{fig:leader_commit}
\end{figure}

\begin{figure}[ht!]
	\begin{small}
		\begin{prooftree}[shape=justified, JustifiedScoresWidth=.85\textwidth]
			\Axiom{$\triangleright$ \rref{fig:leader_vel_continuous}}
			
			\SideAx$\triangleright \RuleName{Id}$
			\UnaryInf{$\lCommit, \invariant{l}, \chi_{vel} \parameters{v_0} \vdash [ \Plant_l \parameters{v_0} ] \invariant{l} \parameters{v_0, x_l, \waitvar, \globalTime, \historyVar}$}
	
			\RuleNameRight{andR}
			\UnaryInf{$\lCommit, \invariant{l}, \chi_{vel} \parameters{v_0} \vdash \lCommit \wedge [ \Plant_l \parameters{v_0} ] \invariant{l} \parameters{v_0, x_l, \waitvar, \globalTime, \historyVar}$}
		
			\RuleNameRight{implR, WL}
			\UnaryInf{$\lCommit, \invariant{l}, \chi_{vel} \parameters{v_0} \vdash \true \rightarrow \lCommit \wedge [ \Plant_l \parameters{v_0} ] \invariant{l} \parameters{v_0, x_l, \waitvar, \globalTime, \historyVar}$}

			\SideAx$\triangleright \RuleName{Id}$
			\RuleNameRight{test}
			\UnaryInf{$\lCommit, \invariant{l}, \chi_{vel} \parameters{v_0} \vdash [ \skipProg ] ( \lCommit \wedge [ \Plant_l \parameters{v_0} ] \invariant{l} \parameters{v_0, x_l, \waitvar, \globalTime, \historyVar})$}
		
			\RuleNameRight{andR}
			\UnaryInf{$\lCommit, \invariant{l}, \chi_{vel} \parameters{v_0} \vdash \lCommit \wedge [ \skipProg ] ( \lCommit \wedge [ \Plant_l \parameters{v_0} ] \invariant{l} \parameters{v_0, x_l, \waitvar, \globalTime, \historyVar})$}
		
			\RuleNameRight{acNoCom}
			\UnaryInf{$\lCommit, \invariant{l}, \chi_{vel} \parameters{v_0} \vdash [ \skipProg ] \acpair{\true, \lCommit} [ \Plant_l \parameters{v_0} ] \invariant{l} \parameters{v_0, x_l, \waitvar, \globalTime, \historyVar}$}
		\end{prooftree}
	\end{small}
	\vspace*{-1em}
	\caption{Continues proof from \rref{fig:leader_vel_send}}
	\label{fig:leader_vel_lose}
\end{figure}

\begin{figure}
	\begin{small}
		\begin{prooftree}[shape=justified]
			\Axiom{$*$}

			\SideAx$\triangleright\RuleName{PA}$
			\RuleNameRight{real}
			\UnaryInf{$\invariant{l}, t\ge0, \waitvar + t \le \epsilon 
			\vdash \waitvar + t = \timespan{\globalTime + t}{\historyVar_\ch{pos}} \le \epsilon \wedge v_l \ge 0 \wedge x_l + t \cdot v_l \le x_l$}

			\RuleNameRight{valAccessBase}
			\UnaryInf{$\invariant{l}, \waitvar = 0, t \ge 0, \mathcal{E} \vdash \invariant{l} \parameters{v_l, x_l + t \cdot v_l, \waitvar + t, \globalTime + t, \historyVar \cdot \comItem{\ch{pos}, x_l, \globalTime}}$}

			\RuleNameRight{subsR}
			\UnaryInf{$\invariant{l}, H_\ch{pos}, \waitvar = 0, t \ge 0, \mathcal{E} \vdash \invariant{l} \parameters{v_l, x_l + t \cdot v_l, \waitvar + t, \globalTime + t, \historyVar_\ch{pos}}$}

			\RuleNameRight{implR}
			\UnaryInf{$\invariant{l}, H_\ch{pos}, \waitvar = 0, t \ge 0 \vdash \mathcal{E} \rightarrow \invariant{l} \parameters{v_l, x_l + t \cdot v_l, \waitvar + t, \globalTime + t, \historyVar_\ch{pos}}$}

			\RuleNameRight{forallR}
			\UnaryInf{$\invariant{l}, H_\ch{pos}, \waitvar = 0 \vdash \fa{t{\ge}0} ( \mathcal{E} \rightarrow \invariant{l} \parameters{v_l, x_l + t \cdot v_l, \waitvar + t, \globalTime + t, \historyVar_\ch{pos}} )$}

			\RuleNameRight{solution}
			\UnaryInf{$\invariant{l}, H_\ch{pos}, \waitvar = 0, \vdash [ \evolution{\globalTime' {=} 1, \waitvar' {=} 1, x_l' {=} v_l}{\waitvar \le 0} ] \invariant{l} \parameters{v_l, x_l, \waitvar, \globalTime, \historyVar_\ch{pos}}$}

			\RuleNameRight{gtime}
			\UnaryInf{$\invariant{l}, H_\ch{pos}, \waitvar = 0 \vdash [ \Plant_l ] \invariant{l} \parameters{v_l, x_l, \waitvar, \globalTime, \historyVar_\ch{pos}}$}

			\SideAx$\triangleright \RuleName{Id}$
			\RuleNameRight{assign, WL}
			\UnaryInf{$\lCommit, \invariant{l}, H_\ch{pos} \vdash [ \waitvar \ceq 0 ] [ \Plant_l ] \invariant{l} \parameters{v_l, x_l, \waitvar, \globalTime, \historyVar_\ch{pos}}$}
	
			\RuleNameRight{acSendRight}
			\UnaryInf{$\lCommit, \invariant{l} \vdash [ \send{\ch{pos}}{\historyVar}{x_l}  ] \acpair{\true, \lCommit} [ \waitvar \ceq 0 ] [ \Plant_l ] \invariant{l} \parameters{v_l, x_l, \waitvar, \globalTime, \historyVar_\ch{pos}}$}

			\RuleNameRight{acNoCom, acWeak}
			\UnaryInf{$\lCommit, \invariant{l} \vdash [ \send{\ch{pos}}{\historyVar}{x_l}  ] \acpair{\true, \lCommit} [ \waitvar \ceq 0 ] \acpair{\true, \lCommit} [ \Plant_l ] \invariant{l} \parameters{v_l, x_l, \waitvar, \globalTime, \historyVar_\ch{pos}}$}

			\RuleNameRight{acComposition}
			\UnaryInf{$\lCommit, \invariant{l} \vdash [ \ctrlUpdate(\historyVar) ] \acpair{\true, \lCommit} [ \Plant_l ] \invariant{l} \parameters{v_l, x_l, \waitvar, \globalTime, \historyVar_\ch{pos}}$}
		\end{prooftree}
	\end{small}
	\vspace*{-1em}
	\caption{
		Continues the proof from \rref{fig:leader_vel_send}.
		We abbreviate $\historyVar_\ch{pos} = \historyVar \cdot \comItem{\ch{pos}, x_l, \globalTime}$ by $H_\ch{pos}$.
		Thereby $\mathcal{E}$ is short for $\fa{\range{0}{\tilde{t}}{t}} \waitvar + \tilde{t} \le \epsilon$.
	}
	\label{fig:leader_pos}
\end{figure}

\begin{figure}[ht!]
	\begin{small}
		\begin{prooftree}[shape=justified]
			\Axiom{$*$}
	
			\SideAx$\triangleright$
			\RuleNameRight{real}
			\UnaryInf{$\invariant{f}, d \le \maxdist, t\ge0 \vdash \invariant{f} \parameters{x_f + t \cdot v_f, \globalTime + t}$}

			\RuleNameRight{WL}
			\UnaryInf{$\invariant{f}, \fAssume, H_\ch{vel}, \fAssume \parameters{\historyVar_0}, d \le \maxdist, t \ge 0 \vdash \invariant{f} \parameters{x_f + t \cdot v_f, \globalTime + t}$}

			\RuleNameRight{forallR}
			\UnaryInf{$\invariant{f}, \fAssume, H_\ch{vel}, \fAssume \parameters{\historyVar_0}, d \le \maxdist \vdash \fa{t{\ge}0} \invariant{f} \parameters{x_f + t \cdot v_f, \globalTime + t}$}

			\RuleNameRight{solution}
			\UnaryInf{$\invariant{f}, \fAssume, H_\ch{vel}, \fAssume \parameters{\historyVar_0}, d \le \maxdist \vdash [ \evolution{\globalTime' {=} 1, x_f' {=} v_f}{non} ] \invariant{f} \parameters{x_f, \globalTime}$}

			\RuleNameRight{gtime}
			\UnaryInf{$\invariant{f}, \fAssume, H_\ch{vel}, \fAssume \parameters{\historyVar_0}, d \le \maxdist \vdash [ \evolution{x_f' {=} v_f}{non} ] \invariant{f} \parameters{x_f, \globalTime}$}
		\end{prooftree}
	\end{small}
	\vspace*{-1em}
	\caption{
		Continues the proof from \rref{fig:follower_in_dist_safe}.
		Formula $H_\ch{vel}$ is short for $\historyVar_0 = \historyVar_\ch{vel} \cdot \comItem{\ch{vel}, \tarvelo, \globalTime}$
	}
	\label{fig:follower_in_dist_unsafe}
\end{figure}

\begin{figure}[ht!]
	\begin{small}
		\begin{prooftree}[shape=justified, JustifiedScoresWidth=.87\textwidth]
			\Axiom{$*$}
	
			\SideAx$\triangleright\text{\RuleName{PA}}$
			\RuleNameRight{real}
			\UnaryInf{$\begin{aligned}
				& \invariant{f}, d_0 = \mespos - x_f, \mespos - x_f \le \maxdist, \orange{0}{v_0}{\safevelo{d_0}}, t \ge 0 \vdash \range{0}{v_0}{\safevelo{d_0}} \\
				&\qquad \wedge v_0 \le \maxvelo \wedge x_f + t \cdot v_0 + (\epsilon - (\globalTime + t - \globalTime)) \safevelo{d_0} < m
			\end{aligned}$}

			\SideAx$\triangleright\text{\RuleName{PA}}$
			\RuleNameRight{projIn, timeAccessBase}
			\UnaryInf{$\begin{aligned}
				& \invariant{f}, d_0 = \mespos - x_f, \mespos - x_f \le \maxdist, \orange{0}{v_0}{\safevelo{d_0}}, t \ge 0 \vdash \range{0}{v_0}{\safevelo{d_0}} \\
				&\qquad \wedge v_f \le \maxvelo \wedge x_f + t \cdot v_0 + (\epsilon - \timespan{\globalTime + t}{\historyVar \cdot \comItem{\ch{pos}, m, \globalTime}}) \safevelo{d_0} < m
			\end{aligned}$}

			\RuleNameRight{projIn, valAccessBase}
			\UnaryInf{$\begin{aligned}
				& \invariant{f}, d_0 = \mespos - x_f, \mespos - x_f \le \maxdist, \orange{0}{v_0}{\safevelo{d_0}}, t \ge 0 \\
				&\qquad \vdash \invariant{f} \parameters{v_0, x_f + t \cdot v_f, d_0, \globalTime + t, \historyVar \cdot \comItem{\ch{pos}, m, \globalTime}}
			\end{aligned}$}

			\RuleNameRight{subsR}
			\UnaryInf{$\Gamma_\ch{mes}, \orange{0}{v_0}{\safevelo{d_0}}, t \ge 0 \vdash \invariant{f} \parameters{v_f, x_f + t \cdot v_f, d_0, \globalTime + t, \historyVar_\ch{pos}}$}

			\RuleNameRight{forallR}
			\UnaryInf{$\Gamma_\ch{mes}, \orange{0}{v_0}{\safevelo{d_0}} \vdash \fa{t{\ge}0} \invariant{f} \parameters{v_f, x_f + t \cdot v_f, d_0, \globalTime + t, \historyVar_\ch{pos},}$}

			\RuleNameRight{solution, assign}
			\UnaryInf{$\Gamma_\ch{mes}, \orange{0}{v_0}{\safevelo{d_0}} \vdash [ \evolution{\globalTime' {=} 1, x_f' =  v_0}{non} ] \invariant{f} \parameters{v_f, x_f, d_0, \globalTime, \historyVar_\ch{pos}}$}

			\RuleNameRight{gtime}
			\UnaryInf{$\Gamma_\ch{mes}, \orange{0}{v_0}{\safevelo{d_0}} \vdash [ \evolution{x_f' =  v_0}{non} ] \invariant{f} \parameters{v_f, x_f, d_0, \globalTime, \historyVar_\ch{pos}}$}
	
			\RuleNameRight{test, implR}
			\UnaryInf{$\Gamma_\ch{mes} \vdash [ \test{\orange{0}{v_0}{\safevelo{d_0}}}  ] [ \Plant_f \parameters{v_0} ] \invariant{f} \parameters{v_f, x_f, d_0, \globalTime, \historyVar_\ch{pos}}$}
	
			\RuleNameRight{nondetAssign, forallR}
			\UnaryInf{$\Gamma_\ch{mes} \vdash [ v_f \ceq * ] [ \test{\orange{0}{v_f}{\safevelo{d_0}}}  ] [ \Plant_f ] \invariant{f} \parameters{v_f, x_f, d_0, \globalTime, \historyVar_\ch{pos}}$}
	
			\SideAx$\triangleright \text{ \rref{fig:follower_mes_dist_not_safe}}$
			\RuleNameRight{composition}
			\UnaryInf{$\Gamma_\ch{mes} \vdash [ v_f \ceq * \seq \test{\orange{0}{v_f}{\safevelo{d_0}}}  ] [ \Plant_f ] \invariant{f} \parameters{v_f, x_f, d_0, \globalTime, \historyVar_\ch{pos}}$}
	
			\RuleNameRight{conditional}
			\UnaryInf{$\invariant{f}, H_\ch{pos}, d_0 = \mespos - x_f \vdash [ \progtt{cond} \parameters{d_0} ] [ \Plant_f ] \invariant{f} \parameters{v_f, x_f, d_0, \globalTime, \historyVar_\ch{pos}}$}
	
			\RuleNameRight{assign}
			\UnaryInf{$\invariant{f}, H_\ch{pos}\vdash [ d \ceq \mespos - x_f ] [ \progtt{cond} \parameters{d} ] [ \Plant_f ] \invariant{f}\parameters{v_f, x_f, d, \globalTime, \historyVar_\ch{pos}}$}
	
			\SideAx$\triangleright\text{\RuleName{trueR}}$
			\RuleNameRight{WL, composition}
			\UnaryInf{$\invariant{f}, \fAssume(\historyVar), H_\ch{pos} \vdash [ d \ceq \mespos - x_f \seq \progtt{cond} \parameters{d} ] [ \Plant_f ] \invariant{f} \parameters{v_f, x_f, d, \globalTime, \historyVar_\ch{pos}}$}
	
			\RuleNameRight{acSendRight}
			\UnaryInf{$\invariant{f} \vdash [ \send{\ch{pos}}{\historyVar}{\mespos} ] \acpair{\fAssume(\historyVar), \true} [ d \ceq \mespos - x_f \seq \progtt{cond} \parameters{d} ] [ \Plant_f ] \invariant{f} \parameters{v_f, x_f, d, \globalTime, \historyVar}$}

			\RuleNameRight{nondetAssign, forallR}
			\UnaryInf{$\invariant{f} \vdash [ \mespos \ceq * ] [ \send{\ch{pos}}{\historyVar}{\mespos} ] \acpair{\fAssume(\historyVar), \true} [ d \ceq \mespos - x_f \seq \progtt{cond} \parameters{d} ] [ \Plant_f ] \invariant{f} \parameters{v_f, x_f, d, \globalTime, \historyVar}$}

			\RuleNameRight{comDual}
			\UnaryInf{$\invariant{f} \vdash [ \receive{\ch{pos}}{\historyVar}{\mespos} ] \acpair{\fAssume(\historyVar), \true} [ d \ceq \mespos - x_f \seq \progtt{cond} \parameters{d} ] [ \Plant_f ] \invariant{f} \parameters{v_f, x_f, d, \globalTime, \historyVar}$}

			\RuleNameRight{acNoCom, acWeak}
			\UnaryInf{$\invariant{f} \vdash [ \receive{\ch{pos}}{\historyVar}{\mespos} ] \acpair{\fAssume(\historyVar), \true} [ d \ceq \mespos - x_f \seq \progtt{cond} \parameters{d} ] \acpair{\fAssume(\historyVar), \true} [ \Plant_f ] \invariant{f} \parameters{v_f, x_f, d, \globalTime, \historyVar}$}
	
			\RuleNameRight{acComposition}
			\UnaryInf{$\invariant{f} \vdash [ \ctrlDistance(\historyVar) ] \acpair{\fAssume(\historyVar), \true} [ \Plant_f ] \invariant{f}$}
		\end{prooftree}
	\end{small}
	\vspace*{-1em}
	\caption{
		Continues the proof from \rref{fig:follower_in_dist_safe}.
		The program $\progtt{cond}$ abbreviates 
		$\ifstat{d\le\maxdist}{%
			\{ v_f \ceq * \seq \test{\orange{0}{v_f}{\safevelo{d}}} \}
		}$.
		Formula $H_\ch{pos}$ is short for $\historyVar_\ch{pos} = \historyVar \cdot \comItem{\ch{pos}, m, \globalTime}$.
		Further, $\Gamma_\ch{mes}$ denotes the formula list $\invariant{f}, H_\ch{pos}, d_0 = \mespos - x_f, d_0 \le \maxdist$.
	}
	\label{fig:follower_mes_dist_safe}
\end{figure}

\begin{figure}[ht!]
	\begin{small}
		\begin{prooftree}[shape=justified, JustifiedScoresWidth=.85\textwidth]
			\Axiom{$*$}
	
			\SideAx$\triangleright\text{\RuleName{PA}}$
			\RuleNameRight{real}
			\UnaryInf{$\begin{aligned}
				& \invariant{f}, \mespos - x_f > \maxdist, t \ge 0 \vdash \range{0}{v_f}{\safevelo{d_0}} \wedge 
				v_f \le \maxvelo \\
				&\qquad \wedge
				x_f + t \cdot v_f + (\epsilon - (\globalTime + t - \globalTime)) \safevelo{d_0} < \mespos
			\end{aligned}$}

			\SideAx$\triangleright\text{\RuleName{PA}}$
			\RuleNameRight{projIn, timeAccessBase}			\UnaryInf{$\begin{aligned}
				& \invariant{f}, \mespos - x_f > \maxdist, t \ge 0 \vdash \range{0}{v_f}{\safevelo{d_0}} \wedge 
				v_f \le \maxvelo \\
				&\qquad   \wedge
				x_f + t \cdot v_f + (\epsilon - \timespan{\globalTime + t}{\historyVar \cdot \comItem{\ch{pos}, m, \globalTime}}) \safevelo{d_0} < \mespos
			\end{aligned}$}

			\RuleNameRight{projIn, valAccessBase}
			\UnaryInf{$\begin{aligned}
				& \invariant{f}, \mespos - x_f > \maxdist, t \ge 0  \vdash \invariant{f} \parameters{v_f, x_f + t \cdot v_f, d_0, \globalTime + t, \historyVar \cdot \comItem{\ch{pos}, \mespos, \globalTime}}
			\end{aligned}$}

			\RuleNameRight{subsR}
			\UnaryInf{$\invariant{f}, H_\ch{pos}, \mespos - x_f > \maxdist, t \ge 0 \vdash \invariant{f} \parameters{v_f, x_f + t \cdot v_f, d_0, \globalTime + t, \historyVar_\ch{pos}}$}

			\RuleNameRight{forallR}
			\UnaryInf{$\invariant{f}, H_\ch{pos}, \mespos - x_f > \maxdist \vdash \fa{t{\ge}0} \invariant{f} \parameters{v_f, x_f + t \cdot v_f, \mespos - x_f, \globalTime + t, \historyVar_\ch{pos}}$}

			\RuleNameRight{solution, assign}
			\UnaryInf{$\invariant{f}, H_\ch{pos}, \mespos - x_f > \maxdist \vdash [ \evolution*{\globalTime' {=} 1, x_f' {=} v_f}{non} ] \invariant{f} \parameters{v_f, x_f, \mespos - x_f, \globalTime, \historyVar_\ch{pos}}$}

			\RuleNameRight{gtime}
			\UnaryInf{$\invariant{f}, H_\ch{pos}, \mespos - x_f > \maxdist \vdash [ \Plant_f ] \invariant{f} \parameters{v_f, x_f, \mespos - x_f, \globalTime, \historyVar_\ch{pos}}$}

			\RuleNameRight{subsL, subsR}
			\UnaryInf{$\invariant{f}, H_\ch{pos}, d_0 = \mespos - x_f, d_0 > \maxdist \vdash [ \Plant_f ] \invariant{f} \parameters{v_f, x_f, d_0, \globalTime, \historyVar_\ch{pos}}$}
		\end{prooftree}
	\end{small}
	\vspace*{-1em}
	\caption{
		Continues the proof from \rref{fig:follower_mes_dist_safe}
	}
	\label{fig:follower_mes_dist_not_safe}
\end{figure}

\end{document}